\newtheorem{theorem}{Theorem}
\newtheorem{lemma}{Lemma}
\newtheorem{definition}{Definition}
\newtheorem{remark}{Remark}
\newtheorem{corollary}{Corollary}
\newcommand{\dfn}{\triangleq}
\newcommand{\bs}[1]{\boldsymbol{#1}}
\newcommand{\mybf}[1]{{\bf #1}}
\newcommand{\indfunc}[1]{\mathds{1}\left(#1\right)}
\def\BSCprob{\varepsilon}
\newcommand{\targetProb}{\delta}
\newcommand{\nrepetitions}{\rho}
\def\Expt{\mathds{E}}
\def\tind{i}
\def\protLength{n}
\def\binarySpace{\mathbb{F}_2}
\def\GFq{\mathbb{F}_q}
\def\GFql{\mathbb{F}_{q_\levelind}}
\newcommand{\transFunc}{\pi}
\newcommand{\protTrans}{\tau}
\newcommand{\channelTrans}{X}
\newcommand{\channelReceive}{Y}
\newcommand{\nlevels}{L}
\newcommand{\levelind}{l}
\newcommand{\levelblock}{k}
\newcommand{\cursor}{j}
\newcommand{\Pfailure}{{P_e}}
\newcommand{\repetitioncoeff}{a}
\newcommand{\repetitioncoeffH}{\tilde{a}}
\newcommand{\ntrans}{T}
\newcommand{\sigmalength}{N}
\newcommand{\Cshannon}{C_{\mathrm{Sh}}}
\newcommand{\Cinter}{C_{\mathrm{I}}}
\newcommand{\simulatingprotocol}{\Sigma}
\newcommand{\Pe}{P_e}
\newcommand{\rewindbit}{b}
\newcommand{\rewindwindow}[1]{w\left[#1\right]} 
\newcommand{\Pbl}{P_{\rewindbit_{\levelind}}}
\newcommand{\Pbone}{P_{\rewindbit_{1}}}
\newcommand{\Pbonebound}{\overline{P}_{\rewindbit_{1}}}
\newcommand{\Pblbound}{\overline{P}_{\rewindbit_{\levelind}}}
\newcommand{\batdistance}{\beta}
\newcommand{\batdistanceBMS}{\bs{\beta}}
\newcommand{\averewind}{{\overline{\rewindbit_{\levelind}}}}
\newcommand{\detectederror}[1]{\boldsymbol{#1}}
\newcommand{\zeroedbits}[1]{\textcolor{blue}{#1}}
\newcommand{\undcodedSim}{\mybf{X}}
\newcommand{\syndromevec}{\mybf{s}}
\newcommand{\noisevec}{\mybf{Z}}
\newcommand{\Pmd}{\gamma}
\newcommand{\RBsc}{R_{BSC}}
\newcommand{\erasure}{\mathsf{E}}
\newcommand{\cursormin}{\cursor}
\newcommand{\m}{\mathcal}
\newcommand{\rewindbitOr}{\rewindbit}
\newcommand{\Ber}{\mathop{\mathrm{Bernoulli}}}
\DeclareMathOperator*{\argmax}{\arg\!\max}
\newcommand{\newtext}[1]{#1}
\newcommand{\newtextrevb}[1]{#1}
\begin{document}
\title{A Lower Bound on the {Essential} Interactive Capacity of Binary Memoryless Symmetric Channels
}
\author{Assaf Ben-Yishai, Young-Han Kim, Or Ordentlich and Ofer Shayevitz
	\thanks{ 
	O.~Ordentlich and A.~Ben-Yishai are with the School of Computer Science and Engineering, Hebrew University of Jerusalem, Israel. 
	O.~Shayevitz is with the Department of EE--Systems, Tel Aviv University, Tel Aviv, Israel. 
	Y.-H.~Kim is with the Department of Electrical and Computer Engineering, University of California, San Diego, La Jolla, CA 92093 USA. 
	Most of this work has been performed while A.~Ben-Yishai was 
	with the Department of EE--Systems, Tel Aviv University. 
	Emails: \{assafbster@gmail.com, yhk@ucsd.edu, or.ordentlich@mail.huji.ac.il, ofersha@eng.tau.ac.il\}
	The work of A. Ben-Yishai was supported by an ISF grant no. 1367/14. 
	The work of O. Orderntlich was supported by an ISF grant no. 1791/17.
	The work of O. Shayevitz was supported by an ISF grant no. 1495/18 and an	ERC grant no. 639573.
	This paper was presented in part at ISIT 2019.}}
\maketitle


\begin{abstract}
\newtext{The essential interactive capacity of a discrete memoryless channel is defined in this paper as the maximal rate at which the transcript of any interactive protocol can be reliably simulated over the channel, using a deterministic coding scheme. In contrast to other interactive capacity definitions in the literature, this definition makes no assumptions on the order of speakers (which can be adaptive) and does not allow any use of private / public randomness; hence, the essential interactive capacity is a function of the channel model only. It is shown that the essential interactive capacity of any binary memoryless symmetric (BMS) channel is at least $0.0302$ its Shannon capacity. To that end, we present a simple coding scheme, based on extended-Hamming codes combined with error detection, that achieves the lower bound in the special case of the  binary symmetric channel (BSC). We then adapt the scheme to the entire family of BMS channels, and show that it achieves the same lower bound using extremes of the Bhattacharyya parameter. }
\end{abstract}


\section{Introduction}
In the classical Shannon one-way communication problem, a transmitter (Alice) wishes to send a message reliably to a receiver (Bob) over a memoryless noisy channel. She does so by mapping her message into a sequence of channel inputs (codeword) in a predetermined way, which is corrupted by the channel and then observed by Bob, who tries to recover the original message. The {\em Shannon capacity} of the channel, which is the maximal number of message bits per channel use that Alice can convey to Bob with vanishingly low error probability, quantifies the most efficient way to do so. In the two-way channel setup \cite{shannon1961two}, both parties draw independent messages and wish to exchange them over a two-input two-output memoryless noisy channel, and the Shannon capacity (region) is defined similarly. Unlike the one-way case, both parties can now employ adaptive coding by incorporating their respective observations of the past channel outputs into their transmission processes. However, just as in the one-way setup, the messages they wish to exchange are determined before communication begins. In other words, if Alice and Bob had been connected by a noiseless bit pipe, they could have simply sent their messages without any regard to the message of their counterpart.  

In a different two-way communication setup, generally referred to as {\em interactive communication}, the latter assumption is no longer held true. In this interactive communication setup, Alice and Bob do not necessarily wish to disclose all their local information. What they want to tell each other depends, just like in human conversation, on what the other would tell them. A simple instructive example (taken from \cite{gelles2015coding}) is the following. Suppose that Alice and Bob play chess remotely, by announcing their moves over a communication channel (using, say, $12$ bits per move, which is clearly sufficient). If the moves are conveyed without error, then both parties can keep track of the state of the board, and the game can proceed to its termination. The sequence of moves occurring over the course of this noiseless game is called a {\em transcript}, and it is dictated by the {\em protocol} of the game, which constitutes Alice and Bob's respective strategies determining their moves at any given state of the board.  

Now, assume that Alice and Bob play the game over a noisy two-way channel, yet wish to simulate the transcript as if no noises were present. In other words, they would like to communicate back and forth in a way that ensures, once communication is over, that the transcript of the noiseless game can be reproduced by to both parties with a small error probability. They would also like to achieve this goal as efficiently as possible, i.e., with the least number of channel uses. One direct way to achieve this is by having both parties describe their entire protocol to their counterpart, i.e., each and every move they might take given each and every possible state of the board. This reduces the interactive problem to a non-interactive one, with the protocol becoming a pair of messages to be exchanged. However, this solution is grossly inefficient; the parties now know much more than they really need in order to simply reconstruct the transcript. At the other extreme, Alice and Bob may choose to describe the transcript itself by encoding each move separately on the fly, using a short error correcting code. Unfortunately, this code must have some fixed error probability and hence an undetected error is bound to occur at some unknown point, causing the states of the board held by the two parties to diverge, and rendering the remainder of the game useless. It is important to note that if Alice and Bob had wanted to play sufficiently many games in parallel, then they could have used a long error-correcting code to simultaneously protect the set of all moves taken at each time point, which in principle would have let them operate at the one-way Shannon capacity (which is the best possible). The crux of the matter therefore lies in the fact that the interactive problem is \textit{one-shot}, namely, only a single instance of the game is being played.

In light of the above, it is perhaps surprising that it is nevertheless possible to simulate any one-shot interactive protocol using a number of channel uses that is proportional to the length of the transcript. In other words, a positive rate of simulation is achievable whenever the Shannon capacity is nonzero. This fact was initially proved by Schulman \cite{schulman1992communication}, who was also the first to introduce the notion of interactive communication over noisy channels. However, this rate of reliable simulation has never been quantified; it is only known to be some nonzero fraction of the Shannon capacity. \newtext{Moreover, several subtly different notions of achievability exist in the literature, depending in particular on various assumptions on the structure of the protocol and on the randomness resources (see Section~\ref{section:previous}). In order to circumvent these issues, we define a stringent notion of achievability that depends only on the channel; in particular, our definition does not make any assumptions on the simulated protocol, and does not allow the use of public or private randomness. We show that the maximal achievable rate under this definition, which we call the \textit{essential interactive capacity}, is at least a $0.0302$ fraction of the Shannon capacity for the entire family of binary memoryless symmetric (BMS)  channels, which includes in particular the binary symmetric channel (BSC). }

The rest of the paper is organized as follows. In Section~\ref{section:formulation} we present the problem formulation and a high level description of the techniques. 
In Section~\ref{sec:maincontrib} we present the main contribution. In Section~\ref{section:previous} we put our work in context of existing results in the literature. We provide some necessary preliminaries in Section~\ref{section:preliminaries}, and then state the main results in Section~\ref{section:mainresults}. The coding scheme used in the proof for the binary symmetric channel (BSC) is  presented and analyzed in Sections~\ref{section:SchemeDescription} and \ref{section:schemeanalysis} respectively, and then generalized to binary memoryless symmetric (BMS) channels in Section~\ref{section:bms}. Finally, in Section~\ref{section:derandomize}, we explain how the randomized coding scheme can be modified to be fully deterministic.

\section{Problem Formulation\label{section:formulation}}

\subsection{\newtext{Interactive Communication and the Essential Interactive Capacity}}
A \textit{length-$n$ interactive protocol} is the triplet 
$\bs{\transFunc}\dfn(\bs{\phi}^\text{Alice},\bs{\phi}^\text{Bob},\bs{\psi})$, where
\begin{align}
\bs{\phi}^\text{Alice} &\dfn \left\{{\phi}^\text{Alice}_{\tind}:\{0,1\}^{i-1}\mapsto \{0,1\}\right\}_{\tind=1}^\protLength\\
\bs{\phi}^\text{Bob} &\dfn \left\{{\phi}^\text{Bob}_{\tind}:\{0,1\}^{i-1}\mapsto \{0,1\}\right\}_{\tind=1}^\protLength\\ 
\bs{\psi} &\dfn \left\{{\psi}_{\tind}:\{0,1\}^{i-1}\mapsto \{\text{Alice},\text{Bob}\}\right\}_{\tind=1}^\protLength.
\end{align}
The functions $\bs{\phi}^\text{Alice}$ are known only to Alice, and the functions $\bs{\phi}^\text{Bob}$ are known only to Bob. The \textit{speaker order functions} $\bs{\psi}$ are known to both parties. The \textit{transcript} $\bs{\protTrans}$ associated with the \textit{input} protocol $\bs{\transFunc}$ is sequentially generated by Alice and Bob as follows
\begin{align}\label{eq:transfunc}
\protTrans_{\tind} &=\begin{cases}
{\phi}^\text{Alice}_{\tind}(\bs{\protTrans}^{\tind-1}) & \sigma_\tind=\text{Alice}\\
{\phi}^\text{Bob}_{\tind}(\bs{\protTrans}^{\tind-1}) & \sigma_\tind=\text{Bob}
\end{cases}
\end{align}
where $\sigma_\tind$ is the identity of the speaker at time $\tind$, which is given by:
\begin{align}\label{eq:transfunc2}
\sigma_\tind&={\psi}_{\tind}(\bs{\protTrans}^{\tind-1}).
\end{align}
In the \textit{interactive simulation problem} Alice and Bob would like to \textit{simulate} the 
transcript $\bs{\protTrans}$, by communicating back and forth over a noisy memoryless channel $P_{Y|X}$. 
Specifically, we restrict our discussion to channels with a bi\textit{}nary input alphabet $\mathcal{X} = \{0,1\}$, and a general (possibly continuous) output alphabet $\mathcal{Y}$. Note that while the order of speakers in the input protocol itself might be determined on the fly (by the sequence of functions $\bs{\psi}$), we restrict the simulating protocol to use a predetermined order of speakers, due to the fact that our physical channel model does not allow simultaneous transmissions
(this point is elaborated in Section~\ref{section:previous}).

To achieve their goal, Alice and Bob employ a length-$\sigmalength$ coding scheme $\simulatingprotocol$ that uses the channel $\sigmalength$ times. The coding scheme consists of a disjoint partition $\tilde{A}\subseteq\{1,...,\sigmalength\}$, 
$\tilde{B} =\{1,...,\sigmalength\}\setminus \tilde{A}$,
where $\tilde{A}$ (resp. $\tilde{B}$) is the set of time indices where Alice (resp. Bob) speaks. This disjoint partition can be a function of $\bs{\psi}$, but not of $\bs{\phi}^\text{Alice},\bs{\phi}^\text{Bob}$. At time $j\in \tilde{A}$ (resp. $j\in \tilde{B}$), Alice (resp. Bob) sends some \textit{deterministic} function $X_j$ of $(\bs{\phi}^\text{Alice}, \bs{\psi}$) (resp. $(\bs{\phi}^\text{Bob}, \bs{\psi}$)), and of everything she has received so far from her counterpart. The transmitted $X_j$ is observed by Bob (resp. Alice) through the channel $P_{Y|X}$, whose output is denoted by $Y_j$. Note that we assume that $Y_j - X_j - (X^{j-1}, Y^{j-1})$ forms a Markov chain. The rate of the scheme is $R = \frac{\protLength}{\sigmalength}$ bits per channel use. When communication terminates, Alice and Bob produce their \textit{simulations} of the transcript $\protTrans$, denoted  by $\hat{\bs{\protTrans}}_A(\simulatingprotocol,\bs{\phi}^\text{Alice}, \bs{\psi})\in\{0,1\}^\protLength$ and  $\hat{\bs{\protTrans}}_B(\simulatingprotocol,\bs{\phi}^\text{Bob}, \bs{\psi})\in\{0,1\}^\protLength$ respectively. The error probability attained by the coding scheme is the probability that either of these simulations is incorrect, i.e., 
\begin{align}\label{eq:proterror}
&\Pe(\simulatingprotocol,\bs{\transFunc})\dfn
\\
&\Pr\left(
\hat{\bs{\protTrans}}_A(\simulatingprotocol,\bs{\phi}^\text{Alice}, \bs{\psi})\neq \bs{\protTrans} \;\vee\; \hat{\bs{\protTrans}}_B(\simulatingprotocol,\bs{\phi}^\text{Bob}, \bs{\psi})
\neq \bs{\protTrans}\right).
\end{align}
A rate $R$ is called \textit{achievable} if there exists a sequence $\simulatingprotocol_{\protLength}$ of length-$N_n$ coding schemes that operate on length-$\protLength$ input protocols $\bs{\transFunc}$, where $\frac{\protLength}{\sigmalength_{\protLength}}\geq R$, and attain a vanishing worst-case error probability, i.e., 
\begin{align}
\lim_{\protLength\to \infty}\max_{\text{protocols } \bs{\transFunc} \text{ of length } \protLength} P_e(\simulatingprotocol_{\protLength},\bs{\transFunc})  = 0.
\end{align}
Accordingly, we define the \textit{essential interactive capacity}  $\Cinter(P_{Y|X})$ as the supremum of all achievable rates for the channel $P_{Y|X}$. \newtextrevb{This definition is more conservative than all other interactive capacity definitions appearing in the literature, as further discussed in Section~\ref{section:previous}. In particular, note that our capacity definition makes worst case assumptions on the input protocol, and is hence a function of the channel model only. We also note in passing that our assumptions on channel access model are conservative and not worst case, as we permit any predetermined scheduling of speakers (more on that below). This approach makes sense practically, since there seems to be no fundamental reason to limit Alice and Bob in terms of which coding scheme they can use. Moreover, taking a worst case approach in terms of channel access can lead to trivialities, since there exist pessimistic access schedules (e.g., allocating only a single channel use for Alice) that would render the capacity zero.} 


For simplicity of exposition, we restrict our discussion from this point on to binary-input channels. Since at least $\protLength$ bits need to be exchanged in order to reliably simulate a general length-$\protLength$ input protocol, the essential interactive capacity for such channels must satisfy  $\Cinter(P_{Y|X})\leq 1$. In the special case of a noiseless channel, i.e., where the output deterministically reveals the input bit, and assuming that the order of speakers is predetermined (namely $\bs{\psi}$ contains only constant functions), this upper bound can be trivially achieved; Alice and Bob can simply evaluate and send $\protTrans_{\tind}$ sequentially according to \eqref{eq:transfunc} and \eqref{eq:transfunc2}. Note however, that if the order of speakers is general, then this is not a valid solution, since we required the order of speakers in the coding scheme to be fixed in advance. Nevertheless, any general length-$\protLength$ input protocol can be sequentially simulated using the channel $2\protLength$ times with alternating order of speakers, where each party sends a dummy bit whenever it is not their time to speak. Conversely, a factor two blow-up in the input protocol length in order to account for a non pre-determined order of speakers is also necessary. To see this, consider an example of an input protocol where Alice's first bit determines the identity of the speaker for the rest of time; in order to simulate this protocol using a predetermined order of speakers, it is easy to see that at least $n-1$ channel uses must be allocated to each party in advance. We conclude that under our restrictive capacity definition, the essential interactive capacity of a noiseless (binary-input) channel is exactly $\frac{1}{2}$. \newtextrevb{It is instructive to note that for a noiseless channel, one could have permitted the order of speakers to be determined on-the-fly, avoiding the need to pre-allocate the channel and eliminating the factor $1/2$ penalty. However, the truly noiseless channel is a singular case, since for any arbitrarily small channel error probability, using an adaptive order of speakers would yield channel access collisions, which are not supported in our channel model. We further elaborated on this point in Section~\ref{section:previous}}.

When the channel is noisy, a tighter trivial upper bound holds:
\begin{align}\label{eq:shannon_bound}
\Cinter(P_{Y|X})\leq\frac{1}{2}\Cshannon(P_{Y|X}), 
\end{align}
where $\Cshannon(P_{Y|X})$ is the Shannon capacity of the channel. To see this, consider the same example given above, and note that each party must have sufficient time to reliably send $n-1$ bits over the noisy channel. Hence, the problem reduces to a pair of one-way communication problems, in which the Shannon capacity is the fundamental limit. We remark that it is reasonable to expect the bound~\eqref{eq:shannon_bound} to be loose, since general input protocols cannot be trivially reduced to one-way communication as the parties cannot generate their part of the transcript without any interaction. However, the tightness of the bound remains a wide open question. We note in passing that if we had considered simulating only protocols with a predetermined order of speakers, the corresponding upper bound would have been $\Cinter(P_{Y|X})\leq\Cshannon(P_{Y|X})$.

\begin{remark}\newtextrevb{[The notion of determinism in interactive coding schemes]
Let us briefly discuss the difference between deterministic and randomized coding schemes for interactive communication. A deterministic coding scheme is one where the transmission functions used by Alice and Bob to generate their next channel inputs are fixed and given in advance; in other words, the channel inputs generated by both parties are solely determined by the input protocol and the channel outputs. A randomized coding scheme, on the other hand, is allowed to use random bits from an exogenous source; Namely, Alice / Bob pick a random function to apply to their data (which includes all their past observations) each time, and this function can be different even if the data it is applied to is the same. 
}

\newtextrevb{
We note that in principle, when working over stochastic memoryless channels, any randomized scheme can be converted into a deterministic one by extracting the needed random bits from the noisy channel outputs (e.g., using \cite{elias1972efficient,vonNeauman1951}). However, this procedure incurs a loss in rate due to the overhead of randomness extraction and possibly communication of randomness. While semantically, such a scheme might appear to be randomized, we note that it is in fact deterministic, since all the transmission functions used by the parties (including the ones used for randomness extraction) are fixed in advance. In a related context, see for example~\cite{leighton1986estimating}, where the authors construct an optimal randomized finite-state machine to estimate the bias of a coin, and then derandomize it by extracting the necessary random bits from the observations themselves, with a modest penalty in performance. We further observe that one could potentially define a more stringent notion of deterministic coding schemes, where the parties' inputs are not allowed to depend on the random channel outputs. However, while this definition would disallow any randomness extraction, it would also remove the interactive component from the problem. 
}
\end{remark}

\subsection{\newtext{Channel Models}}
The first noisy channel model we consider is the memoryless binary symmetric channel with crossover probability $0\leq\BSCprob\leq\frac{1}{2}$, BSC($\BSCprob$). The input to output relation of the BSC($\BSCprob$) is given by
\begin{align}
Y=X\oplus Z
\end{align}
where $X,Y,Z\in\binarySpace$, $\oplus$ denotes addition over $\binarySpace$. $Z$ is statistically independent of $X$ with $\Pr(Z=1)=\BSCprob$. We denote its Shannon capacity by
\begin{align}
\Cshannon(\BSCprob)\dfn 1-h(\BSCprob),
\end{align}
where $h(\BSCprob)\dfn -\BSCprob\log\BSCprob-(1-\BSCprob)\log(1-\BSCprob)$ is the binary entropy function, and $\log(x)\dfn\log_2(x)$. We {also} use $\Cinter(\BSCprob)$ to denote the essential interactive capacity of the BSC($\BSCprob$).

A richer channel model which is commonly used in the coding literature is the binary memoryless symmetric (BMS) channel \cite{gallagerPhD, GallagerIT, richardson2008modern,arikan08, i2012extremes}. While several equivalent definitions exist, the following definition of a BMS channel as a collection of BSC with various crossover probabilities \cite{pedarsani2011construction}, is most convenient for the derivations in this paper:
\begin{definition}\label{def:BMS}[BMS channels]
	A memoryless channel with binary input $X$ output $Y$ and a conditional distributions $P_{Y|X}$ is called \emph{binary memoryless symmetric} channel (BMS($P_{Y|X}$)) if there exists a sufficient statistic of $Y$ for $X$: $g(Y)=(X\oplus Z_T,T)$ , where $(T,Z_T)$ are statistically independent of $X$, $Z_T$ is a binary random variable with $\Pr(Z_T=1|T=t)=t$, and $0\leq T\leq\frac{1}{2}$ with probability one.
\end{definition}
Consequently, the Shannon capacity of BMS($P_{Y|X}$) channel is 
\begin{align}
\Cshannon(P_{Y|X})=1-\Expt h(T). 
\end{align}
\newtext{
The simplest example for a BMS channel is the BSC($\BSCprob$) for which $T=\BSCprob$ with probability one. 
The binary erasure channel with erasure probability $\epsilon$, BEC($\epsilon$), can be cast as a BMS channel taking  $T=\frac{1}{2}$ with probability $\epsilon$ and $T=0$ with probability $1-\epsilon$. It is in place to note, however, that in an actual BEC, a Bernoulli($1/2$) bit is not produced when $T=1/2$ (this subtle point is further discussed in Subsection~\ref{subsection:ties}).
The binary additive white Gaussian noise (BiAWGN) channel, $Y=X+Z$ where $X\in\{-1,+1\}$ and $Z\sim\m{N}(0,\sigma^2)$ is statistically independent of $X$, is also a BMS where $T$ is a continuous random variable on $[0,1/2]$ (see \cite[Chapter~4]{richardson2008modern}).
}

\section{\newtext{Main Contribution} \label{sec:maincontrib}}
\newtext{
In this paper, we derive a lower bound on the essential interactive capacity of any BMS channel, which depends only on its Shannon capacity. In particular, we show that the essential interactive capacity is always at least $0.0302$ of the Shannon capacity, uniformly for all BMS channels. Indeed, since $\Cinter(P_{Y|X}) \leq \frac{1}{2}\Cshannon(P_{Y|X})$ always holds (and is tight for noiseless BMS), then using the Shannon capacity as a yardstick is intuitively appealing, and our lower bound can be interpreted as saying that the ``cost of interactiveness'' is not too large. 
}

\newtext{
\begin{theorem}\label{theorem:interactiveboundBMS} 
	For any BMS($P_{Y|X}$) channel with positive Shannon capacity $\Cshannon(P_{Y|X})$ and essential interactive capacity $\Cinter(P_{Y|X})$
	\begin{align}
	\frac{\Cinter(P_{Y|X})}{\Cshannon(P_{Y|X})}\geq 0.0302.
	\end{align}
\end{theorem}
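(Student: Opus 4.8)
\emph{Proof plan.} The plan has three ingredients --- a rewind-if-error coding scheme for the BSC, an extremal argument on the Bhattacharyya parameter that lifts the bound to all BMS channels, and a derandomization --- the first being the technical core. \textbf{The BSC scheme.} Fix $\BSCprob$ and, after the standard factor-two reduction to a predetermined alternating speaker order, an arbitrary length-$\protLength$ protocol $\bs\transFunc$. I would have Alice and Bob simulate $\bs\transFunc$ block by block: the party scheduled to speak evaluates the next block of transcript symbols from its current simulation $\hat{\bs\protTrans}$, appends a short linear hash of its already-agreed prefix (random coefficients $\hashcoeff$ over a field $\GFql$ whose size grows slowly with a ``level'' index), encodes the pair with a shortened extended-Hamming code, and transmits it; the receiver performs bounded-distance decoding, advances one block if the recovered hash matches its own prefix, and otherwise signals a rewind of $\rewindlength$ blocks. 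Since the extended-Hamming code has minimum distance $4$, a block is mis-decoded only with a probability controlled by the Bhattacharyya parameter $\batdistance=2\sqrt{\BSCprob(1-\BSCprob)}$, while the randomness of $\hashcoeff$ makes an \emph{undetected} mis-decoding that starts corrupting the simulation occur with probability only $\approx 2^{-\hashbits}$ per block. Letting $\rewindlength$ and $\hashbits$ grow slowly with the block index, a renewal/potential-function argument (the content of Section~\ref{section:schemeanalysis}) then gives $\Pe(\simulatingprotocol,\bs\transFunc)\to 0$ \emph{uniformly} over all length-$\protLength$ protocols.

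\textbf{Rate optimization.} The rate $R=\protLength/\sigmalength$ loses essentially three factors relative to $1$: the redundancy of the extended-Hamming block (which forces a bounded block length, since that code corrects only one error over BSC($\BSCprob$)), the vanishing hash overhead, and a geometric-type ``rewind inflation'' governed by the per-block detected-error probability, which is of order $\BSCprob\cdot(\text{block length})$. Treating the block length as a free parameter and optimizing it against $\BSCprob$, I would obtain an explicit $R(\BSCprob)=\CapPenalty{\BSCprob}\cdot\Cshannon(\BSCprob)$ and then verify $\inf_{0<\BSCprob<1/2}\CapPenalty{\BSCprob}\ge 0.0302$. This is the hard part: the scheme parameters must be chosen so that the ratio never drops below $0.0302$, the delicate regime being small Shannon capacity ($\BSCprob\to 1/2$), where $\Cshannon(\BSCprob)\to 0$, the mandatory block length grows, and the extended-Hamming error exponent degrades, so that the code redundancy and the rewind inflation have to be balanced very carefully.

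\textbf{From BSC to BMS.} Given any BMS($P_{Y|X}$) with $\Cshannon(P_{Y|X})=C$, pick $\BSCprob$ with $h(\BSCprob)=1-C$ --- the BSC of the same Shannon capacity --- and run the scheme tuned to BSC($\BSCprob$) over the BMS channel, performing all linear-algebraic operations (syndromes, hashes) on the hard-decision part of $g(Y)=(X\oplus Z_T,T)$ and, if helpful, using the soft part $T$ for decoding. Every error event in the BSC analysis was bounded by a quantity that is monotone in the Bhattacharyya parameter (pairwise decoding errors through $\batdistance^{4}$, hash escapes through $2^{-\hashbits}$). Since, among all BMS channels of a given Shannon capacity, the BSC has the \emph{largest} Bhattacharyya parameter --- the extremal fact available from the literature cited as \cite{i2012extremes} --- the BMS channel satisfies $\batdistance_{\mathrm{BMS}}\le 2\sqrt{\BSCprob(1-\BSCprob)}$, so its error probability is no larger than over BSC($\BSCprob$), and the same rate $R(\BSCprob)\ge 0.0302\,C$ is achievable. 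This proves Theorem~\ref{theorem:interactiveboundBMS} for a randomized scheme.

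\textbf{Derandomization.} Finally I would replace the public hash coefficients by fresh private coins drawn by the active speaker in each block --- a standard reduction, since only a few bad events per block depend on them --- transmitted with asymptotically negligible overhead, and then harvest even those coins from the intrinsic randomness of the channel outputs already produced, so the asymptotic rate is unaffected (the details are in Section~\ref{section:derandomize}). The outcome is a fully deterministic scheme attaining the claimed bound over every BMS channel. In summary, the last two steps are a clean extremal comparison and a routine derandomization; the genuine obstacle is the rate optimization of the BSC scheme and the verification that its capacity penalty stays above $0.0302$ for every crossover probability.
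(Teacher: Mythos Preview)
Your high-level three-part plan (BSC scheme, Bhattacharyya extremality, derandomization) matches the paper's, but two of the three steps are executed differently, and in each case the paper's route is what makes the argument go through.

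\textbf{The BSC scheme and the uniform bound.} Your description of the inner scheme has a conceptual slip: in a bit-vs.-bit interactive protocol neither party can ``evaluate the next block of transcript symbols'' and encode it with a Hamming code, because each party's next bit depends on the other's previous bit. The paper therefore simulates the block \emph{uncoded} (parties alternate single raw bits), and the extended-Hamming code enters only afterwards, as an \emph{error-detection} device: Alice sends her syndrome $\syndromevec^A=\undcodedSim^A H^T$, Bob compares to $\syndromevec^B$, and a mismatch triggers a rewind. Higher layers use the randomized polynomial test. More importantly, the paper does \emph{not} attempt what you flag as the hard part --- optimizing the scheme for every $\BSCprob$ and verifying $\inf_{0<\BSCprob<1/2}\CapPenalty{\BSCprob}\ge 0.0302$. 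Instead it analyzes the scheme only for a \emph{single} small target $\targetProb$ (Theorem~\ref{theorem:rewindifrateHamming}), and for any other $\BSCprob$ first converts BSC$(\BSCprob)$ to BSC$(\le\targetProb)$ by $\nrepetitions$-fold repetition. The key observation (Lemma~\ref{lemma:gammalemma}) is that $\nrepetitions(\BSCprob,\targetProb)\,\Cshannon(\BSCprob)\le \log\tfrac{1}{\targetProb}+1$ \emph{uniformly in $\BSCprob$}, because $\Cshannon(\BSCprob)=I(X;Y)\le L(X;Y)=\log\tfrac{1}{\batdistance}$ (the lautum information dominates the mutual information for the BSC). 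Hence $\Cinter(\BSCprob)/\Cshannon(\BSCprob)\ge \Cinter(\targetProb)/(\log\tfrac{1}{\targetProb}+1)$, and the constant $0.0302$ is obtained from a single numerical evaluation at $\targetProb\approx 1.89\times 10^{-4}$, $\levelblock=2^9$. This completely sidesteps the delicate $\BSCprob\to 1/2$ regime you worry about.

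\textbf{From BSC to BMS.} You propose to run the BSC-tuned scheme directly on the BMS channel using hard decisions and argue that every error bound is monotone in the Bhattacharyya parameter. That is not quite true for the paper's scheme: the first-layer mis-detection probability~\eqref{eq:hammingmisdetect} and the union-bound term $\levelblock\BSCprob$ depend on the crossover probability itself, and hard-deciding a BMS channel of capacity $C$ yields a BSC with crossover $\Expt[T]\ge h^{-1}(1-C)$, i.e.\ \emph{worse} than the equal-capacity BSC. The paper instead applies the same repetition trick: Lemma~\ref{lemma:BMSrepetition} shows that $\nrepetitions$-fold repetition over \emph{any} BMS channel of capacity $C$ produces a genuine BSC with crossover $\le\batdistance^\nrepetitions$, where $\batdistance$ is the Bhattacharyya parameter of the equal-capacity BSC (here the extremal result of \cite{i2012extremes} is used). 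After repetition the channel is literally a BSC, so the BSC analysis and Lemma~\ref{lemma:gammalemma} apply verbatim, giving the same $0.0302$. Your derandomization sketch is essentially the paper's.
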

}

\newtext{
Note that Theorem~\ref{theorem:interactiveboundBMS} also applies to the special case of the BSC with any crossover probability. In fact, we first prove Theorem~\ref{theorem:interactiveboundBMS} for the BSC case, and them extend the result to general BMS channels.
}

\newtext{
The first step in the proof is standardly symmetrizing the order of speakers in the input protocol by possibly adding dummy transmissions, such that Alice speaks at odd times, and Bob speaks at even times, namely resulting in a modified protocol where $\bs{\psi} = \{\text{Alice, Bob, Alice, Bob, \ldots}\}$. In the sequel, we refer to this order of speakers as \textit{bit-vs.-bit}.
This reduces the rate by a factor of two at most. We then use a \textit{rewind-if-error} scheme in the spirit of \cite{schulman1992communication,kol2013interactive}, designed for simulating {the transcript of protocols} with bit-vs.-bit order of speakers. As explained in the chess game example, the {transcript} bits of an interactive {protocol} should in general be decoded instantaneously, which implies that error correction codes (that typically use long blocks) cannot be straightforwardly used. Instead, rewind-if-error schemes are based on \textit{uncoded transmission}, followed by error detection and retransmission. Namely, the {transcript} is simulated in blocks, as if no errors are present. 
{Then, an error detection phase takes place, initiating the retransmission of the block whenever errors are detected.}  The scheme presented in Sections~\ref{section:SchemeDescription} and \ref{section:schemeanalysis} of this paper is based on \textit{layered} error detection and retransmission. 
The rate of the proposed scheme is shown to be mostly effected by the efficiency of the error detection in the first layer. Thus we use extended-Hamming codes for error detection at that layer only, and a standard randomized error detection \cite{communicationComplexityBook} at higher layers. 
}

\newtext{
Our scheme is premised on the assumption that the channel is unlikely to introduce any errors within a single block. If this is not the case, we first standardly apply repetition coding in order to reduce the error rate to the desired level; crucially, we show that a sufficient number of repetitions in the BMS case is inversely proportional to the Shannon capacity of the channel. Accounting for this repetition overhead, calculating the rate of the rewind-if-error coding scheme, and judicially tuning its parameters, we show that this scheme yields the lower bound of Theorem~\ref{theorem:interactiveboundBMS}. 
}


\newtext{
Finally, while the scheme delineated above is randomized, we show that it can be converted to a fully deterministic scheme with an asymptotically vanishing rate loss, which makes our bound applicable in the essential interactive capacity setting. To that end, using a careful concentration analysis appearing in Appendix~\ref{appendix:concentration}, we first show that the number of random bits required by our scheme is only $o(n)$. Then, we harvest these bits from the channel via standard techniques, using only $o(n)$ channel uses. This process, which makes our scheme completely deterministic, clearly has a negligible effect on its overall rate. 
}

\newtext{
\section{Connections to the Existing Work \label{section:previous}}
In this section, we put our definition of essential interactive capacity in context of the existing literature. While the classical Shannon capacity of a one-way channel has single agreed-upon definition that depends on the channel model $P_{Y|X}$ only, the same is not true in the interactive setting, where various distinct notions of capacity exist, drastically depending on different possible assumptions. Let us review these assumptions, and point out that our definition is always on the more restrictive side. 
}

{
\begin{itemize}
\item  \textit{Order of speakers}. One can assume that the input protocol, $\bs{\transFunc}$, has either a predetermined order of speakers, or a general (adaptive) one. For a predetermined order, one can further assume that it has some fixed period (e.g.,  bit-vs.-bit). Our capacity definition does not restrict the order of speakers, hence our lower bound applies in any such setting (and for example, would increase by a factor of two if the order of speakers is bit-v.s-bit).
\item  \textit{Randomness resources}. In their coding scheme (simulating protocol), Alice and Bob can be allowed to use some exogenous source of (public or private)  randomness, in which case the scheme is called \textit{randomized}, or are not allowed to use any exogenous randomness, in which case the scheme is called \textit{deterministic}. We emphasize that in the deterministic case, the channel inputs are uniquely determined by the input protocol and the noisy channel output sequences. Our capacity definition makes the more stringent assumption of allowing only deterministic schemes, hence our lower bound applies to all cases. 
\item \textit{Rate definition}: The coding scheme can be either fixed-length or variable-length. In the fixed-length case, the number of allocated channel uses is determined in advance, and the rate is simply the ratio between the protocol length and the number of channel uses. In the variable-length case, the length of the protocol or the number of channel uses is allowed to be random, and the rate is then the ratio between the expected protocol length and the expected number of channel uses (\newtextrevb{though worst case length analysis also appears in the literature, for example \cite{agrawal2016adaptive,dani2018interactive}}). Our capacity definition adopts the more stringent fixed-length setting, hence our lower bound applies to all cases. 

\item \textit{Physical channel model}: There are two distinct assumptions that can be made on the underlying structure of the channel. In one setting \cite{schulman1992communication,schulman1996coding,kol2013interactive}, Alice and Bob are not allowed (at the physical level) to simultaneously access the channel; they must decide in advance who uses the channel at each time point. In another (richer) setting \cite{ghaffari2014optimal,haeupler2014interactive}, Alice and Bob communicate over a general two-way channel~\cite{shannon1961two}, which means that they both input a symbol to the channel at any given time. In the interactive communication literature, a certain two-way channel has received attention. In this model, Alice and Bob each have \textit{three} input symbols $\{0,1, \mathsf{silence}\}$, and binary output symbols. A party that is not silent receives a zero. If one party is silent and the other is not, the silent one sees the input of its counterpart via a BSC. If both are silent, they observe uniform independent noise\footnote{This has in fact been considered in the adversarial setting, where in the case that both parties are silent, it was assumed that they observe undetermined symbols. What we described above is arguably the most natural way to adapt this adversarial assumption to the probabilistic setting.}. Our capacity definition adopts the more basic setting where no simultaneous channel access is allowed; since any two-way channel can be used this way, our lower bounds essentially applies to all cases. 
\item \textit{Input protocol}: In the interactive communication literature, it is commonly assumed that the redundancy of the coding scheme is measured with respect to the communication complexity of a function, and the interactive capacity corresponds to the worst case blow-up over all functions (as further explained below). Alternatively, as suggested in this paper, one can measure the redundancy of the coding scheme with respect to \textit{any protocol} (unrelated to any optimal function computation problem), and then the (essential) interactive capacity corresponds to the worst case blow-up over all protocols. Since our capacity definition normalizes by the length of the input protocol, it is at least in principle stricter than the one using communication complexity, and hence our lower bounds apply in both cases. 
\end{itemize}
}

\newtext{
Let us now review the main relevant literature. The interactive communication problem introduced by Schulman \cite{schulman1992communication, schulman1996coding} is motivated by Yao's communication complexity scenario \cite{yao1979some}. In that latter scenario, the input of a function $f$ is distributed between Alice and Bob, who wish to compute $f$ with negligible error by exchanging (noiseless) bits using some interactive protocol. The length of the shortest protocol achieving this is called the \textit{communication complexity} of $f$, and denoted by $CC(f)$. In Schulman's (random) interactive communication setup, Alice and Bob must achieve their goal by communicating through a pair of independent noisy channels, where the physical model does not allow simultaneous transmissions. For that setup, Schulman showed that one can attain this goal with negligible error, using only a constant blow-up in the length of the communication. 
}

\newtext{
In \cite{kol2013interactive}, Kol and Raz considered the interactive communication problem, with no simultaneous transmissions, over a BSC($\BSCprob$). They denoted the minimal \textit{expected} length of a coding scheme computing $f$ with a negligible error probability, by $CC_{\BSCprob}(f)$. They then defined the corresponding interactive capacity as:
\begin{align}\label{eq:kolrazCi}
\Cinter^{\mathsf{KR}}(\BSCprob) \triangleq \lim_{\protLength\to\infty}\min_{f:CC(f)=n}\frac{\protLength}{CC_{\BSCprob}(f)}. 
\end{align}
with the additional assumption that the order of speakers in the input protocol is predetermined. They proved that 
\begin{align}\label{eq:kolrazrate_upper}
\Cinter^{\mathsf{KR}}(\BSCprob) \leq  1-\Omega\left(\sqrt{h(\BSCprob)}\right). 
\end{align}
in the limit of $\BSCprob\to 0$. They further proved that a rate of $1-O(\sqrt{h(\BSCprob)})$ is achievable under an additional assumption that the order of speakers in the input protocol is has a small period. The assumption on the order of speakers is crucial. Indeed, consider again the example where the function $f$ is either Alice's input or Bob's input as decided by Alice. In this case, the communication complexity with a predetermined order of speakers is double that without this restriction, and hence considering such protocols renders $\Cinter^{\mathsf{KR}}(\BSCprob) \leq \frac{1}{2}$. For further discussion on speaking order impact as well as channel models that allow collisions, see \cite{haeupler2014interactive}. Note that our definition of the BSC essential interactive capacity is stricter than~\eqref{eq:kolrazCi}, at least in principle, both since the latter does not consider adaptive input protocols, and also since we measure our blow-up w.r.t. the length of the entire transcript. For this reason, $\Cinter(\BSCprob) \leq \Cinter^{\mathsf{KR}}(\BSCprob)$, hence our lower bound applies to $\Cinter^{\mathsf{KR}}(\BSCprob)$ as well (and also achieves the asymptotic behavior~\eqref{eq:kolrazCi} when simulating bit-vs.-bit protocols). Our capacity definition further enjoys the property of being decoupled from any source coding problem such as function communication complexity.
}

\newtext{
For a fixed nonzero $\BSCprob$, the coding scheme presented in \cite{schulman1992communication} (which precedes \cite{kol2013interactive}) implies that $\Cinter(\BSCprob) \geq \alpha \cdot \Cshannon(\BSCprob)$ for some universal constant $\alpha$, but the constant has not been computed (and to the best of our knowledge, has not been computed for any scheme hitherto). Both \cite{schulman1992communication} and \cite{kol2013interactive} based their proofs on rewind-if-error coding schemes, i.e., schemes based on a \textit{hierarchical} and \textit{layered} error detection and appropriate retransmissions, which is also the approach we take in this paper. 
}

\newtext{
In~\cite{haeupler2014interactive}, Haeupler considered a different physical channel model where Alice and Bob can access the channel simultaneously and have three input symbols (as essentially described above). In this setup, he showed that a rate of $1-O(\sqrt{\BSCprob})$ is achievable for any alternating input protocol, which is higher than the upper bound~\eqref{eq:kolrazrate_upper}. His results also hold in the more difficult adversrial setting assuming shared randomness, and reduces slightly to $1-O\left(\sqrt{\BSCprob} \log\log\frac{1}{\BSCprob}\right)$ when no randomness is available. 
}

\newtext{
Let us now discuss the issue of randomness resources. The scheme in \cite{schulman1992communication} requires only private randomness, while \cite{kol2013interactive} requires public randomness. It is interesting to note that Schulman's \textit{tree code} scheme \cite{schulman1996coding} is not randomized. However, it is not designed to be rate-efficient, and for example does not achieve the lower bound in~\cite{kol2013interactive}. A non-random coding scheme was recently proposed by Gelles et. al. \cite{gelles2018explicit} based on a concatenation of a derandomized interactive coding scheme and a tree-code. This scheme achieves a rate $1-O(\sqrt{h(\BSCprob)})$
which is also the rate of the rewind-if-error scheme in this paper in the limite of $\BSCprob\to 0$ as stated in Corrolary~\ref{corollary:Obehaviour}.
The rewind-if-error {scheme} presented in this paper is inspired by the scheme in \cite{kol2013interactive}, yet its error detection mechanism is not based on random hashes, but rather on extended-Hamming codes and randomized (yet structured) error detection. The deterministic coding scheme presented in Section~\ref{section:derandomize} is not based on derandomization as in \cite{gelles2018explicit}, but rather on suitably adapting the error detection and using concentration analysis to show that it requires only small number of random bits. These bits are then extracted from the noisy channels in a standard way using a small number of channel uses, which are taken into account in the overall rate calculation. We emphasize that our coding scheme is fully deterministic, namely, the channel inputs generated by Alice and Bob are uniquely determined by the input protocol and the channel noise sequences only. 
}

\newtext{
Other channel models have been addressed in the literature. Much work has been dedicated to the adversarial setting, where the channel is controlled by an adversary with some limited jamming budget, see for example \cite{schulman1996coding,ghaffari2014optimal,haeupler2014interactive,agrawal2016adaptive}. It is important to note that the rewind-if-error approach and the randomness extraction ideas we use, do not apply in adversarial settings. More recently, interactive communication over channels with noiseless feedback has been studied in \cite{gelles2017capacity}.
}

\newtext{
To summarize the discussion above, there are various distinct setups and sets of assumptions one may wish to consider when studying interactive communication, which can have significant effect on the fundamental limits. Our definition of capacity, and its corresponding lower bound, are based on the most restrictive set of assumptions: the order of speakers in the input protocol can be adaptive, but is predetermined in the coding scheme; the coding is fixed-length and the blow-up is computed relative to the length of the input protocol; and no private or public randomness are allowed. Consequently, our capacity lower bounds remain valid for any other set of assumptions.
}

\newtext{
Finally, we note in passing that the current study extends our preliminary results presented in \cite{InterLowerBoundISIT} in the following aspects: 
i) The error detection in the scheme is structured and is not based on random hashes. ii) The rate of the resulting scheme is improved and consequently the lower bound for the ratio between the essential interactive capacity and the Shannon capacity is also improved. iii) The scheme described in this paper deterministic. iv) The results are generalized from the BSC to arbitrary BMS channels.
}

\section{Preliminaries\label{section:preliminaries}}
Let $D(P||Q)\dfn \sum_{x\in\mathcal{X}}P(x)\log\frac{P(x)}{Q(x)}$ denote the Kullback-Leibler Divergence between the distributions $P(\cdot)$ and $Q(\cdot)$.
Let $d(p||q)\dfn p\log\frac{p}{q}+(1-p)\log\frac{1-p}{1-q}$ denote the Kullback-Leibler Divergence between two Bernoulli random variables with probabilities $p$ and $q$. 
In the sequel we use $\indfunc{\cdot}$ to denote the indicator function, which equals one if the condition is satisfied and zero otherwise. 

The following simple results are used throughout the paper:
\begin{lemma}[Repetition coding over BSC] \label{lemma:repetition} Let a bit be sent over  BSC($\BSCprob$) using $\rho$  repetitions and decoded by {a} majority vote (if $\rho$ is even, ties are broken by tossing a fair coin). 
The decoding error probability $\Pe$ can be upper bounded by
	\begin{align}\label{eq:repchernoff}
	\Pe\leq\beta^\rho	=2^{-\rho \cdot d(\frac{1}{2}||\BSCprob)}, 
	\end{align}	
where $\beta\dfn 2\sqrt{\BSCprob(1-\BSCprob)}$ is the Bhattacharyya parameter respective to the BSC($\BSCprob$). The induced channel from the input bit to its decoded value is thus a BSC$(\Pe)$.
\end{lemma}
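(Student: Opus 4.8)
The plan is to reduce the claim to a one-sided Binomial tail bound and then apply a Chernoff bound that admits a closed-form optimum. Let $K$ denote the number of the $\rho$ channel uses on which the transmitted bit is flipped; since the BSC noise bits are i.i.d.\ and independent of the input, $K\sim\mathrm{Binomial}(\rho,\BSCprob)$ regardless of which bit was sent. A majority-vote decoding error occurs exactly when $K>\rho/2$, or, when $\rho$ is even, when $K=\rho/2$ and the fair tie-breaking coin lands the wrong way; hence $\Pe=\Pr(K>\rho/2)+\tfrac12\Pr(K=\rho/2)$, which for either parity of $\rho$ is at most $\Pr(K\ge\rho/2)$.

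Next I would invoke the exponential Markov inequality: for every $s\ge 0$,
\[
\Pr(K\ge \rho/2)\;\le\; e^{-s\rho/2}\,\Expt\!\left[e^{sK}\right]\;=\;\left(e^{-s/2}\bigl(1-\BSCprob+\BSCprob e^{s}\bigr)\right)^{\rho}.
\]
Minimizing the base over $s\ge 0$ gives the stationarity condition $\BSCprob e^{s}=1-\BSCprob$, i.e.\ $e^{s}=(1-\BSCprob)/\BSCprob$, an admissible choice because $s\ge 0$ holds exactly when $\BSCprob\le 1/2$. Substituting back, $1-\BSCprob+\BSCprob e^{s}=2(1-\BSCprob)$ and $e^{-s/2}=\sqrt{\BSCprob/(1-\BSCprob)}$, so the base equals $2\sqrt{\BSCprob(1-\BSCprob)}=\beta$, and therefore $\Pe\le\Pr(K\ge\rho/2)\le\beta^{\rho}$.

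It then remains to identify $\beta$ with $2^{-d(\frac{1}{2}||\BSCprob)}$, a one-line computation: $d(\tfrac12||\BSCprob)=\tfrac12\log\tfrac{1/2}{\BSCprob}+\tfrac12\log\tfrac{1/2}{1-\BSCprob}=-\tfrac12\log\!\bigl(4\BSCprob(1-\BSCprob)\bigr)=-\log\beta$, whence $\beta^{\rho}=2^{-\rho\, d(\frac{1}{2}||\BSCprob)}$. Finally, for the statement about the induced channel, observe that since the noise pattern is independent of the input bit, the probability that the decoded value differs from the transmitted bit does not depend on which bit was sent; equivalently, the conditional output distribution given input $1$ is the bit-reversal of the one given input $0$. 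A binary-input binary-output channel with this property is a BSC, and its crossover probability is exactly the error probability $\Pe$ bounded above.

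I do not expect a genuine obstacle; the only points requiring care are the bookkeeping of the tie-breaking term when $\rho$ is even, so that $\Pr(K>\rho/2)+\tfrac12\Pr(K=\rho/2)\le\Pr(K\ge\rho/2)$ rather than something larger, and checking that the Chernoff optimizer $s=\log\frac{1-\BSCprob}{\BSCprob}$ is nonnegative, which is precisely where the hypothesis $\BSCprob\le 1/2$ enters.
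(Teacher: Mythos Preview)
Your proof is correct and follows essentially the same approach as the paper: a Chernoff bound on the error event, yielding the Bhattacharyya parameter $\beta=2\sqrt{\BSCprob(1-\BSCprob)}$ as the per-repetition base. The paper simply defers to its more general BMS repetition lemma (Lemma~\ref{lemma:BMSrepetition}), where the Chernoff bound is applied with the fixed parameter $s=1/2$ to the log-likelihood sum; specialized to the BSC this is exactly your computation, with your optimized $e^{s}=(1-\BSCprob)/\BSCprob$ corresponding to that choice after the change of variables.
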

The proof is standard (see for example \cite{GallagerIT}) and can be regarded as special case of Lemma~\ref{lemma:BMSrepetition} stated and proved in Section~\ref{section:bms}. Note that the random tie breaking is done in order to simplify the scheme and its analysis. It does, however, assume private randomness at both parties. In Section~\ref{section:derandomize} we show how the random tie breaking can be circumvented.

We now introduce two error detection methods that would be used in the coding scheme. The first one assumes the error are generated by BSC's and is based on error correction codes:
\begin{definition}[Error detection using an extended-Hamming code]\label{def:exhammingdetection}
Let $\undcodedSim^A$ and $\undcodedSim^B$ be binary (row) vectors of length $\levelblock$ held by Alice and Bob respectively. 
Let $H$ be the parity check matrix of an {extended-Hamming code} with parameters $(\levelblock,\levelblock-\log\levelblock-1,4)$. 
Let $NEQ$ be a variable set to one if the parties decide that $\undcodedSim^A \neq \undcodedSim^B$ and set to zero otherwise,  calculated according to the following algorithm:
\begin{enumerate}
	\item Alice calculates her syndrome vector $\syndromevec^A =  \undcodedSim^A H^T$
	\item Bob calculates his syndrome vector $\syndromevec^B=\undcodedSim^B H^T$
	\item Alice sends $\syndromevec^A$ ($1+\log\levelblock$ bits) to Bob
	\item Bob {calculates} $NEQ=\indfunc{\syndromevec^A\neq\syndromevec^B}$
	\item Bob sends $NEQ$ ($1$ bit) to Alice	
\end{enumerate}
The overall number of bits communicated between Alice and Bob is $2+\log\levelblock$. 
\end{definition}

The performance of this scheme over a BSC($\BSCprob$) is given in the following lemma:
\begin{lemma}
Assume that
\begin{align}
\undcodedSim^A = \undcodedSim^B\oplus\noisevec.
\end{align}	
where $\noisevec$ is an i.i.d $\text{Bernoulli}(\BSCprob)$ vector.
The probability of a mis-detected error of the scheme in Definition~\ref{def:exhammingdetection} is given by
\begin{align}\label{eq:hammingmisdetect}
&\Pr\left(NEQ=0, \undcodedSim^A\neq\undcodedSim^B\right)\\
&=\frac{1}{2\levelblock}\left(1+2(\levelblock-1)(1-2\BSCprob)^{\frac{\levelblock}{2}}+(1-2\BSCprob)^\levelblock\right)\\
&\quad-(1-\BSCprob)^\levelblock.
\end{align}
The corresponding probability of a false error detection is
\begin{align}
\Pr\left(NEQ=1, \undcodedSim^A=\undcodedSim^B\right)=0.
\end{align}
\end{lemma}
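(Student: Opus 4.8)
The plan is to analyze the extended-Hamming error detection algorithm directly by tracking when the syndromes $\syndromevec^A$ and $\syndromevec^B$ coincide. Since $\syndromevec^A = \undcodedSim^A H^T$ and $\syndromevec^B = \undcodedSim^B H^T$, we have $\syndromevec^A = \syndromevec^B$ if and only if $\noisevec H^T = \bs{0}$, i.e., the noise vector $\noisevec$ is a codeword of the extended-Hamming code $\mathcal{C}$ with parameters $(\levelblock, \levelblock - \log \levelblock - 1, 4)$. The event $\{NEQ = 0, \undcodedSim^A \neq \undcodedSim^B\}$ is precisely $\{\noisevec \in \mathcal{C} \setminus \{\bs{0}\}\}$ — a nonzero noise vector that happens to lie in the code and hence passes undetected. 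Therefore I would reduce the left-hand side of \eqref{eq:hammingmisdetect} to
\begin{align}
\Pr\left(\noisevec \in \mathcal{C} \setminus\{\bs{0}\}\right) = \sum_{\bs{c} \in \mathcal{C}} \BSCprob^{\HammingWeight{\bs{c}}}(1-\BSCprob)^{\levelblock - \HammingWeight{\bs{c}}} - (1-\BSCprob)^\levelblock,
\end{align}
where I subtract off the all-zeros codeword term $(1-\BSCprob)^\levelblock$. The false-detection claim is immediate: if $\undcodedSim^A = \undcodedSim^B$ then $\noisevec = \bs{0}$, so $\syndromevec^A = \syndromevec^B$ and $NEQ = 0$ deterministically; hence $\Pr(NEQ=1, \undcodedSim^A = \undcodedSim^B) = 0$.

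The heart of the computation is thus evaluating $\sum_{\bs{c}\in\mathcal{C}} \BSCprob^{\HammingWeight{\bs{c}}}(1-\BSCprob)^{\levelblock-\HammingWeight{\bs{c}}}$, which is a weight-enumerator evaluation $\frac{1}{2^\levelblock} W_\mathcal{C}(1-2\BSCprob)$ after the standard substitution, where $W_\mathcal{C}(z) = \sum_{\bs{c}\in\mathcal{C}} z^{\levelblock - 2\HammingWeight{\bs{c}}}\cdot$(normalization); concretely $\sum_{\bs{c}}\BSCprob^{w(\bs{c})}(1-\BSCprob)^{\levelblock-w(\bs{c})}$ can be written using the weight enumerator $A_\mathcal{C}(x,y) = \sum_i A_i x^{\levelblock-i} y^i$ evaluated at $x = 1-\BSCprob$, $y = \BSCprob$. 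So what I really need is the weight distribution $\{A_i\}$ of the extended-Hamming code, or equivalently, a closed form for $A_\mathcal{C}(1-\BSCprob, \BSCprob)$. The key input is the well-known weight enumerator of the (primitive) Hamming code together with the effect of the parity extension. I would invoke the classical fact that the weight enumerator of the extended-Hamming code of length $\levelblock$ is
\begin{align}
A_\mathcal{C}(x,y) = \frac{1}{2\levelblock}\left((x+y)^\levelblock + (\levelblock-1)(x+y)^{\levelblock/2}(x-y)^{\levelblock/2} + (x-y)^{\levelblock/2}(x+y)^{\levelblock/2}\cdot 0 \dots\right),
\end{align}
— more precisely, the standard identity $A_\mathcal{C}(x,y) = \frac{1}{2\levelblock}\bigl((x+y)^\levelblock + (\levelblock - 1)(x^2-y^2)^{\levelblock/2} + (x-y)^\levelblock\bigr)$, which follows from applying the MacWilliams identity to the extended-Hamming code's dual (the first-order Reed--Muller / biorthogonal code, whose weight distribution is trivial: one codeword of weight $0$, one of weight $\levelblock$, and $2\levelblock - 2$ of weight $\levelblock/2$).

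Putting it together, substituting $x = 1-\BSCprob$ and $y = \BSCprob$ gives $x + y = 1$, $x - y = 1 - 2\BSCprob$, and $x^2 - y^2 = (x+y)(x-y) = 1-2\BSCprob$, so
\begin{align}
A_\mathcal{C}(1-\BSCprob,\BSCprob) = \frac{1}{2\levelblock}\left(1 + (\levelblock-1)(1-2\BSCprob)^{\levelblock/2} + (1-2\BSCprob)^\levelblock\right),
\end{align}
and then subtracting $(1-\BSCprob)^\levelblock$ for the all-zeros codeword yields exactly \eqref{eq:hammingmisdetect}.

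I expect the main obstacle to be pinning down the exact weight enumerator of the extended-Hamming code and justifying it cleanly — in particular being careful that the $(\levelblock, \levelblock-\log\levelblock-1, 4)$ code here is the extension of the primitive $[\levelblock-1, \levelblock-1-\log\levelblock, 3]$ Hamming code, so that its dual is the $[\levelblock, \log\levelblock + 1]$ first-order Reed--Muller (augmented Hadamard) code with the simple three-value weight distribution $\{0: 1,\ \levelblock/2: 2\levelblock-2,\ \levelblock: 1\}$, and then applying MacWilliams correctly (including the $1/2^{\dim \mathcal{C}^\perp}$ normalization). Everything after that is bookkeeping: the substitution is painless because $x+y = 1$ collapses most terms. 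A secondary point worth a sentence is noting that this analysis requires $\levelblock$ to be a power of two (so that the extended-Hamming code exists), which is consistent with how $\levelblock$ is chosen in the scheme.
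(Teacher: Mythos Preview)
Your approach is exactly the paper's: reduce $\{NEQ=0,\undcodedSim^A\neq\undcodedSim^B\}$ to $\{\noisevec\in\mathcal{C}\setminus\{\bs 0\}\}$ and evaluate the weight enumerator of the extended-Hamming code via MacWilliams applied to its dual (first-order Reed--Muller). The paper does not spell out the MacWilliams computation but simply cites the standard dual-code calculation, so your write-up is in fact more detailed.

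There is one arithmetic slip you should fix. You correctly record the dual's weight distribution as $\{0:1,\ \levelblock/2:2\levelblock-2,\ \levelblock:1\}$, but when you write the MacWilliams identity you drop a factor of two in the middle term: the correct enumerator is
\[
A_\mathcal{C}(x,y)=\frac{1}{2\levelblock}\Bigl((x+y)^\levelblock+(2\levelblock-2)(x^2-y^2)^{\levelblock/2}+(x-y)^\levelblock\Bigr),
\]
since the middle term comes from the $2\levelblock-2$ codewords of weight $\levelblock/2$ in the dual. After substituting $x=1-\BSCprob$, $y=\BSCprob$ this gives the coefficient $2(\levelblock-1)$ in front of $(1-2\BSCprob)^{\levelblock/2}$, matching \eqref{eq:hammingmisdetect}; your version with $(\levelblock-1)$ would not. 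Once you carry the $2\levelblock-2$ through, everything lines up.
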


\begin{proof}	
First, it is clear that for any $\undcodedSim^A = \undcodedSim^B$ we have $NEQ=\indfunc{\syndromevec^A\neq\syndromevec^B}=0$ with probability one, so 
the probability of false error detection is $\Pr\left(NEQ=1, \undcodedSim^A=\undcodedSim^B\right)=0$. For the probability of error mis-detection, note that $\syndromevec^A\oplus\syndromevec^B =  (\undcodedSim^A\oplus\undcodedSim^B) H^T=\noisevec H^T$. Therefore, the event $NEQ=0$ is identical to the event in which $\syndromevec^A\oplus\syndromevec^B=\noisevec H^T =\boldsymbol{0}^T$, i.e., $\noisevec$ is a codeword in $H$. All in all
\begin{align}
&\Pr\left(NEQ=0, \undcodedSim^A\neq\undcodedSim^B\right)\\
&=\Pr\left(\noisevec H^T =\boldsymbol{0}^T, \noisevec\neq \boldsymbol{0}^T \right)\\
&=\frac{1}{2\levelblock}\left(1+2(\levelblock-1)(1-2\BSCprob)^{\frac{\levelblock}{2}}+(1-2\BSCprob)^\levelblock\right)\\
&\quad -(1-\BSCprob)^\levelblock,\label{eq:misdetectHamming}
\end{align}
where \eqref{eq:misdetectHamming} is standardly calculated using the dual code \cite[p. 52]{klove2012error}.
\end{proof}

The second error detection scheme is a randomized scheme based on \cite[p. 30]{communicationComplexityBook}, which applies for arbitrary vectors. We note that this scheme performs the error detection using hashing, where the hash functions are implemented using polynonmials.
\begin{definition}[Randomized error detection using polynomials]\label{def:equalityPrivate}
Let $\undcodedSim^A$ and $\undcodedSim^B$ be arbitrary binary vectors of length $\ell$ held by Alice and Bob respectively. Let $\Pmd\in\mathbb{N}$, where $\Pmd>1$. Let $q$ be a prime number such that $\Pmd \ell \leq q\leq 2\Pmd \ell$ (by Bertrand's postulate such a number must exist). 
Let $NEQ^{Poly}$ be a variable set to one if the parties decides that $\undcodedSim^A \neq \undcodedSim^B$ and set to zero otherwise, calculated according to the following algorithm:
\begin{enumerate}
	\item Alice uniformly draws $U\in \GFq$, $U\neq 0$ .
	\item Alice calculates $A(U,\undcodedSim^A)=\sum_{i=1}^\ell X_i^A U^{i-1} (\text{mod } q)$
	\item Alice sends Bob $U$ and $A(U,\undcodedSim^A)$
	\item Bob calculates $B(U,\undcodedSim^A)=\sum_{i=1}^\ell X_i^B U^{i-1} (\text{mod } q)$
	\item Bob {calculates} $NEQ^{Poly}=\indfunc{A(U,\undcodedSim^A)-B(U,\undcodedSim^A)\neq 0}$ 
	\item Bob sends $NEQ^{Poly}$ to Alice	
\end{enumerate} 
All in all, Alice needs to send at most $\lceil\log 2\Pmd\ell\rceil$ bits for the representation of 
$U$, and at most $\lceil\log 2\Pmd\ell\rceil$ bits for the representation of $A({U},\undcodedSim^A)$. Bob sends Alice one bit.
\end{definition}
\begin{lemma}\label{lemma:poly}
The error detection scheme of Definition~\ref{def:equalityPrivate} obtains an error mis-detection probability of 
\begin{align}
\Pr\left(NEQ^{Poly}=0\mid \undcodedSim^A\neq\undcodedSim^B\right)\leq \frac{1}{\Pmd},
\end{align}
and a false error detection probability of 
\begin{align}
\Pr\left(NEQ^{Poly}=1\mid \undcodedSim^A=\undcodedSim^B\right)=0.
\end{align}
\end{lemma}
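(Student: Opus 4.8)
The plan is to use the standard polynomial-fingerprinting argument of Freivalds/Lipton. I would first dispose of the false-error-detection claim: when $\undcodedSim^A=\undcodedSim^B$ we have $X_i^A=X_i^B$ for every $i$, so $A(U,\undcodedSim^A)$ and $B(U,\undcodedSim^B)$ are equal for every realization of $U$; hence $NEQ^{Poly}=\indfunc{0\neq 0}=0$ with probability one, and $\Pr\left(NEQ^{Poly}=1\mid \undcodedSim^A=\undcodedSim^B\right)=0$.

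For the mis-detection bound I would introduce the difference polynomial $P(u)\dfn\sum_{i=1}^{\ell}(X_i^A-X_i^B)\,u^{i-1}$, regarded as a univariate polynomial over $\GFq$, and note that the event $\{NEQ^{Poly}=0\}$ is precisely the event $\{P(U)=0 \text{ in } \GFq\}$. Conditioning on $\undcodedSim^A\neq\undcodedSim^B$, there is at least one index $i$ with $X_i^A\neq X_i^B$; since $X_i^A,X_i^B\in\{0,1\}$, the corresponding coefficient $X_i^A-X_i^B$ equals $\pm 1$, which is nonzero in $\GFq$. Thus $P$ is a nonzero polynomial of degree at most $\ell-1$.

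Then I would invoke the elementary fact that a nonzero univariate polynomial of degree at most $\ell-1$ over a field has at most $\ell-1$ roots. Since $U$ is drawn uniformly from the $q$ elements of $\GFq$ independently of $\undcodedSim^A,\undcodedSim^B$,
\begin{align}
\Pr\left(NEQ^{Poly}=0\mid \undcodedSim^A\neq\undcodedSim^B\right)=\Pr\left(P(U)=0\right)\leq\frac{\ell-1}{q}<\frac{\ell}{q}\leq\frac{\ell}{\Pmd\ell}=\frac{1}{\Pmd},
\end{align}
where the last inequality uses the defining property $\Pmd\ell\leq q$ of the prime $q$ guaranteed by Bertrand's postulate. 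This completes the argument.

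There is essentially no real obstacle here; the only points requiring a moment's care are that the coefficient difference of two distinct bits is genuinely nonzero in $\GFq$ regardless of the characteristic of the field (so that $P$ is not the zero polynomial), and that the possibility $U=0$ being a root is harmless since it is already accounted for by the degree bound $\ell-1$. No further subtleties arise, and the bound is uniform over all pairs $\undcodedSim^A\neq\undcodedSim^B$.
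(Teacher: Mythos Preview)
Your proposal is correct and follows essentially the same polynomial-fingerprinting argument as the paper: both reduce the mis-detection event to $U$ being a root of the difference polynomial $\sum_{i=1}^\ell (X_i^A-X_i^B)U^{i-1}$ over $\GFq$, bound the number of roots by the degree $\ell-1$, and conclude via $q\geq\Pmd\ell$. Your version is in fact slightly more careful than the paper's, in that you explicitly verify that a nonzero bit difference yields a nonzero coefficient in $\GFq$ (so $P$ is genuinely not the zero polynomial), a point the paper takes for granted.
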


\begin{proof}
Note that $A(U,\undcodedSim^A)$ and $B(U,\undcodedSim^B)$ are the evaluation at point $U$ of two polynomials over $\GFq$ whose (binary) coefficients are the elements of $\undcodedSim^A$ and $\undcodedSim^B$ respectively. Clearly, if $\undcodedSim^A=\undcodedSim^B$, then $NEQ=0$ for every value of $U$ hence $\Pr\left(NEQ=1\mid \undcodedSim^A=\undcodedSim^B\right)=0$.
On the other hand, if $\undcodedSim^A\neq\undcodedSim^B$, $A(U,\undcodedSim^A)-B(U,\undcodedSim^A)=0$ implies that $U$ is a root of the polynomial 
\begin{align}\label{eq:checkpoly}
\sum_{i=1}^\ell (X_i^A-X_i^B) U^{i-1} (\text{mod } q).
\end{align}
Since the degree of the polynomial is at most $\ell$, there are at most $\ell-1$ such roots, so 
\begin{align}
\Pr\left(NEQ=0
\mid \undcodedSim^A\neq\undcodedSim^B\right)\leq \frac{\ell-1}{q}< 
\frac{\ell}{\Pmd\ell}=\frac{1}{\Pmd}.
\end{align}

\end{proof}


\newtext{\section{The Lower Bound in the BSC Case\label{section:mainresults}}
In the following sections we prove the lower bound on the essential interactive capacity to the BSC case, which is then extended to BMS in Section~\ref{section:bms}. The BSC version of the bound is stated in the following theorem:
\begin{theorem}\label{theorem:interactivebound} 
For any BSC with crossover probability $0\leq\BSCprob	\leq {1}/{2}$, Shannon capacity $\Cshannon(\BSCprob)$ the and essential interactive capacity $\Cinter(\BSCprob)$ the following bound holds:
	\begin{align}
	\frac{\Cinter(\BSCprob)}{\Cshannon(\BSCprob)}\geq 0.0302.
	\end{align}
\end{theorem}
This bound is derived by using a rewind-if-error scheme 
for a small $\BSCprob$, whose rate appears in Theorem~\ref{theorem:rewindifrateHamming}, and then leveraging it to a general BSC using repetition coding via Lemma~\ref{lemma:gammalemma}.
}
\begin{theorem}\label{theorem:rewindifrateHamming}
The transcript of any protocol with n bit-vs.-bit order of speakers (i.e. Alice sends a bit on odd times and Bob sends a bit on even times), can be reliably simulated over BSC($\BSCprob$) (i.e. with a vanishing error as $\protLength\to\infty$ for a fixed $\BSCprob$)
at the rate specified in \eqref{eq:hammingrate},
\begin{figure*}[!hb]
\rule{\textwidth}{1pt}
\begin{align}\label{eq:hammingrate}
\RBsc(\BSCprob,\levelblock)\dfn
\frac
{
	1- \levelblock\BSCprob-(3+\log\levelblock)\batdistance^{5}
	-\frac{\levelblock^2}{\levelblock-1}
	\left(
	\Pfailure_{1}
	+3\batdistance^{7}\levelblock\log\levelblock
	\frac{2-\batdistance^2\levelblock}{(1-\batdistance^2\levelblock)^2}
	\right)
	-
	3\batdistance^{7}\log\levelblock
	\frac{2-\batdistance^2}{(1-\batdistance^2)^2}
}
{
	1+\frac{5(3+\log\levelblock)}{\levelblock}
	+3\log\levelblock\left[
	\frac{3(2\levelblock-1)}{k(\levelblock-1)^2}
	+\frac{4\levelblock}{(\levelblock-1)^3}
	+\frac{4\levelblock-2}{\levelblock(\levelblock-1)^2}
	\right]
}
\end{align}
\end{figure*}
where
\begin{align}
\Pfailure_{1}&\leq 
\frac{1}{2\levelblock}\left(1+2(\levelblock-1)(1-2\BSCprob)^{\frac{\levelblock}{2}}+(1-2\BSCprob)^\levelblock\right)\\
&-(1-\BSCprob)^\levelblock+(3+\log\levelblock)\batdistance^{5}.
\label{ea:Pe1}
\end{align} 
Let $0<\BSCprob<\frac{1}{16}$ and $\batdistance\dfn2\sqrt{\BSCprob(1-\BSCprob)}$.
$\levelblock$ can be any integer power of two satisfying $\levelblock\leq \frac{1}{8\BSCprob}$.
\end{theorem}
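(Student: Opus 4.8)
We sketch the argument; the full construction and analysis are carried out in Sections~\ref{section:SchemeDescription} and~\ref{section:schemeanalysis}. The plan is to build an explicit layered \emph{rewind-if-error} scheme for the bit-vs.-bit setting and then bound, separately, its rate and its probability of an incorrect simulation.

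\textbf{Construction.} I would produce the $\protLength$ transcript bits in consecutive layer-$1$ blocks of $\levelblock$ bits by \emph{uncoded} transmission: at each step the scheduled speaker evaluates the next transcript bit from the history it currently believes to be correct and sends it in one channel use (Alice on odd steps, Bob on even steps). After each block the parties run the extended-Hamming detector of Definition~\ref{def:exhammingdetection} on their length-$\levelblock$ copies, transmitting its $3+\log\levelblock$ control bits with $\repetitioncoeffH$-fold repetition so that, by Lemma~\ref{lemma:repetition}, each is flipped with probability at most $\batdistance^{\repetitioncoeffH}$; if $NEQ=1$ or any control bit is seen as flipped, the block is discarded and both parties rewind to its start. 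Above this, the layers form a hierarchy in which chunk sizes grow geometrically from $\levelblock$ up to $\Theta(\protLength)$ (hence $\Theta(\log\protLength)$ layers): after each layer-$\levelind$ chunk the parties run the randomized polynomial equality test of Definition~\ref{def:equalityPrivate} on the whole portion produced so far at that layer, with its parameter $\Pmd$ taken as a fixed power of $\batdistance^{-1}$ so that Lemma~\ref{lemma:poly} gives a per-test misdetection probability which is a small fixed power of $\batdistance$, the $O(\log\levelblock)$ control bits being $\repetitioncoeff$-fold repeated. A detected error at layer $\levelind$ rewinds the whole layer-$\levelind$ chunk.

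\textbf{Rate.} The denominator of~\eqref{eq:hammingrate} is the expected number of channel uses spent per transcript bit: one use for the bit itself, $\repetitioncoeffH(3+\log\levelblock)/\levelblock$ for the amortized layer-$1$ Hamming detection, and the bracketed term for the amortized higher-layer polynomial tests, whose geometrically growing chunk sizes give the $1/(\levelblock-1)$-type series that I would sum in closed form. The numerator is $1$ minus the expected fraction of wasted uses: $\levelblock\BSCprob$ bounds the expected number of raw bit flips per layer-$1$ block, each forcing a rewind of the whole block; $(3+\log\levelblock)\batdistance^{\repetitioncoeffH}$ bounds the probability a spurious control-bit flip forces a rewind; and the remaining $\frac{\levelblock^2}{\levelblock-1}(\Pfailure_{1}+\cdots)$ and $3\batdistance^{\repetitioncoeff+4}\levelblock^2\log\levelblock\cdots$ terms collect the cost of layer-$1$ misdetections --- bounded via $\Pfailure_{1}$ through the Hamming misdetection formula~\eqref{eq:hammingmisdetect} and~\eqref{ea:Pe1} --- together with the higher-layer rewinds they trigger, both summed over the hierarchy. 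Each such sum is a geometric-type series in $\batdistance^2\levelblock$; it converges, and keeps each $1-\batdistance^2\levelblock$ bounded away from zero, exactly because the hypothesis $\levelblock\le\frac{1}{8\BSCprob}$ gives $\batdistance^2\levelblock\le\tfrac{1}{2}$ (and $\BSCprob<\frac{1}{16}$ makes $\levelblock\ge2$ an admissible power of two). Dividing useful progress by total channel uses and letting $\protLength\to\infty$, so the $o(\protLength)$ start-up cost washes out, yields exactly $\RBsc(\BSCprob,\levelblock)$.

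\textbf{Error probability and the main obstacle.} An incorrect simulation requires an error that is missed by the layer-$1$ test and by every higher-layer test covering its region; union-bounding over the $O(\protLength)$ detection instances, its probability is at most $\mathrm{poly}(\protLength)$ times $\Pfailure_{1}$ times a product of $\Theta(\log\protLength)$ factors each a fixed power of $\batdistance<1$, hence $\to0$ once those fixed powers are chosen large enough. One must also check that the total number of channel uses does not exceed $\sigmalength$ except with vanishing probability, which is a concentration bound on the sum of nearly independent geometric rewind counts at each layer. The delicate part --- the step I expect to be the main obstacle --- is the joint accounting: the rewind rule must guarantee that a single undetected low-layer error is eventually caught at \emph{some} higher layer (else reliability fails), while the total amount of rewinding stays small enough for all the rate penalties to telescope into the closed form~\eqref{eq:hammingrate}. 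Making the per-layer series in $\batdistance^2\levelblock$ converge (the origin of $\levelblock\le 1/(8\BSCprob)$) and balancing the two repetition counts $\repetitioncoeff$ and $\repetitioncoeffH$ so that neither the rate term nor the error term dominates is where the real work lies; everything else is Lemmas~\ref{lemma:repetition} and~\ref{lemma:poly} and the Hamming misdetection formula fed into a union bound.
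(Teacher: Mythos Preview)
Your overall architecture matches the paper's, but there is a genuine gap in the higher-layer protocol that breaks both reliability and the exact rate formula.

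You specify that at layers $\levelind>1$ the control bits of the polynomial test are sent with a \emph{fixed} $\repetitioncoeff$-fold repetition. The paper instead uses $\repetitioncoeff+2\levelind$ repetitions at layer $\levelind$, and this growth with $\levelind$ is essential. At the top layer $\levelind=\nlevels=\Theta(\log_{\levelblock}\protLength)$ the test is run exactly once and involves $\Theta(\nlevels\log\levelblock)$ control bits; with only $\repetitioncoeff$-fold repetition the probability that at least one of them is flipped is $\Theta((\log\protLength)\,\batdistance^{\repetitioncoeff})$, which does not vanish as $\protLength\to\infty$. Such a flip makes Alice and Bob disagree on the final rewind decision, and there is no further layer to catch it. Your error argument (``missed by the layer-$1$ test and by every higher-layer test covering its region'') tracks only the misdetection chain and overlooks this second failure mode in which a high-layer communication error \emph{creates} a fresh discrepancy rather than failing to catch an old one. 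In the paper this is exactly the additive term in the recursion $\Pfailure_{\levelind}\le \levelblock^{-1}\Pfailure_{\levelind-1}+3\batdistance^{\repetitioncoeff}(\log\levelblock)\,\levelind\,\batdistance^{2\levelind}$; the factor $\batdistance^{2\levelind}$, supplied by the extra $2\levelind$ repetitions, is what drives $\Pfailure_{\nlevels}\to 0$ and hence $\Pr(\mathcal{E}_2)\to 0$.

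The same design choice is also what produces the stated denominator. The channel-use count at layer $\levelind$ is proportional to $\levelind(\repetitioncoeff+2\levelind)\levelblock^{-\levelind}$, and summing the quadratic-in-$\levelind$ piece is exactly what yields the $4\levelblock/(\levelblock-1)^3$ and $(4\levelblock-2)/(\levelblock(\levelblock-1)^2)$ terms in~\eqref{eq:hammingrate}; with a fixed $\repetitioncoeff$ you would obtain only the $\repetitioncoeff(2\levelblock-1)/(\levelblock-1)^2$ contribution. Two smaller mismatches: the paper sets the polynomial-test misdetection probability to $\levelblock^{-2}$ (so that after a union bound over $\levelblock$ sub-blocks the recursion contracts by $\levelblock^{-1}$), not a ``fixed power of $\batdistance$''; and the number of control bits at layer $\levelind$ is $\Theta(\levelind\log\levelblock)$ rather than $O(\log\levelblock)$, since the field size must grow with the rewind-window length $\levelblock^{\levelind}$.
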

An example for $\RBsc(\BSCprob,\levelblock)$ with $\levelblock=9$ is depicted in Figure~\ref{fig:ratefigure}.
\begin{figure}
	\begin{center}
		\hspace*{-9mm}
		\includegraphics[width=1.2\columnwidth]{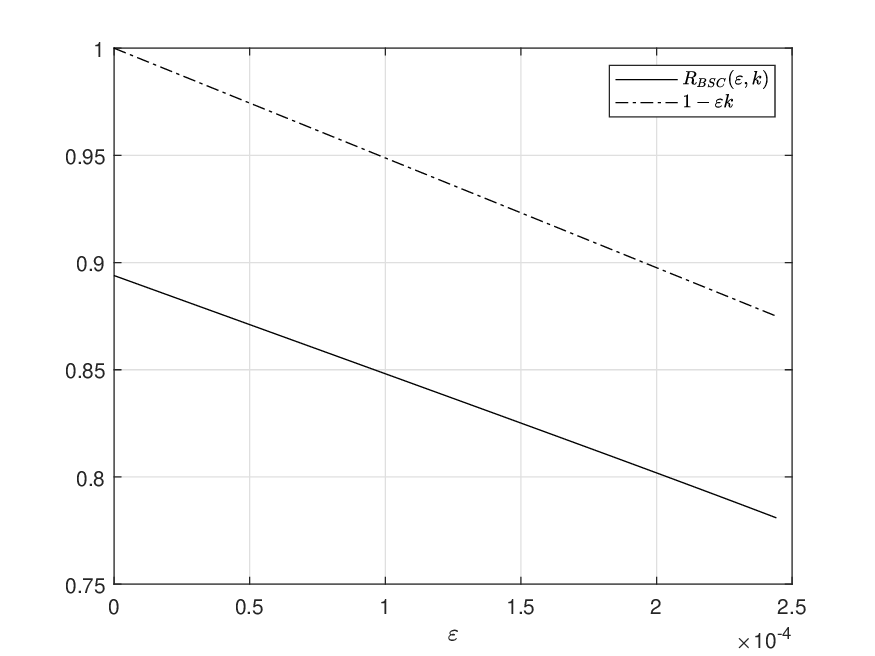}   
	\end{center}
	\caption{\newtext{
	An example for the rate $\RBsc(\BSCprob,\levelblock)$ from Theorem~\ref{theorem:rewindifrateHamming} with $\levelblock=2^9$. 
	The maximal channel crossover probability $\BSCprob$ is $\frac{1}{8\levelblock}$ as required by the theorem. It is observable that in this regime, $\RBsc(\BSCprob,\levelblock)$ is almost linear in $\BSCprob$. This is since the denominator is constant and the dominant term in the numerator is $1-\levelblock\BSCprob$. The function $1-\levelblock\BSCprob$ is also plotted as reference. 
	\label{fig:ratefigure}}}
\end{figure}
Using this theorem, $\Cinter(\BSCprob)\geq\max_{\levelblock}\RBsc(\BSCprob,\levelblock)$ for {protocols} with a bit-vs.-bit order of speakers and $\Cinter(\BSCprob)\geq\frac{1}{2}\max_{\levelblock}\RBsc(\BSCprob,\levelblock)$ for {protocols} with a general (possibly adaptive) order of speakers.

The proof of Theorem~\ref{theorem:rewindifrateHamming} is by the construction and analysis of a rewind-if-error scheme and appears in Sections~\ref{section:SchemeDescription} and \ref{section:schemeanalysis}. We note that the presented scheme is randomized and in Section~\ref{section:derandomize} we explain how to modify it to be deterministic. It is also in place to note that the error probability of this scheme decays polynomially in $\protLength$, as can be seen in the analysis of the error event.

The following corollary proved in Appendix~\ref{appendix:Oofh} states that the scheme obtains the rate lower bound  $1-O(\sqrt{h(\BSCprob)}$ from \cite{kol2013interactive}:
\begin{corollary}\label{corollary:Obehaviour}
For $\BSCprob\to 0$ 
		\begin{align}
		\max_{\levelblock} \RBsc(\BSCprob,\levelblock)\geq 1-O(\sqrt{h(\BSCprob)})
		\end{align}
\end{corollary}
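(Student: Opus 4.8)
\textbf{Proof proposal for Corollary~\ref{corollary:Obehaviour}.}

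The plan is to exhibit a single good choice of the block length $\levelblock=\levelblock(\BSCprob)$ and to show that, with this choice, every term by which $\RBsc(\BSCprob,\levelblock)$ falls short of $1$ is $O(\sqrt{h(\BSCprob)})$; since $\max_{\levelblock}\RBsc(\BSCprob,\levelblock)\ge \RBsc(\BSCprob,\levelblock(\BSCprob))$, this gives the claim. First I would reduce the problem to bounding a sum of nonnegative error terms. Writing $\RBsc(\BSCprob,\levelblock)=N/D$ with $N,D$ the numerator and denominator of \eqref{eq:hammingrate}, the denominator satisfies $D\ge 1$ and, for $\BSCprob$ small and the choice of $\levelblock$ below, $0<N\le 1$; hence $1-\RBsc(\BSCprob,\levelblock)=(D-N)/D\le D-N=(D-1)+(1-N)$. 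Both $D-1$ and $1-N$ are sums of a bounded number of explicitly nonnegative terms (using $\batdistance^2\levelblock\le\tfrac12$, see below, so that $2-\batdistance^2\levelblock>0$), so it suffices to bound each such term by $O(\sqrt{h(\BSCprob)})$.

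Next I would record the relevant scalings as $\BSCprob\to0$: $h(\BSCprob)=\Theta(\BSCprob\log(1/\BSCprob))$, $\batdistance=2\sqrt{\BSCprob(1-\BSCprob)}=\Theta(\sqrt{\BSCprob})$ so $\batdistance^{a}=O(\BSCprob^{a/2})$ for every fixed $a$, and — using the admissible range $\levelblock\le\frac{1}{8\BSCprob}$ — $\batdistance^{2}\levelblock=4\BSCprob(1-\BSCprob)\levelblock\le\tfrac12$, so that all the rational factors $\frac{2-\batdistance^{2}\levelblock}{(1-\batdistance^{2}\levelblock)^{2}}$ in \eqref{eq:hammingrate} are bounded by an absolute constant. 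Inspecting \eqref{eq:hammingrate}, the only contributions not carrying a factor $\batdistance^{\repetitioncoeff+4}$, $\batdistance^{\repetitioncoeffH}$, or $\Pfailure_{1}$ are the numerator term $\levelblock\BSCprob$ and the denominator terms $\frac{\repetitioncoeffH(3+\log\levelblock)}{\levelblock}$ and $3\log\levelblock\big[\frac{\repetitioncoeff(2\levelblock-1)}{(\levelblock-1)^{2}}+\frac{4\levelblock}{(\levelblock-1)^{3}}+\frac{4\levelblock-2}{\levelblock(\levelblock-1)^{2}}\big]=\Theta\!\big(\frac{\log\levelblock}{\levelblock}\big)$; these are the dominant terms, and the natural choice of $\levelblock$ balances them. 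Accordingly I would take $\levelblock$ to be a power of two within a factor $2$ of $\sqrt{\log(1/\BSCprob)/\BSCprob}$, which is $\le\frac{1}{8\BSCprob}$ for all small enough $\BSCprob$ (as $64\BSCprob\log(1/\BSCprob)\to0$) and tends to infinity; then $\log\levelblock=\Theta(\log(1/\BSCprob))$ and both $\levelblock\BSCprob=\Theta(\sqrt{\BSCprob\log(1/\BSCprob)})$ and $\frac{\log\levelblock}{\levelblock}=\Theta(\sqrt{\BSCprob\log(1/\BSCprob)})$, i.e. both are $\Theta(\sqrt{h(\BSCprob)})$.

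It then remains to check that the lower-order terms are $o(\sqrt{h(\BSCprob)})$. The term $(3+\log\levelblock)\batdistance^{\repetitioncoeffH}$ is $O(\log(1/\BSCprob)\,\BSCprob^{5/2})=o(\sqrt{h})$; each term carrying $\batdistance^{\repetitioncoeff+4}\levelblock^{i}\log\levelblock$ with $i\le2$ is (using the bounded rational factor) at most $O(\BSCprob^{7/2}\levelblock^{2}\log\levelblock)=O(\BSCprob^{5/2}\log^{2}(1/\BSCprob))=o(\sqrt{h})$; and, since the extended-Hamming code has minimum distance $4$, the misdetection probability obeys $\Pfailure_{1}\le\binom{\levelblock}{4}\BSCprob^{4}+(3+\log\levelblock)\batdistance^{\repetitioncoeffH}=O((\levelblock\BSCprob)^{4})+o(\sqrt{h})$, whence $\frac{\levelblock^{2}}{\levelblock-1}\Pfailure_{1}=O(\levelblock(\levelblock\BSCprob)^{4})=O(\BSCprob^{3/2}\log^{5/2}(1/\BSCprob))=o(\sqrt{h})$. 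Summing, $1-\RBsc(\BSCprob,\levelblock)=O(\sqrt{h(\BSCprob)})$, as desired.

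The argument is largely bookkeeping; the one place that genuinely needs care is the choice of $\levelblock$, which must simultaneously be a power of two, respect $\levelblock\le\frac{1}{8\BSCprob}$, and balance the two dominant $\Theta(\sqrt{h})$ terms — together with controlling $\Pfailure_{1}$ through the minimum-distance-$4$ property so that the aggregate of the remaining terms is truly $o(\sqrt{h})$ and not merely $O(\sqrt{h})$ with an uncontrolled constant. The only external input is the elementary asymptotics $h(\BSCprob)=\Theta(\BSCprob\log(1/\BSCprob))$, which converts the block-length trade-off into the stated rate.
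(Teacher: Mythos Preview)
Your proof is correct and follows essentially the same route as the paper's: write $\RBsc=(1-A)/(1+B)\ge 1-A-B$, identify the dominant contributions $\levelblock\BSCprob$ and $\Theta(\log\levelblock/\levelblock)$, and balance them by choosing $\levelblock$ of order $\sqrt{\log(1/\BSCprob)/\BSCprob}$ (the paper writes this as $\BSCprob=\log\levelblock/\levelblock^{2}$, which is the same relation). The one noteworthy difference is your treatment of $\Pfailure_{1}$: instead of Taylor-expanding the exact expression \eqref{ea:Pe1} to get the paper's $\Pfailure_{1}=O(\levelblock\BSCprob^{2})$, you invoke the minimum distance $4$ of the extended-Hamming code to bound the misdetection part by $\binom{\levelblock}{4}\BSCprob^{4}=O((\levelblock\BSCprob)^{4})$, which is both tighter and more transparent; since the expression in \eqref{ea:Pe1} is exactly the probability that the i.i.d.\ noise vector is a nonzero codeword, your bound legitimately upper-bounds it. Either estimate suffices here, as with the chosen $\levelblock$ both give $\levelblock\Pfailure_{1}=o(\sqrt{h(\BSCprob)})$.
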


As stated before, the presented rewind-if-error scheme is designed for BSC with a sufficiently small $\BSCprob$. 
For larger values of $\BSCprob$, the channel can be converted to a BSC($\delta'$) with $\delta'\leq\delta<\BSCprob$ using $\nrepetitions(\BSCprob,\targetProb)$ repetitions followed by a majority vote according to Lemma~\ref{lemma:repetition}.
The following lemma bounds the essential interactive capacity by using an interactive coding scheme augmented by a repetition code: 
\begin{lemma}\label{lemma:gammalemma}
	For every $0<\BSCprob<\frac{1}{2}$ and $0<\targetProb<\frac{1}{2}$ 
	\begin{align}
	\frac{\Cinter(\BSCprob)}{\Cshannon(\BSCprob)}\geq
	\frac{\Cinter(\delta)}{\log\frac{1}{\targetProb}+1}. 
	\end{align} 
\end{lemma}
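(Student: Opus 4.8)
The plan is to use a two-stage construction: first convert the BSC($\BSCprob$) into a cleaner BSC($\delta'$) by repetition coding with majority vote, then run an interactive simulation scheme achieving close to $\Cinter(\delta)$ over that induced channel, and finally bound the rate loss incurred by the repetition step. By Lemma~\ref{lemma:repetition}, sending each bit $\nrepetitions = \nrepetitions(\BSCprob,\targetProb)$ times over BSC($\BSCprob$) and majority-decoding yields an induced BSC($\delta'$) with $\delta' \le \batdistance^{\nrepetitions}$ where $\batdistance = 2\sqrt{\BSCprob(1-\BSCprob)}$; choosing $\nrepetitions = \lceil \log(1/\targetProb) / d(\tfrac12\|\BSCprob)\rceil$ guarantees $\delta' \le \targetProb < \BSCprob$. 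Since $\Cinter$ is monotone (a scheme for BSC($\delta$) works verbatim over BSC($\delta'$) when $\delta' \le \delta$, as the induced channel is cleaner — more precisely, one uses that BSC($\delta'$) is degraded from BSC($\delta$) is the wrong direction, so I should instead run the scheme designed for the worse channel BSC($\delta$) and note BSC($\delta'$) can be further degraded to BSC($\delta$) by adding noise at the receiver), any rate $R$ achievable over BSC($\delta$) is achievable over the induced channel.

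The key computation is then the rate bookkeeping. If the inner interactive scheme uses the induced channel $M$ times at rate $R_{\mathrm{in}} \to \Cinter(\delta)$, i.e. simulates $n = R_{\mathrm{in}} M$ transcript bits, and each induced channel use costs $\nrepetitions$ physical BSC($\BSCprob$) uses, the overall number of physical channel uses is $\nrepetitions M$, giving an overall rate $R_{\mathrm{out}} = n/(\nrepetitions M) = R_{\mathrm{in}}/\nrepetitions$. Dividing by $\Cshannon(\BSCprob)$ and taking limits,
\begin{align}
\frac{\Cinter(\BSCprob)}{\Cshannon(\BSCprob)} \ge \frac{\Cinter(\delta)}{\nrepetitions\,\Cshannon(\BSCprob)}.
\end{align}
It remains to show $\nrepetitions\,\Cshannon(\BSCprob) \le \log\frac{1}{\targetProb} + 1$. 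Using $\nrepetitions = \lceil \log(1/\targetProb)/d(\tfrac12\|\BSCprob)\rceil \le \log(1/\targetProb)/d(\tfrac12\|\BSCprob) + 1$ and the elementary identity/inequality $d(\tfrac12 \| \BSCprob) = 1 - h(\BSCprob) = \Cshannon(\BSCprob)$ (indeed $d(\tfrac12\|\BSCprob) = \tfrac12\log\tfrac{1}{2\BSCprob} + \tfrac12\log\tfrac{1}{2(1-\BSCprob)} = -1 - \tfrac12\log\BSCprob - \tfrac12\log(1-\BSCprob)$; comparing with $1 - h(\BSCprob) = 1 + \BSCprob\log\BSCprob + (1-\BSCprob)\log(1-\BSCprob)$ one checks these need not be exactly equal, so the honest bound is $d(\tfrac12\|\BSCprob) \ge \Cshannon(\BSCprob)$), we get
\begin{align}
\nrepetitions\,\Cshannon(\BSCprob) \le \frac{\log(1/\targetProb)}{d(\tfrac12\|\BSCprob)}\,\Cshannon(\BSCprob) + \Cshannon(\BSCprob) \le \log\frac{1}{\targetProb} + 1,
\end{align}
since $\Cshannon(\BSCprob)/d(\tfrac12\|\BSCprob) \le 1$ and $\Cshannon(\BSCprob) \le 1$. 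Combining the two displays gives the claim.

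The main obstacle I expect is getting the direction of channel degradation right when invoking monotonicity of $\Cinter$: I must run the scheme tuned for the \emph{noisier} surrogate BSC($\delta$) and argue that the induced BSC($\delta'$), $\delta' \le \delta$, can be degraded to it so the scheme still applies — rather than the (false) claim that a scheme for BSC($\delta$) trivially runs on a cleaner channel without modification. The secondary subtlety is the precise relation between $d(\tfrac12\|\BSCprob)$ and $\Cshannon(\BSCprob)$; one only needs the one-sided bound $d(\tfrac12\|\BSCprob) \ge \Cshannon(\BSCprob)$, which follows from convexity/Pinsker-type estimates or direct manipulation, and that is all the argument requires. Everything else — the rate arithmetic and the ceiling bound on $\nrepetitions$ — is routine.
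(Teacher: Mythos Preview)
Your proposal is correct and follows essentially the same route as the paper: repetition coding to convert $\mathrm{BSC}(\BSCprob)$ into $\mathrm{BSC}(\delta')$ with $\delta'\le\delta$, monotonicity $\Cinter(\delta')\ge\Cinter(\delta)$ via degradation, and then the bound $\nrepetitions\,\Cshannon(\BSCprob)\le\log\tfrac{1}{\delta}+1$ using $d(\tfrac12\|\BSCprob)\ge\Cshannon(\BSCprob)$ together with $\Cshannon(\BSCprob)\le 1$. The paper phrases that key one-sided inequality as ``lautum information dominates mutual information for the BSC'' (citing Palomar--Verd\'u), which is precisely the bound you arrived at after your self-correction.
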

\begin{proof}
Let $\nrepetitions$ be the smallest integer such that $\batdistance^\nrepetitions\leq \targetProb$, where 
$\batdistance\dfn2\sqrt{\BSCprob(1-\BSCprob)}$ is the Bhattacharyya parameter of the BSC($\BSCprob$) as above. By Lemma~\ref{lemma:repetition}, using $\nrepetitions$ repetitions, the BSC($\BSCprob$) can be converted to a BSC($\targetProb'$) with $\targetProb'\leq\targetProb$. Normalizing by $\Cshannon(\BSCprob)$ and noting that $\Cinter(\delta)\leq \Cinter(\delta')$ we obtain
\begin{align}\label{eq:CIbound}
\frac{\Cinter(\BSCprob)}{\Cshannon(\BSCprob)}\geq
\frac{\Cinter(\delta)}{\nrepetitions(\BSCprob,\targetProb)\Cshannon(\BSCprob)}. 
\end{align} 
By the definition of $\nrepetitions$ in Lemma~\ref{lemma:repetition}:
\begin{align}
\nrepetitions\leq \nrepetitions(\BSCprob,\targetProb)\dfn \frac{\log\frac{1}{\targetProb}}{\log\frac{1}{\batdistance}}+1,
\end{align}
where  `$+1$' accounts for rounding to the nearest larger integer. Furthermore,
\begin{align}\
\nrepetitions(\BSCprob,\targetProb)\Cshannon(\BSCprob)&=
\left(\frac{\log\frac{1}{\targetProb}}{\log\frac{1}{\batdistance}}+1\right)\Cshannon(\BSCprob)\label{eq:lemma4}\\
&\leq \frac{I(X;Y)}{L(X;Y)}\log\frac{1}{\targetProb}+I(X;Y),
\end{align}
where $X\sim\Ber\left(\tfrac{1}{2}\right)$ is the input of a BSC($\BSCprob$) channel and $Y$ is its respective output,
\begin{align}
I(X;Y)=D\left(P_{XY}||P_{X}P_Y\right)=\Cshannon(\BSCprob)
\end{align}
is the mutual information between $X$ and $Y$ and 
\begin{align}
L(X;Y)=D\left(P_{X}P_Y||P_{XY}\right)=d\left(\tfrac{1}{2}||\BSCprob\right)=\log\frac{1}{\batdistance}
\end{align}
is the \textit{lautum information} between $X$ and $Y$ \cite{palomar2008lautum}. Using the facts that
for the BSC, $L(X;Y)\geq I(X;Y)$ \cite[Theorem 12]{palomar2008lautum} and that trivially $I(X;Y)\leq 1$, concludes the proof.
\end{proof}

Theorem~\ref{theorem:interactivebound} now follows by using 
$\frac{1}{2}\RBsc(\targetProb,\levelblock)$ from Theorem~\ref{theorem:rewindifrateHamming} 
as a lower bound to $\Cinter(\targetProb)$, where the $\frac{1}{2}$ factor
is used in for the symmetrization of the order of speakers. Then, applying  Lemma~\ref{lemma:gammalemma}, gives:
\begin{align}\label{eq:ratokdelta}
\frac{\Cinter(\BSCprob)}{\Cshannon(\BSCprob)}\geq
\frac{\frac{1}{2}\RBsc(\targetProb,\levelblock)}{\log\frac{1}{\targetProb}+1}. 
\end{align} 
The bound in \eqref{eq:ratokdelta} is then tightened by scanning through various values of $\targetProb,\levelblock$ as seen in Figure~\ref{fig:capratio}. The combination of 
$\levelblock=2^9$ and $\targetProb=0.0001842$ gives the value of the lower bound in 
Theorem~\ref{theorem:interactivebound}.

\begin{figure}[h]
	\begin{center}
		\hspace*{-5mm}
		\includegraphics[width=1.15\columnwidth]	{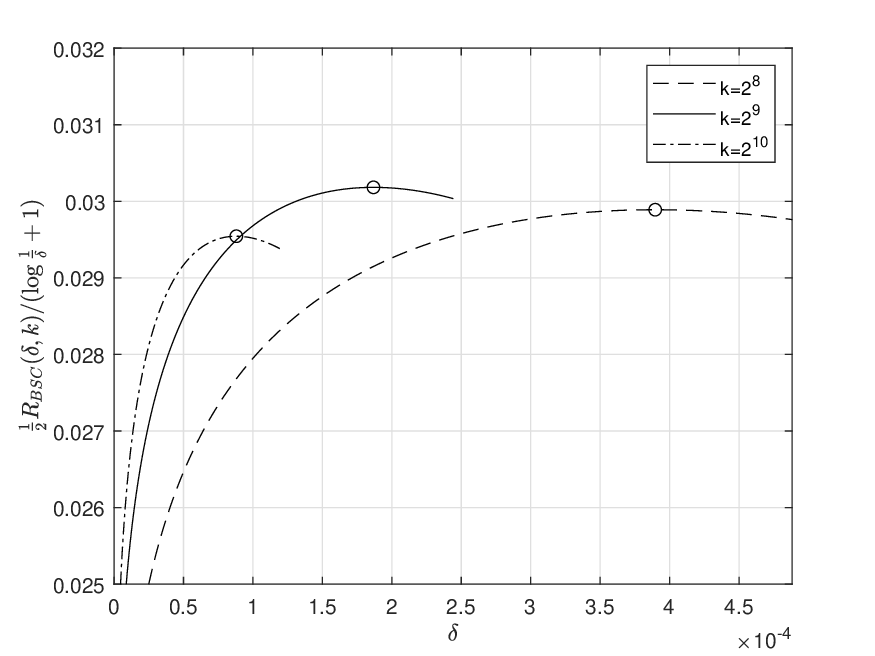} 
	\end{center}
	\caption{ 
	\newtext{Calculating the bound of Theorem~\ref{theorem:interactivebound} using \eqref{eq:ratokdelta} and various combinations of $\levelblock$ and $\targetProb$.
	\label{fig:capratio}
	}}
\end{figure}


\section{Description of the coding scheme for the BSC \label{section:SchemeDescription}}

The \textit{rewind-if-error} scheme is based on two concepts:  uncoded transmission and retransmissions based on error detection. The uncoded transmission is motivated by the fact that 
in a general interactive {protocol}, even in a noise-free environment, the parties cannot predict the {transcript bits} to be output by their counterpart, and hence might not always know some of their own future outputs. For this reason, long blocks of bits, which are essential for efficient block codes, cannot be generated.

The concept of retransmissions based on error detection can be viewed as an extension of the classic example of the {one-way} BEC with feedback \cite[p. 506]{GallagerIT}. In this simple setup, 
channel errors occur independently with probability $\epsilon$ and errors are detected and marked as erasures, whose locations are immediately revealed to both parties. The coding scheme is simply resending the erased bits, yielding an average rate of $1-\epsilon$, which is exactly Shannon's capacity for the BEC. In addition, since all the channel errors are marked as erasures, the probability of decoding error is zero.

When performing interactive communication over a BSC, channel errors are not necessarily marked as erasures and perfect feedback is not present. However, the fact that the parties have (a noisy) two-way communication link, enables them to construct a coding scheme in a similar spirit as follows. The parties start by simulating the {transcript} in a window (or a \textit{block}) of $\levelblock$ consecutive {bits}, operating as if the channel is error free. The probability of error in the window can be upper bounded using the union bound by $\levelblock\BSCprob$, and this number is assumed to be small. Next, the parties exchange bits in order to decide if the window is correct, i.e., no errors occurred, which would lead to the {simulation} of the consecutive window, or incorrect, i.e., some errors occurred, which would lead to retransmission (i.e. {re-simulation of the window}). 

Unfortunately, error detection using less than $\levelblock$ bits of communication has an inherent failure probability. In addition, performing the error detection over a noisy channel can cause further errors, including a disagreement between the parties regarding the mere presence of the errors. For this purpose, the error detection is done in a \textit{hierarchical} and \textit{layered} fashion. Namely, after $\levelblock$ windows are simulated, error detection is applied on all of them, including on the outcome of the previous error detection, possibly initiating their entire retransmission. After $\levelblock^2$ windows are simulated, error detection is applied on all of them, and so on. An illustrated example for this concept for $\levelblock=4$ is given in Table~\ref{table:simulationexample}. 

We are now ready to describe the coding scheme. We note that it can be viewed both as a sequential algorithm and as a recursive algorithm. For sake of clarity and simplicity of exposition, we chose the sequential interpretation for the description and the recursive interpretation for the analysis. 
\subsection{Building blocks}
In the sequel we assume that the order of speakers is bit.~vs.~bit, namely, Alice speaking at odd times and Bob speaking at even times. We denote the input of a the channel by $\channelTrans_\tind$  and its corresponding output by $\channelReceive_\tind$. 
We denote by $\tind$ the time index used for the protocol simulation. $\tind\in\{0,\ntrans-1\}$,
where $\ntrans$ denotes the number of times the channels are used for the simulation of the protocol, excluding the overhead required for the calculation of the rewind bits. In other words, for the sake of simplicity, the instances in which the channels are used for error detection are counted and indexed separately.

The following notions are used as the building blocks of the scheme:
\begin{itemize}
	\item The \textit{uncoded simulation} {of the transcript} is a sequence of bits, generated by the parties and the channel, using the transmission functions in $\bs{\transFunc}$ and disregarding the channel errors. Alice's and Bob's uncoded simulation vectors are for odd $\tind$:
	$\undcodedSim^A_{\tind}\dfn(\channelTrans_{1}, \channelReceive_{2},\ldots ,\channelTrans_{\tind})$, and 
	$\undcodedSim^B_{\tind}\dfn(\channelReceive_{1}, \channelTrans_{2},\ldots ,\channelReceive_{\tind})$ respectively.
	For even $\tind$ they are
	$\undcodedSim^A_{\tind}\dfn(\channelTrans_{1}, \channelReceive_{2},\ldots ,\channelReceive_{\tind})$, and 
	$\undcodedSim^B_{\tind}\dfn(\channelReceive_{1}, \channelTrans_{2},\ldots ,\channelTrans_{\tind})$ respectively.

	\item The \textit{cursor} variables indicate the time indexes of the transmission functions {(i.e. the appropriate function in $\bs{\transFunc}$)} used by Alice or Bob in the previous transmission. We denote Alice's and Bob's cursors by $\cursor^A$ and $\cursor^B$ respectively. We note that $\cursor^A$ and $\cursor^B$ are random variables and may not be identical.
	\item The \textit{rewind bits} are the result of the error detection procedure and are calculated at predetermined points throughout the scheme. 
	They determine whether the {simulation of the transcript} should proceed forward, or rewind.	
	We recall that $\ntrans$ denotes the number of times the channels are used for the simulation of the protocol, excluding the overhead required for the calculation of the rewind bits. We define the number of \textit{layers} by $\nlevels=\log_{\levelblock}\ntrans$, so that
$\ntrans=\levelblock^\nlevels$. We then separate the rewind bits into layers : $\levelind=1,\ldots,\nlevels$. At layer $\levelind$ there are $\levelblock^{\nlevels-\levelind}$ rewind bits, denoted by $\rewindbit^A_{\levelind}(1),...,\rewindbit^A_{\levelind}(\levelblock^{\nlevels-\levelind})$ for Alice and 
$\rewindbit^B_{\levelind}(1),...,\rewindbit^B_{\levelind}(\levelblock^{\nlevels-\levelind})$ for Bob. The value of Alice's and Bob's rewind bits might differ in the general case. 
The rewind bits $\rewindbit^A_{\levelind}(m)$ and $\rewindbit^B_{\levelind}(m)$ are calculated after exactly 
$m\levelblock^{\levelind}$ bits of uncoded simulation, and are calculated according to their respective \textit{rewind windows}.
In the sequel we use the term \textit{active} to denote that a rewind bit is set to one, and \textit{inactive} if it is set to zero.
\item The \textit{rewind window} 
$\rewindwindow{\rewindbit^A_{\levelind}(m)}$ of Alice (resp. $\rewindwindow{\rewindbit^B_{\levelind}(m)}$ of Bob) contains the bits according to which $\rewindbit^A_{\levelind}(m)$ (resp. $\rewindbit^A_{\levelind}(m)$) is calculated. It contains the uncoded simulation bits of the respective party, between times $(m-1)\levelblock^\levelind+1$ and $m\levelblock^\levelind$. In addition it contains all the rewind bits of levels $1\leq \tilde{l}<\levelind$ the party has calculated between these times. 
\end{itemize}

We note, that at every point of the simulation, having the uncoded simulation bits and the rewind bits calculated so far, both parties can calculate their cursors $\cursor^A$ and $\cursor^B$ and their {simulations} of the {transcript}. We denote these {simulation vectors} by: $\hat{\bs{\protTrans}}_A$ and $\hat{\bs{\protTrans}}_B$ for Alice and Bob respectively. We are now ready to introduce the coding scheme.
\subsection*{The coding scheme}
The coding scheme is elaborated in Algorithm~\ref{alg:rewindiferror}. Note that this is the
scheme as implemented at Alice's. The coding scheme implemented at Bob's side is obtained by respectively replacing $\cursor^A$, $\undcodedSim^A$, $\hat{\bs{\protTrans}}_A$, $\rewindbit^A_{\levelind}(m)$, "if $\cursor^A$ is odd", 
$\channelTrans_{\tind} = {\phi}^\text{Alice}_{{\cursor^A}}(\hat{\bs{\protTrans}}_A^{\cursor^A-1})$
by 
$\cursor^B$, $\undcodedSim^B$, $\hat{\bs{\protTrans}}_B$, $\rewindbit^B_{\levelind}(m)$, "if $\cursor^B$ is even",
$\channelTrans_{\tind} = {\phi}^\text{Bob}_{{\cursor^B}}(\hat{\bs{\protTrans}}_B^{\cursor^B-1})$.

\begin{algorithm}
\DontPrintSemicolon
\SetAlgoVlined
\Init{}{
$\tind=0$ \SetFillComment{ {the channel-use index} }\\
$\cursor^A=0$ \SetSideCommentLeft{ {the cursor variable} }\\
$\undcodedSim^A_0=\emptyset$ \SetSideCommentLeft{ {the uncoded simulation vector} }\\
$\hat{\bs{\protTrans}}_A=\emptyset$ \SetSideCommentLeft{ {the transcript simulation vector} }\\
}
\While{$\tind\leq \ntrans$}{
 \SetFillComment{{uncoded simulation of $\levelblock$ bits:} }\\
\For{$\ell=0$ \KwTo $\levelblock$}
{
$\tind = \tind+1$\\
$\cursor^A=\cursor^A+1$\\
\eIf{$\cursor^A$ is odd}
{
$\channelTrans_{\tind} = {\phi}^\text{Alice}_{{\cursor^A}}(\hat{\bs{\protTrans}}_A^{\cursor^A-1})$
\SetFillComment{{produce a transcript bit} }\\
$\undcodedSim^A_\tind=(\undcodedSim^A_{\tind-1}, \channelTrans_{\tind})$
\\
$\hat{\bs{\protTrans}}_A^{\cursor^A}=(\hat{\bs{\protTrans}}_A^{\cursor^A-1},\channelTrans_{\tind})$ 
}
{
receive $\channelReceive_{\tind}$,
\SetFillComment{{the transcript bit produced by Bob} }\\
$\undcodedSim^A_\tind=(\undcodedSim^A_{\tind-1}, \channelReceive_{\tind})$
\\
$\hat{\bs{\protTrans}}_A^{\cursor^A}=(\hat{\bs{\protTrans}}_A^{\cursor^A-1},\channelReceive_{\tind})$ 
}
}
\SetFillComment{{check if a rewind window is full and operate accordingly:} }\\
\For{$\levelind=1$ \KwTo $\nlevels$}
{
\If{$\tind$ mod $\levelblock^\levelind=0$}
{
$m=\tind/\levelblock^\levelind$\\
the rewind window $\rewindwindow{\rewindbit^A_{\levelind}(m)}$ is full\\
calculate the rewind bit $\rewindbit^A_{\levelind}(m)$
by Algorithm~\ref{alg:HammingErrorDetect} if $\levelind=1$
or Algorithm~\ref{alg:PolyErrorDetect} if $\levelind>1$
\\
\eIf{$\rewindbit^A_{\levelind}(m)=1$ rewind}
{
rewind $\cursor^A$ to the value it had at the beginning of $\rewindwindow{\rewindbit^A_{\levelind}(m)}$\\
delete the values of $\rewindwindow{\rewindbit^A_{\levelind}(m)}$ from $\hat{\bs{\protTrans}}_A$\\
set the values corresponding to $\rewindwindow{\rewindbit^A_{\levelind}(m)}$ in 
$\undcodedSim^A_{\tind}$ to zero
}
{
do nothing
}
}
}
}
\caption{\newtext{The coding scheme as implemented by Alice. }
\label{alg:rewindiferror}
}
\end{algorithm}

\subsection*{Calculation of the rewind bits}
For the first layer, $\levelind=1$, the rewind bits are calculated 
using the algorithm for error detection using an extended-Hamming code, described in Definition~\ref{def:exhammingdetection}. The reason for the choice of this procedure is the fact that in the first layer the difference between $\undcodedSim^A$ and $\undcodedSim^B$ is only the channel noise, which is i.i.d. Bernoulli($\BSCprob$), and the fact that the extended-Hamming code 
is a good error detection code for such a noise. In particular, this code is 
\textit{proper} \cite{klove2012error}, which means that the probability of error mis-detection is monotonically increasing for $0<\BSCprob<1/2$. As the probability of mis-detection for $\BSCprob=\frac{1}{2}$ is equal to that of random hashing with the same number of bits, for $\BSCprob<\frac{1}{2}$ we obtain favorable performance without randomness.
The details of the calculation are elaborated in Algorithm~\ref{alg:HammingErrorDetect}.
\begin{algorithm}[h]
\DontPrintSemicolon
\KwIn{\\$\undcodedSim^A=\rewindwindow{\rewindbit^A_{1}(m)}$ - Alice's {rewind window}, $\levelblock$ bits row-vector\\
$\undcodedSim^B=\rewindwindow{\rewindbit^B_{1}(m)}$ - Bob's {rewind window}, $\levelblock$ bits row-vector \\
$H$ - the parity check matrix of a $(\levelblock,\levelblock-\log\levelblock-1,4)$ {extended-Hamming code}
}
\KwOut{
\\${\rewindbit^A_{1}(m)}$ - Alice's {rewind bit}\\
${\rewindbit^B_{1}(m)}$ - Bob's {rewind bit}\\
}
\Alg{}
{
\underline{Alice}: calculate $\syndromevec^A =  \undcodedSim^A\cdot H^T$ \\
\underline{Alice}: send $\syndromevec^A$ to Bob over the channel using $\repetitioncoeffH$ repetitions per bit\\
\underline{Bob}: decode $\hat{\syndromevec}^A$ using a majority vote per bit on the channel respective inputs\\
\underline{Bob}: calculate $\syndromevec^B = \undcodedSim^B \cdot H^T$ \\
\underline{Bob}: $\rewindbit^B_{1}(m)=\indfunc{\hat{\syndromevec}^A\neq \syndromevec^B}$\\
\underline{Bob}: send $\rewindbit^B_{1}(m)$ to Alice over the channel using $\repetitioncoeffH$ repetitions per bit\\
\underline{Alice}: Set $\rewindbit^A_{1}(m)$ according to a  majority vote per bit on the channel respective input
}
\SetAlgoVlined
\caption{\newtext{Calculating of the rewind bits at $\levelind=1$\label{alg:HammingErrorDetect}}}
\end{algorithm}

For all other layers, $\levelind>1$, the procedure is implemented according to 
the polynomial based randomized error detection scheme from Definition~\ref{def:equalityPrivate}. We start by assuming that the parties agree on the prime number $q_\levelind$ for every layer $\levelind>1$. We also assume for simplicity of exposition, that for every rewind window, the parties commonly and independently draw a test point $U$ using a common random string.
We denote the set comprising all the test points used by the scheme by $\mathcal{U}$, which contains $|\mathcal{U}|=O(\protLength)$ elements.
 In Section~\ref{section:derandomize} we show how the common randomness assumption can be relaxed. 
 The details of the calculation are elaborated in Algorithm~\ref{alg:PolyErrorDetect}.

\begin{algorithm}[h]
\DontPrintSemicolon
\KwIn{\\$\undcodedSim^A=\rewindwindow{\rewindbit^A_{1}(m)}$ - Alice's {rewind window}, $\levelblock^{\levelind}$ bits row-vector\\
$\undcodedSim^B=\rewindwindow{\rewindbit^B_{1}(m)}$ - Bob's {rewind window}, $\levelblock^{\levelind}$ bits row-vector \\
$H$ - the parity check matrix of a $(\levelblock,\levelblock-\log\levelblock-1,4)$ {extended-Hamming code}
}
\KwOut{
\\${\rewindbit^A_{1}(m)}$ - Alice's {rewind bit}\\
${\rewindbit^B_{1}(m)}$ - Bob's {rewind bit}\\
}
\Alg{}
{
\underline{Alice \& Bob}: uniformly draws $U\in \GFq$, $U\neq 0$ .\\
\underline{Alice}: 
calculates $A(U,\undcodedSim^A)=\sum_{i=1}^\ell X_i^A U^{i-1} (\text{mod } q)$\\
\underline{Alice}: 
send the bits representing $A(U,\undcodedSim^A)$ over the channel to Bob using 
$\repetitioncoeff+2\levelind$ repetitions per bit\\
\underline{Bob}: decode $\tilde{A}(U,\rewindwindow{\rewindbit^A_{\levelind}(m)})$
using a majority vote per bit on the channel respective inputs\\
\underline{Bob}: calculate $B(U,\undcodedSim^A)=\sum_{i=1}^\ell X_i^B U^{i-1} (\text{mod } q)$ \\
\underline{Bob}: $\rewindbit^{B}_{\levelind}(m)=\indfunc{\tilde{A}({U},\undcodedSim^A)\neq B({U},\undcodedSim^B)}$\\
\underline{Bob}: send $\rewindbit^B_{\levelind}(m)$ to Alice over the channel using $\repetitioncoeff+2\levelind$ repetitions per bit\\
\underline{Alice}: set $\rewindbit^A_{\levelind}(m)$ according to a  majority vote per bit on the channel respective input
}
\SetAlgoVlined
\caption{
\newtext{Calculating of the rewind bits at $\levelind>1$\label{alg:PolyErrorDetect}}}
\end{algorithm}


Let us now bound the number of bits required for this procedure. 
First, we generously bound the number of bits in a rewind window of layer $\levelind$, which contains all the uncoded simulation bits and the nested rewind bits of the previous layers, by $2\levelblock^{\levelind}$. For layer $\levelind$, the parties set $q_\levelind$ to be the first prime number between $2\levelblock^{2+\levelind}$ and $4\levelblock^{2+\levelind}$.
Therefore, a number in $\GFql$ can be represented by no more than $2+(2+\levelind)\log\levelblock$ bits. All in all the procedure described above required $3+(2+\levelind)\log\levelblock$ bits for layer $\levelind$. For simplicity of calculation, from this point on, we bound this number by
\begin{align}\label{eq:errordetectinobits}
3+(2+\levelind)\log \levelblock< 3\levelind\log \levelblock,
\end{align}
which applies for any $\levelind\geq 2$ and $\levelblock\geq 4$.


\begin{table*}
\underline{Start the simulation:} Initialize the cursors: $\cursor^A=\cursor^B=0$\\ \\
		\begin{tabular}{c|c|c|}\cline{2-3}
			&{ $\rewindwindow{\rewindbit_{1}(1)}$}&$\rewindbit_{1}(1)$\\ \hline
			\multicolumn{1}{|c|}{A}&$0,0,1,1$ &\multicolumn{1}{|c|}{$0$}\\ \hline
			\multicolumn{1}{|c|}{B}&$0,0,1,1$&\multicolumn{1}{|c|}{$0$}\\ \hline
		\end{tabular}
\\ \\
\underline{End of $\rewindwindow{\rewindbit_{1}(1)}$:} No errors, continue. $\cursor^A=\cursor^B=4$
\\ \\
		\begin{tabular}{c|c|c|c|c|}\cline{2-5}
			&{$\rewindwindow{\rewindbit_{1}(1)}$}&$\rewindbit_{1}(1)$ & {$\rewindwindow{\rewindbit_{1}(2)}$}&$\rewindbit_{1}(2)$\\ \hline
			\multicolumn{1}{|c|}{A}&$0,0,1,1$ &\multicolumn{1}{|c|}{$0$} & $1,0,0,1$ &$0$ \\ \hline
			\multicolumn{1}{|c|}{B}&$0,0,1,1$&\multicolumn{1}{|c|}{$0$}&$1,0,0,1$&$0$\\ \hline
		\end{tabular}
\\ \\
\underline{End of $\rewindwindow{\rewindbit_{1}(2)}$:} No errors, continue. $\cursor^A=\cursor^B=8$
\\	\\
		\begin{tabular}{c|c|c|c|c|c|c|}\cline{2-7}
			&{$\rewindwindow{\rewindbit_{1}(1)}$}&$\rewindbit_{1}(1)$ & {$\rewindwindow{\rewindbit_{1}(2)}$}&$\rewindbit_{1}(2)$
			& {$\rewindwindow{\rewindbit_{1}(3)}$}&$\rewindbit_{1}(3)$\\ \hline
			\multicolumn{1}{|c|}{A}&$0,0,1,1$ &\multicolumn{1}{|c|}{$0$} & $1,0,0,1$ &$0$&
			$0,\detectederror{0},0,0$&${1}$ \\ \hline
			\multicolumn{1}{|c|}{B}&$0,0,1,1$&\multicolumn{1}{|c|}{$0$}&$1,0,0,1$&${0}$&
			$0,\detectederror{1},0,0$&${1}$\\ \hline
		\end{tabular}
\\ \\
\underline{End of $\rewindwindow{\rewindbit_{1}(3)}$:} An error occurred and was detected by both parties: $\rewindbit^A_{1}(3)=\rewindbit^B_{1}(3)=1$\\
 Both parties zero the rewind window and rewind the cursors to the value it had before the window started: $\cursor^A=\cursor^B=8$
\\	\\
		\begin{tabular}{c|c|c|c|c|c|c|c|c|}\cline{2-9}
			&{$\rewindwindow{\rewindbit_{1}(1)}$}&$\rewindbit_{1}(1)$ & {$\rewindwindow{\rewindbit_{1}(2)}$}&$\rewindbit_{1}(2)$ & {$\rewindwindow{\rewindbit_{1}(3)}$}&$\rewindbit_{1}(3)$ &
			{$\rewindwindow{\rewindbit_{1}(4)}$}&$\rewindbit_{1}(4)$
			\\ \hline
			\multicolumn{1}{|c|}{A}&$0,0,1,1$ &\multicolumn{1}{|c|}{$0$} & $1,0,0,1$ &$0$&
			$\zeroedbits{0},\zeroedbits{0},\zeroedbits{0},\zeroedbits{0}$&${1}$ &$0,1,1,1$&$\detectederror{1}$\\ \hline
			\multicolumn{1}{|c|}{B}&$0,0,1,1$&\multicolumn{1}{|c|}{$0$}&$1,0,0,1$&${0}$&
			$\zeroedbits{0},\zeroedbits{0},\zeroedbits{0},\zeroedbits{0}$&${1}$			&$0,1,1,1$&$\detectederror{0}$\\ \hline
		\end{tabular}
\\ \\
\underline{End of $\rewindwindow{\rewindbit_{1}(4)}$:} There are no errors so Bob calculates $\rewindbit^B_{1}(3)=0$ and continues ($\cursor^B=12$). \\
However due to an error in communicating $\rewindbit^B_{1}(3)$, Alice decodes $\rewindbit^A_{1}(3)=1$, zeros the window and rewinds the cursor ($\cursor^A=8$)
\\	\\
		\begin{tabular}{c|c|c|c|c|c|c|c|c|c|}\cline{2-10}
			&\multicolumn{8}{|c|}{$\rewindwindow{\rewindbit_{2}(1)}$}&\\ \cline{2-10}
			&{$\rewindwindow{\rewindbit_{1}(1)}$}&$\rewindbit_{1}(1)$ & {$\rewindwindow{\rewindbit_{1}(2)}$}&$\rewindbit_{1}(2)$ & {$\rewindwindow{\rewindbit_{1}(3)}$}&$\rewindbit_{1}(3)$ &
			{$\rewindwindow{\rewindbit_{1}(4)}$}&$\rewindbit_{1}(4)$&$\rewindbit_{2}(1)$
			\\ \hline
			\multicolumn{1}{|c|}{A}&$0,0,1,1$ &\multicolumn{1}{|c|}{$0$} & $0,0,0,0$ &$0$&
			$0,0,0,0$&$0$ &$\zeroedbits{0},\detectederror{\zeroedbits{0}},\detectederror{\zeroedbits{0}},\detectederror{\zeroedbits{0}}$&$\detectederror{1}$&$1$\\ \hline
			\multicolumn{1}{|c|}{B}&$0,0,1,1$&\multicolumn{1}{|c|}{$0$}&$1,0,0,1$&${0}$&
			$0,0,0,0$&$0$&$0,\detectederror{1},\detectederror{1},\detectederror{1}$&$\detectederror{0}$&$1$\\ \hline
		\end{tabular}
\\ \\ 
\underline{ End of $\rewindwindow{\rewindbit_{2}(1)}$}. Calculate $\rewindbit_{2}(1)$.\\
The errors are detected so $\rewindbit^A_{2}(1)=\rewindbit^B_{2}(1)=1$, and the cursors are rewound to the beginning of the window : $\cursor^A=\cursor^B=0$.
\\	\\
\begin{tabular}{c|c|c|c|c|c|c|c|c|c|c|c|}\cline{2-10}
	&\multicolumn{8}{|c|}{$\rewindwindow{\rewindbit_{2}(1)}$}&\\ \cline{2-12}
	&{$\rewindwindow{\rewindbit_{1}(1)}$}&$\rewindbit_{1}(1)$ & {$\rewindwindow{\rewindbit_{1}(2)}$}&$\rewindbit_{1}(2)$ & {$\rewindwindow{\rewindbit_{1}(3)}$}&$\rewindbit_{1}(3)$ &
	{$\rewindwindow{\rewindbit_{1}(4)}$}&$\rewindbit_{1}(4)$&$\rewindbit_{2}(1)$&
	$\rewindwindow{\rewindbit_{1}(5)}$&${\rewindbit_{1}(5)}$
	\\ \hline
	\multicolumn{1}{|c|}{A}&$\zeroedbits{0},\zeroedbits{0},\zeroedbits{0},\zeroedbits{0}$&$\zeroedbits{0}$&
	$\zeroedbits{0},\zeroedbits{0},\zeroedbits{0},\zeroedbits{0}$&$\zeroedbits{0}$&
	$\zeroedbits{0},\zeroedbits{0},\zeroedbits{0},\zeroedbits{0}$&$\zeroedbits{0}$&
	$\zeroedbits{0},\zeroedbits{0},\zeroedbits{0},\zeroedbits{0}$&$\zeroedbits{0}$&
	${1}$&$0,0,1,1$&$0$\\ \hline
	\multicolumn{1}{|c|}{B}&$\zeroedbits{0},\zeroedbits{0},\zeroedbits{0},\zeroedbits{0}$&$\zeroedbits{0}$&
$\zeroedbits{0},\zeroedbits{0},\zeroedbits{0},\zeroedbits{0}$&$\zeroedbits{0}$&
$\zeroedbits{0},\zeroedbits{0},\zeroedbits{0},\zeroedbits{0}$&$\zeroedbits{0}$&
$\zeroedbits{0},\zeroedbits{0},\zeroedbits{0},\zeroedbits{0}$&$\zeroedbits{0}$&
${1}$&$0,0,1,1$&$0$\\ \hline
\end{tabular}
\\ \\ 
\underline{End of $\rewindwindow{\rewindbit_{1}(5)}$} The first four bits of the {protocol} are re-simulated. No errors. $\cursor^A=\cursor^B=4$.
\caption{Example for a rewind-if-error coding scheme with $\levelblock=4$. Detected error are in $\detectederror{{bold}}$, zeroed bits are in $\zeroedbits{{blue}}$. \label{table:simulationexample}}
\end{table*}


\section{Analysis of the coding scheme \label{section:schemeanalysis}: A Proof of Theorem~\ref{theorem:rewindifrateHamming}}
We start by giving the following notation:
\begin{itemize}
    \item $\ntrans$ is the number of times the channels are used for the protocol simulation, including retransmissions and excluding the overhead required for the transmission of the rewind bits.
	\item $\cursormin\dfn\min\{\cursor^A,\cursor^B\}$ is the minimum between Alice's and Bob's cursor at any moment
	\item $\cursormin({\ntrans}), \cursor^A({\ntrans}), \cursor^B({\ntrans})$ denote the respective values of 
	$\cursormin,\cursor^A, \cursor^B$ at the end of the simulation
	\item $\hat{\bs{\protTrans}}_A^{\cursormin({\ntrans})}$ and $\hat{\bs{\protTrans}}_B^{\cursormin({\ntrans})}$ denote the first ${\cursormin({\ntrans})}$ bits of Alice's and Bob's {simulations of the transcript} respectively, at the end of the simulation. We also assume that if $\cursor^{A}({\ntrans})>\protLength$ or $\cursor^{B}({\ntrans})>\protLength$ then the parties proceed the {protocol} by transmitting zeros 
	\item We denote $\rewindbitOr_{\levelind}(m)\dfn\rewindbit^A_{\levelind}(m)\vee\rewindbit^B_{\levelind}(m)$. Namely, $\rewindbitOr_{\levelind}(m)$ it is defined as the disjunction between Alice's and Bob's respective rewind bits
\end{itemize}
The following two error events will be analyzed
\begin{itemize}
\item $\mathcal{E}_1$ is the event in which $\cursormin({\ntrans})< \protLength$
\item $\mathcal{E}_2$ is the event in which either $\hat{\bs{\protTrans}}_A^{\cursormin({\ntrans})}\neq\bs{\protTrans}^{\cursormin({\ntrans})}$ or $\hat{\bs{\protTrans}}_B^{\cursormin({\ntrans})}\neq\bs{\protTrans}^{\cursormin({\ntrans})}$ 
\end{itemize}
The simulation error event is included in $\mathcal{E}_1\cup\mathcal{E}_2$ and we would like its respective probability to vanish with $\protLength$.

We start by analyzing $\Pr(\mathcal{E}_1)$ and do it by lower bounding $\cursormin({\ntrans})$. We recall that by construction of the scheme, $\rewindbit^A_{\levelind}(m)=1$ (resp. $\rewindbit^B_{\levelind}(m)=1$) will rewind $\cursor^A$ (resp. $\cursor^B$) to the value it had at the beginning of the rewind window. Namely $\cursor^A$ (resp. $\cursor^B$) will be reduced by at most $\levelblock^{\levelind}$. It is now instrumental to use the definitions of $\cursormin$ and $\rewindbitOr_{\levelind}(m)$ and observe that if either $\rewindbit^A_{\levelind}(m)=1$ or $\rewindbit^B_{\levelind}(m)=1$ (namely, if $\rewindbitOr_{\levelind}(m)=1$) then the minimal among $\cursor^A$ and $\cursor^B$ (namely, $\cursormin$) will be reduced by at most $\levelblock^{\levelind}$. 
Recalling that $\ntrans=\levelblock^{\nlevels}$ we can now write
\begin{align}
\cursormin(\ntrans)&\geq\ntrans-\sum_{\levelind=1}^\nlevels\sum_{m=1}^{\levelblock^{\nlevels-\levelind}}\rewindbitOr_{\levelind}(m)\levelblock^{\levelind}\\
&=\ntrans\left(1-\sum_{\levelind=1}^{\nlevels}\averewind\right),\label{eq:cursorA}
\end{align}
where 
\begin{align}\label{eq:blbardef}
\averewind\dfn
\frac{\sum_{m=1}^{\levelblock^{\nlevels-\levelind}}\rewindbitOr_{\levelind}(m)}
{\levelblock^{\nlevels-\levelind}}
\end{align}	
denotes the average number of active (i.e., nonzero) rewind bits at level $\levelind$.
We note that by construction of the scheme (including its use of randomness), the processes of the error generation and detection are identical for all blocks at level $\levelind$. For this reason, the probability of having an active rewind bit is also identical for all the blocks at level $\levelind$. We denote this probability by
\begin{align}
\Pbl = \Pr(\rewindbitOr_{\levelind}(1)=1)=...=
\Pr(\rewindbitOr_{\levelind}(\levelblock^{\nlevels-\levelind})=1).
\end{align}
Taking the expectation over \eqref{eq:cursorA} yields
\begin{align}\label{eq:expcursor}
\Expt\cursormin(\ntrans)\geq \ntrans\left(1-\sum_{\levelind=1}^\nlevels \Pbl\right).
\end{align}
In order to proceed with the calculation of $\Pbl$, we define $\Pfailure_{\levelind}$ as the probability that either $\rewindbit^A_{\levelind}(m)$ or $\rewindbit^B_{\levelind}(m)$ differ from the error indicator 
$\indfunc{\rewindwindow{\rewindbit^A_{\levelind}(m)}\neq\rewindwindow{\rewindbit^A_{\levelind}(m)}}$. This probability does not depend on $m$ due to the same considerations as above.

The following lemma bounds $\Pfailure_{\levelind}$:
\begin{lemma}\label{lemma:Pfailurebound}
For $\levelind=1$
\begin{align}
\Pfailure_{1}&\leq 
\frac{1}{2\levelblock}\left(1+2(\levelblock-1)(1-2\BSCprob)^{\frac{\levelblock}{2}}+(1-2\BSCprob)^\levelblock\right)\\
&-(1-\BSCprob)^\levelblock
+(3+\log\levelblock)\batdistance^{\repetitioncoeffH},
\end{align} 
and for $\levelind>1$
\begin{align}\label{eq:Pfailurel}
\Pfailure_{\levelind}&\leq\\
&\levelblock^{-\levelind}
\left(
\levelblock\Pfailure_{1}
+3\batdistance^{\repetitioncoeff+4}\levelblock^{2}\log\levelblock
\frac{2-\batdistance^2\levelblock}{(1-\batdistance^2\levelblock)^2}\right).
\end{align}
\end{lemma}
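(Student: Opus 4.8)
The plan is to bound $\Pfailure_\levelind$ by separating each failure of the layer-$\levelind$ rewind bits into two contributions: a failure of the error-detection \emph{algorithm} to return the correct verdict on the rewind-window pair $\rewindwindow{\rewindbit^A_\levelind(m)},\rewindwindow{\rewindbit^B_\levelind(m)}$ (a mis-detection, or a false detection), and a \emph{channel-noise} event corrupting one of the few bits that the detection protocol transmits. The noise contribution is handled uniformly by Lemma~\ref{lemma:repetition}: a bit sent with $r$ repetitions is decoded wrongly with probability at most $\batdistance^{r}$, so a union bound over the transmitted bits (each protected by repetition coding) controls it at every layer. The algorithmic contribution is then estimated using the extended-Hamming scheme of Definition~\ref{def:exhammingdetection} for $\levelind=1$ and the polynomial scheme of Definition~\ref{def:equalityPrivate} (Lemma~\ref{lemma:poly}) for $\levelind>1$.

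For $\levelind=1$ the two parties' uncoded simulation vectors over a window differ exactly by the channel noise, which is i.i.d.\ $\mathrm{Bernoulli}(\BSCprob)$; this is precisely the setting in which \eqref{eq:hammingmisdetect} was derived, so the false-detection probability is $0$ and the mis-detection probability equals the closed form there. Adding, via Lemma~\ref{lemma:repetition}, the probability that channel noise flips one of the $2+\log\levelblock\le 3+\log\levelblock$ bits transmitted by the scheme (the $1+\log\levelblock$ syndrome bits plus the one confirmation bit, each repeated $\repetitioncoeffH$ times), the union bound yields the stated bound on $\Pfailure_1$.

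For $\levelind>1$ I would set up a one-step recursion. The key structural claim is that the parties' layer-$\levelind$ rewind windows can differ only if at least one of the $\levelblock$ nested layer-$(\levelind-1)$ detections performed inside that window has failed in the sense of $\Pfailure_{\levelind-1}$: a correctly functioning layer-$(\levelind-1)$ detection re-synchronizes the two parties' cursors, uncoded simulations, and recorded lower-layer rewind bits, and since $\Pfailure_{\levelind-1}$ already subsumes every deeper failure no further descent is needed. Hence $\Pr\bigl(\rewindwindow{\rewindbit^A_\levelind(m)}\neq\rewindwindow{\rewindbit^B_\levelind(m)}\bigr)\le\levelblock\,\Pfailure_{\levelind-1}$ by a union bound. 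Conditioned on this disagreement the test point $U$ is fresh and uniform, the window length is at most $2\levelblock^{\levelind}$, and $q_\levelind\in[2\levelblock^{2+\levelind},4\levelblock^{2+\levelind}]$, so Lemma~\ref{lemma:poly} bounds the mis-detection probability by $\levelblock^{-2}$ (and false detections have probability $0$). Combining with the channel-noise term --- at most $3\levelind\log\levelblock$ transmitted bits by \eqref{eq:errordetectinobits}, each guarded by $\repetitioncoeff+2\levelind$ repetitions, contributing at most $3\levelind\log\levelblock\cdot\batdistance^{\repetitioncoeff+2\levelind}$ --- gives
\begin{align}
\Pfailure_\levelind\le\levelblock^{-1}\Pfailure_{\levelind-1}+3\levelind\log\levelblock\cdot\batdistance^{\repetitioncoeff+2\levelind}.
\end{align}

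Finally I would unroll this down to the base case, obtaining $\Pfailure_\levelind\le\levelblock^{-(\levelind-1)}\Pfailure_1+\sum_{j=2}^{\levelind}\levelblock^{j-\levelind}\cdot3j\log\levelblock\cdot\batdistance^{\repetitioncoeff+2j}$, then enlarge the (nonnegative) sum to all $j\ge2$ and evaluate the series via $\sum_{j\ge2}j\,x^{j}=x^{2}(2-x)/(1-x)^{2}$ with $x=\batdistance^{2}\levelblock$ --- which converges because the hypothesis $\levelblock\le\tfrac{1}{8\BSCprob}$ forces $\batdistance^{2}\levelblock<\tfrac{1}{2}$ --- collecting everything into $\levelblock^{-\levelind}\bigl(\levelblock\Pfailure_1+3\batdistance^{\repetitioncoeff+4}\levelblock^{2}\log\levelblock\cdot\tfrac{2-\batdistance^{2}\levelblock}{(1-\batdistance^{2}\levelblock)^{2}}\bigr)$, i.e.\ \eqref{eq:Pfailurel}. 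I expect the main obstacle to be rigorously justifying the structural claim behind $\Pr(\rewindwindow{\rewindbit^A_\levelind(m)}\neq\rewindwindow{\rewindbit^B_\levelind(m)})\le\levelblock\,\Pfailure_{\levelind-1}$: one needs the recursive decomposition of the protocol to be phrased so that each layer-$\levelind$ window is analyzed from a synchronized starting state, that a non-failing lower-layer detection truly cleans the state (including the bits recorded for nested rewind bits), and that the possibility of a window starting ``dirty'' is correctly charged to the deeper window that caused it rather than double-counted.
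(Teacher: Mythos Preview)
Your proposal is correct and follows essentially the same approach as the paper: the same union-bound decomposition into algorithmic failure plus channel noise, the same recursion $\Pfailure_\levelind\le\levelblock^{-1}\Pfailure_{\levelind-1}+3\levelind(\log\levelblock)\batdistance^{\repetitioncoeff+2\levelind}$, and the same closed-form evaluation of the unrolled sum. The only cosmetic difference is that the paper counts exactly $3+\log\levelblock$ transmitted bits at layer~$1$ (it reserves one extra bit, beyond the $2+\log\levelblock$ of Definition~\ref{def:exhammingdetection}, for the erasure-signalling mechanism of Section~\ref{subsection:tieerasure}), whereas you count $2+\log\levelblock$ and then loosen to $3+\log\levelblock$; either way the stated bound holds.
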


\begin{proof}
For the first layer
\begin{align}\label{eq:Pfailure1H}
\Pfailure_{1}&\leq\\
&\Pr\left(NEQ=0, \undcodedSim^A\neq\undcodedSim^B\right)+(3+\log\levelblock)\batdistance^{\repetitioncoeffH},
\end{align}
where $\Pr\left(NEQ=0, \undcodedSim^A\neq\undcodedSim^B\right)$ is the error mis-detection probability of the extended-Hamming code based error detection scheme of Definition~~\ref{def:exhammingdetection} as given in \eqref{eq:hammingmisdetect}. $\batdistance^{\repetitioncoeffH}$ is the probability of error in the decoding of a bit sent with $\repetitioncoeffH$ repetitions according to Lemma~\ref{lemma:repetition}. The multiplication by $(3+\log\levelblock)$ accounts for the union bound over the number of bits used for the error detection: $2+\log\levelblock$ bits sent from Alice to Bob ($1+\log\levelblock$ required for the description of the syndrome according to Lemma~\ref{def:exhammingdetection} and an additional bit reserved for avoiding the random tie breaking as described in Subsection~\ref{subsection:tieerasure}) and a single bit fed back from Bob to Alice. 

The key idea in the analysis of the scheme for $\levelind>1$ is regarding the calculation of the rewind bits as a \textit{layered} recursive process. Namely, we observe that by construction, a rewind window at level $\levelind$ comprises $\levelblock$ rewind windows of level $\levelind-1$. In addition, the polynomial based randomized error detection of Definition~\ref{def:equalityPrivate} uses independent test points for every layer and hence is independent between layers.
Having this notion we can write the following recursion formula:
\begin{align}\label{eq:PelRecurssionBasic}
\hspace{-5mm}
\Pfailure_{\levelind}&\leq \levelblock^{-2}\levelblock\Pfailure_{\levelind-1}+(2+(2+\levelind)\log\levelblock)\batdistance^{\repetitioncoeff+2\levelind}
\end{align}
where $\levelblock\Pfailure_{\levelind-1}$ is the union bound over th error events of the previous level.
The multiplication by $\levelblock^{-2}$ accounts for the probability of error mis-detection according to Lemma~\ref{def:equalityPrivate} with the setting $\Pmd=\levelblock^{-2}$ and $\ell$ as the number of bits in the appropriate rewind window $\rewindwindow{\rewindbit^A_{\levelind}(m)}$ (or $\rewindwindow{\rewindbit^B_{\levelind}(m)}$). As described above, for the error detection, Alice should send Bob a number in $\GFq$ and Bob should reply with a single bit (we assume that the set of test points $\mathcal{U}$ is jointly drawn by the parties using common randomness). We recall that the number of bits required for the error detection scheme of Definition~\ref{def:equalityPrivate} is generously bounded by $3\levelind\log \levelblock$ due to \eqref{eq:errordetectinobits}.
 All in all, we can rewrite \eqref{eq:PelRecurssionBasic} as
\begin{align}
\Pfailure_{\levelind}\leq\levelblock^{-1}\Pfailure_{\levelind-1}+
3\batdistance^{\repetitioncoeff}(\log\levelblock)\levelind\batdistance^{2\levelind}
\label{eq:PelRecurssion}
\end{align}
Solving the recursion of \eqref{eq:PelRecurssion} with the initial condition in \eqref{eq:Pfailure1H} we can  bound $\Pfailure_{\levelind}$ as follows:
\begin{align}
&\Pfailure_{\levelind}
\leq \levelblock^{1-\levelind}\Pfailure_{1}
+3\batdistance^{\repetitioncoeff}(\log\levelblock)\sum_{j=2}^\levelind j \batdistance^{2j}
\levelblock^{j-\levelind}\\
&\leq \levelblock^{1-\levelind}\Pfailure_{1}
+3\batdistance^{\repetitioncoeff}(\log\levelblock)\levelblock^{-\levelind}
\sum_{j=2}^{\infty} j (\batdistance^{2}
\levelblock)^{j}\label{eq:sumbeta}\\
&= 
\levelblock^{1-\levelind}\Pfailure_{1}
+3\batdistance^{\repetitioncoeff}(\log\levelblock)
\levelblock^{-\levelind}
(\batdistance^{2}\levelblock)^2
\frac{2-\batdistance^2\levelblock}{(1-\batdistance^2\levelblock)^2}
\\
&= \levelblock^{-\levelind}
\left(
\levelblock\Pfailure_{1}
+3\batdistance^{\repetitioncoeff+4}\levelblock^{2}\log\levelblock
\frac{2-\batdistance^2\levelblock}{(1-\batdistance^2\levelblock)^2}
\right)\label{eq:PelH}.
\end{align}
We note that the assumption in Theorem~\ref{theorem:rewindifrateHamming} that $\BSCprob<1/(8\levelblock)$ ensures that 
$\batdistance^2\levelblock<1$ ensuring that the infinite sum in \eqref{eq:sumbeta} converges.
\end{proof}		
We are now ready to bound $\Pbl$. We recall that it is defined as the probability that 
either $\rewindbit^A_{\levelind}(m)=1$ or $\rewindbit^B_{\levelind}(m)=1$, and is independent of $m$ due to the symmetry of the scheme. For $\levelind=1$ we use the union bound over the probability of an erroneous bit and a communication error:
\begin{align}
\Pbone\leq  \levelblock\BSCprob+(3+\log\levelblock)\batdistance^{\repetitioncoeffH}
\dfn \Pbonebound\label{eq:PboueboundH}.
\end{align}
Similarly, for $\levelind>1$ we take the union bound over the probability of error $\Pfailure_{\levelind-1}$, in one of the $\levelblock$ blocks in the layer $\levelind-1$ and a communication error: 
\begin{align}
\Pbl&\leq  \levelblock\Pfailure_{\levelind-1}+
3\batdistance^{\repetitioncoeff}(\log\levelblock)\levelind\batdistance^{2\levelind}\\
&\leq 
\levelblock^{2-\levelind}
\left(
\levelblock\Pfailure_{1}
+3\batdistance^{\repetitioncoeff+4}\levelblock^{2}\log\levelblock
\frac{2-\batdistance^2\levelblock}{(1-\batdistance^2\levelblock)^2}
\right)\\
&+3\batdistance^{\repetitioncoeff}(\log\levelblock)\levelind\batdistance^{2\levelind}\\
&\dfn\Pblbound\label{eq:PblboundH}.
\end{align}
Let us now bound the average rewind by
\begin{align}\label{eq:javerage}
\Expt&\cursormin(\ntrans)\geq \ntrans\left(1-\sum_{\levelind=1}^\infty \Pblbound\right)= \ntrans\zeta.
\end{align}
where
\begin{align}
\zeta &\dfn1-\sum_{\levelind=1}^\infty \Pblbound\\
 & = 1- \levelblock\BSCprob-(3+\log\levelblock)\batdistance^{\repetitioncoeffH}\\
&-
\frac{\levelblock^2}{\levelblock-1}
\left(
\Pfailure_{1}
+3\batdistance^{\repetitioncoeff+4}\levelblock\log\levelblock
\frac{2-\batdistance^2\levelblock}{(1-\batdistance^2\levelblock)^2}
\right)\\
&-
3\batdistance^{\repetitioncoeff+4}\log\levelblock
\frac{2-\batdistance^2}{(1-\batdistance^2)^2}.
\end{align}
Setting
\begin{align}\label{eq:ntransprotlength}
\ntrans=\frac{\protLength}{1-\sum_{\levelind=1}^\infty \Pblbound-\xi}=
\frac{\protLength}{\zeta-\xi}
\end{align}
for some $0<\xi<\zeta$ will therefore ensure that $\Expt\cursormin(\ntrans)\geq\protLength$. The following lemma ensures {that} $\Pr(\mathcal{E}_1)$ also vanishes in $\protLength$:
\begin{lemma}\label{lemma:concnetration}
	For any $\xi>0$ and $\ntrans$ that satisfies \eqref{eq:ntransprotlength}:
	\begin{align}\label{eq:concentrationR}
	\lim_{\protLength\to \infty}\Pr(\mathcal{E}_1)=\lim_{\protLength\to \infty}\Pr\left(\cursormin(\ntrans)< \protLength\right)=0.
	\end{align}
\end{lemma}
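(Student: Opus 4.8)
The plan is to recast the event $\mathcal{E}_1$ as a large-deviation event for the total number $S$ of uncoded-simulation steps wasted by rewinds, and then dispatch it with Chebyshev's inequality. Concretely, set $S\dfn\sum_{\levelind=1}^{\nlevels}\sum_{m=1}^{\levelblock^{\nlevels-\levelind}}\levelblock^{\levelind}\rewindbitOr_{\levelind}(m)=\ntrans\sum_{\levelind=1}^{\nlevels}\averewind$. By \eqref{eq:cursorA} we have $\cursormin(\ntrans)\ge\ntrans-S$, so $\mathcal{E}_1\subseteq\{S>\ntrans-\protLength\}$. On the other hand $\Expt S=\ntrans\sum_{\levelind=1}^{\nlevels}\Pbl\le\ntrans\sum_{\levelind=1}^{\infty}\Pblbound=\ntrans(1-\zeta)$, using $\Pbl\le\Pblbound$ (from \eqref{eq:PboueboundH}, \eqref{eq:PblboundH}) and the definition of $\zeta$, while \eqref{eq:ntransprotlength} gives $\ntrans-\protLength=\ntrans(1-\zeta+\xi)\ge\Expt S+\ntrans\xi$. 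Hence $\mathcal{E}_1\subseteq\{\,|S-\Expt S|>\ntrans\xi\,\}$, and it will suffice to prove $\mathrm{Var}(S)=o(\ntrans^{2})$, since then $\Pr(\mathcal{E}_1)\le\mathrm{Var}(S)/(\ntrans\xi)^{2}\to0$.

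To bound the variance I would split $S=\sum_{\levelind=1}^{\nlevels}S_{\levelind}$ with $S_{\levelind}\dfn\levelblock^{\levelind}\sum_{m=1}^{\levelblock^{\nlevels-\levelind}}\rewindbitOr_{\levelind}(m)$ and exploit the layered structure of the scheme. The crucial observation is that, for a fixed layer $\levelind$, the rewind bits $\rewindbitOr_{\levelind}(1),\dots,\rewindbitOr_{\levelind}(\levelblock^{\nlevels-\levelind})$ are mutually \emph{independent}: $\rewindbitOr_{\levelind}(m)$ is a deterministic function only of the channel-noise realizations and the (common) test points generated within the $m$-th level-$\levelind$ window -- namely its uncoded-simulation bits, the level-$<\levelind$ error-detection exchanges nested in it, and its own level-$\levelind$ error-detection exchange -- and these random resources occupy disjoint, deterministically located channel uses for distinct $m$; that they are moreover identically distributed with parameter $\Pbl$ has already been argued (mere identical distribution would not suffice here). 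Consequently $\mathrm{Var}(S_{\levelind})=\levelblock^{2\levelind}\levelblock^{\nlevels-\levelind}\Pbl(1-\Pbl)\le\ntrans\,\levelblock^{\levelind}\Pblbound$.

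The remaining (routine) step is to check that $\levelblock^{\levelind}\Pblbound\le D$ for a constant $D$ independent of $\levelind$: for $\levelind=1$ it equals $\levelblock\,\Pbonebound$ by \eqref{eq:PboueboundH}, and for $\levelind>1$, substituting \eqref{eq:PblboundH} yields $\levelblock^{\levelind}\Pblbound=\levelblock^{2}C_{1}+C_{2}\,\levelind\,(\batdistance^{2}\levelblock)^{\levelind}$ for constants $C_{1},C_{2}$ independent of $\levelind$, and since the hypothesis $\BSCprob<1/(8\levelblock)$ forces $\batdistance^{2}\levelblock<1$, the factor $\levelind(\batdistance^{2}\levelblock)^{\levelind}$ is bounded uniformly over $\levelind$. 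Thus $\mathrm{Var}(S_{\levelind})\le\ntrans D$ for every $\levelind$, and applying Cauchy--Schwarz to each covariance, $\mathrm{Var}(S)=\sum_{\levelind,\levelind'}\mathrm{Cov}(S_{\levelind},S_{\levelind'})\le\bigl(\sum_{\levelind=1}^{\nlevels}\sqrt{\mathrm{Var}(S_{\levelind})}\bigr)^{2}\le\nlevels^{2}\ntrans D$. Combining with the first step, $\Pr(\mathcal{E}_1)\le\mathrm{Var}(S)/(\ntrans\xi)^{2}\le\nlevels^{2}D/(\ntrans\xi^{2})$; since $\ntrans=\levelblock^{\nlevels}$ is linear in $\protLength$ by \eqref{eq:ntransprotlength}, we have $\nlevels=\log_{\levelblock}\ntrans=\Theta(\log\protLength)$, so the bound is $\Theta((\log\protLength)^{2}/\protLength)\to0$.

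The step I expect to be the main obstacle is the variance bound. The number of layers $\nlevels$ grows like $\log_{\levelblock}\protLength$ and the per-layer penalty weights $\levelblock^{\levelind}$ blow up, so a priori the few, heavily-weighted rewind bits in the top layers could destroy concentration. The argument goes through precisely because $\Pbl$ decays like $\levelblock^{-\levelind}$ (already captured by $\Pblbound$), which keeps $\mathrm{Var}(S_{\levelind})=O(\ntrans)$ \emph{uniformly} over all layers, together with the within-layer independence of the rewind bits; the merely logarithmic growth of $\nlevels$ then leaves ample slack for Chebyshev to close the bound. Making the within-layer independence claim fully rigorous -- i.e.\ identifying the disjoint, deterministically located blocks of channel uses and common-randomness bits on which each $\rewindbitOr_{\levelind}(m)$ depends -- is the one place where the detailed bookkeeping of the scheme genuinely enters.
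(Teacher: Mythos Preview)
Your Chebyshev route is genuinely different from the paper's and, granting within-layer independence, is correct and more elementary. The paper instead union-bounds over layers, splits into a low range $\levelind\le\tfrac{3}{4}\nlevels$ and a high range, dispatches the high range by the crude bound $\Pr(\averewind>0)\le\levelblock^{\nlevels-\levelind}\Pbl$ (which works because $\Pbl$ decays like $\levelblock^{-\levelind}$), and for the low range proves a Chernoff--Hoeffding inequality adapted via iterated H\"older to \emph{groups} of dependent Bernoulli variables (their Lemma~\ref{lemma:hoefdingextended}). The reason for this machinery is that the appendix proof is written to also cover the reduced-randomness variant of Section~\ref{section:derandomize}, where only $\levelblock^{\lceil(\nlevels-\levelind)/2\rceil}$ fresh test points are available per layer, so one has independence only across groups of size $\levelblock^{\lfloor(\nlevels-\levelind)/2\rfloor}$; your argument as stated proves the lemma only for the baseline fresh-$U$-per-block scheme and would need a group-variance modification to support the derandomization.

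There is also a gap in your independence justification that goes beyond bookkeeping. Your claim that $\rewindbitOr_{\levelind}(m)$ depends only on noise and test points inside window $m$ is correct for $\levelind=1$ (the extended-Hamming comparison sees only $\undcodedSim^A\oplus\undcodedSim^B$, which is the noise vector) but not for $\levelind>1$: the polynomial comparison of Definition~\ref{def:equalityPrivate} is over $\GFql$, so $A(U,\rewindwindow{\rewindbit^A_{\levelind}(m)})-B(U,\rewindwindow{\rewindbit^B_{\levelind}(m)})$ depends on the \emph{actual} window contents, not just their XOR, and those contents are determined by the cursor state at the start of window $m$, which in turn depends on the rewind decisions in windows $1,\dots,m-1$. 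The paper flags exactly this point in Section~\ref{section:derandomize} (``the `$-$' operation is done over $\GFql$ and not over $\binarySpace$''). What does survive is that the \emph{conditional} probability $\Pr(\rewindbitOr_{\levelind}(m)=1\mid\text{past})$ is bounded by $\Pblbound$ pointwise, which is enough for a stochastic-domination-by-i.i.d.\ argument; but it is not the literal independence your variance computation $\mathrm{Var}(S_\levelind)=\levelblock^{2\levelind}\levelblock^{\nlevels-\levelind}\Pbl(1-\Pbl)$ invokes. To close the gap you would need either that coupling or a direct bound on the pairwise covariances.
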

The proof is in Appendix~\ref{appendix:concentration}. It is based on the fact that
due to \eqref{eq:javerage} and \eqref{eq:ntransprotlength} we have $\Expt\cursormin(\ntrans)\geq(1+\eta) \protLength$ for some $\eta>0$ and using standard concentration techniques. We note that the proof assumes the number of test points in $\mathcal{U}$ is $|\mathcal{U}|=O(\sqrt{\protLength})$, whereas so far we assumed that every use of the error detection procedure of Definition~\ref{def:equalityPrivate} uses a different test point ({i.e. $|\mathcal{U}|=O(\protLength)$}). Since $|\mathcal{U}|=O(\sqrt{\protLength})$ is restrictive,
Lemma~\ref{lemma:concnetration} also holds for the current description of the scheme. The motivation for reducing $|\mathcal{U}|$ is changing the common randomness to private randomness, which is extracted from the channel, and is elaborated in Section~\ref{section:derandomize}.

The following lemma ensures $\Pr(\mathcal{E}_2)$ vanishes in $\protLength$:
\begin{lemma}\label{lemma:PeFinal}
For any $\xi>0$ and $\ntrans$ that satisfies \eqref{eq:ntransprotlength} 
\begin{align}
\lim_{\protLength\to \infty}\Pr(\mathcal{E}_2)=0.
\end{align}
\end{lemma}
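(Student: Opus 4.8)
The plan is to show that conditioned on the complement of $\mathcal{E}_1$ (i.e., on $\cursormin(\ntrans)\geq\protLength$, which by Lemma~\ref{lemma:concnetration} has probability tending to one), the event $\mathcal{E}_2$ occurs only if at least one rewind bit somewhere in the scheme is computed incorrectly relative to the true error indicator of its window — and the expected number of such failures is $o(1)$. More precisely, I would first argue the following deterministic ``correctness'' claim: if every rewind bit $\rewindbit^A_{\levelind}(m)$ and $\rewindbit^B_{\levelind}(m)$ computed during the run equals the true indicator $\indfunc{\rewindwindow{\rewindbit^A_{\levelind}(m)}\neq\rewindwindow{\rewindbit^B_{\levelind}(m)}}$ (so each detection step behaves perfectly), and additionally the uncoded simulation bits within any window that is \emph{not} rewound are consistent between Alice and Bob, then at every time $\tind$ the parties agree on their cursors and on the corresponding prefixes of their transcript simulations, and moreover these prefixes coincide with the true transcript $\bs{\protTrans}$. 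This is the heart of the rewind-if-error mechanism: an undetected error in a window is impossible under the hypothesis, and a detected error triggers a full-block rewind that zeroes out the offending bits, so the only way the simulations can be wrong at the cursor $\cursormin(\ntrans)$ is if some detection step failed or some discrepancy in a non-rewound window went unnoticed — but a discrepancy in a non-rewound window is exactly a detection failure at layer~$1$ (the Hamming step), since the layer-$1$ window consists of raw uncoded bits.

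Next I would bound the probability of the complementary (bad) event. Let $F$ be the event that some rewind bit computed during the run differs from the true indicator of its window, \emph{or} that the cursors of Alice and Bob fail to meet the agreement property. By the union bound over all rewind-bit computations — of which there are at most $\sum_{\levelind=1}^{\nlevels}\levelblock^{\nlevels-\levelind} = \frac{\ntrans - 1}{\levelblock - 1} \le \ntrans$, and $\ntrans = \Theta(\protLength)$ by \eqref{eq:ntransprotlength} — and using the definition of $\Pfailure_{\levelind}$ together with the bounds of Lemma~\ref{lemma:Pfailurebound}, we get
\begin{align}
\Pr(F) \le \sum_{\levelind=1}^{\nlevels}\levelblock^{\nlevels-\levelind}\Pfailure_{\levelind}
\le \ntrans\left(\Pfailure_{1} + \sum_{\levelind=2}^{\nlevels}\Pfailure_{\levelind}\right),
\end{align}
which is \emph{not} $o(1)$ — $\Pr(F)$ is in fact $\Theta(\protLength)$ times a constant. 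So a naive union bound over all windows is far too weak, and this is precisely the main obstacle.

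The resolution, which I would carry out carefully, is that $\mathcal{E}_2$ does not require \emph{all} rewind bits to be correct — only those pertaining to windows that actually affect the first $\cursormin(\ntrans)\ge\protLength$ transcript bits held at termination. Because a failure at a high layer causes a large rewind that ``erases'' the consequences of many lower-layer decisions, the right quantity to track is the number of \emph{uncorrected} failures: a layer-$\levelind$ failure is harmless if it is itself caught and rewound by a higher-layer detection. I would formalize this via the same recursive/nested structure used in Lemma~\ref{lemma:Pfailurebound}: define $G_{\levelind}$ as the event that the final transcript prefix is corrupted due to a layer-$\levelind$ detection failure that is never subsequently corrected, show $\Pr(\mathcal{E}_2 \setminus \mathcal{E}_1) \le \sum_{\levelind} \Pr(G_{\levelind})$, and bound $\Pr(G_{\levelind})$ by the probability that the \emph{topmost} layer-$\levelind$ rewind window covering the relevant position fails — of which there are only $O(1)$ per layer near the end (indeed the last window at each layer dominates), giving $\Pr(G_{\levelind}) = O(\Pfailure_{\levelind} + \batdistance^{\repetitioncoeff + 2\levelind})$ per layer rather than $O(\levelblock^{\nlevels-\levelind}\Pfailure_{\levelind})$. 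Summing over $\levelind=1,\dots,\nlevels$ and using that $\Pfailure_{\levelind}$ decays geometrically in $\levelind$ (by \eqref{eq:Pfailurel}, since $\batdistance^2\levelblock < 1$), the total is $O(\Pfailure_{1}) = o(1)$ as... wait, $\Pfailure_1$ is a fixed constant, not $o(1)$ — so the final step must instead invoke that the number of \emph{topmost-window} failures affecting the output is governed by whether the single last block at each layer fails, and that an error there is corrected with high probability by the scheme running slightly longer (the ``$(1+\eta)\protLength$'' slack from Lemma~\ref{lemma:concnetration}); combining the concentration argument with the layered failure accounting yields $\Pr(\mathcal{E}_2)\to 0$. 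I expect this bookkeeping — correctly identifying which failures are ``fatal'' versus ``self-correcting,'' and exploiting the $\Theta(\protLength)$ slack in $\ntrans$ so that the tail of the simulation re-runs any corrupted final blocks — to be the delicate part of the argument; everything else is a union bound plus the concentration estimate already granted by Lemma~\ref{lemma:concnetration}.
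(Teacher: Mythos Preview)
Your proposal overcomplicates the argument and, as you yourself notice, does not close: the naive union bound is too weak, and your attempted refinement via events $G_{\levelind}$ still terminates at a sum that is $\Theta(\Pfailure_{1})$, a fixed constant, after which you fall back on a vague appeal to the slack in $\ntrans$ ``re-running corrupted final blocks.'' That last idea is neither needed nor easy to make precise.

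What you are missing is that the recursion already does all the bookkeeping you are trying to reinvent. Recall that by construction $\ntrans=\levelblock^{\nlevels}$, so there is exactly \emph{one} rewind window at the top layer $\nlevels$, covering the entire run. The quantity $\Pfailure_{\nlevels}$ is, by definition, the probability that this single top-layer rewind bit disagrees with the true error indicator of its window; and because the rewind window at layer $\nlevels$ contains every uncoded-simulation bit and every nested rewind bit of lower layers, the recursive bound of Lemma~\ref{lemma:Pfailurebound} already folds in the possibility of lower-layer failures that are not caught higher up. In other words, $\Pfailure_{\nlevels}$ is precisely the ``probability that an uncorrected discrepancy survives to termination'' that you were trying to assemble via the $G_{\levelind}$ events. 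If the top-layer bit is computed correctly, then either the two top-layer windows agree (whence both transcript simulations are correct on their common prefix) or they disagree and both parties rewind to cursor $0$, in which case $\mathcal{E}_2$ is vacuously false. Hence $\Pr(\mathcal{E}_2)\leq \Pfailure_{\nlevels}$ directly. Plugging $\levelind=\nlevels=\log_{\levelblock}\ntrans$ into \eqref{eq:Pfailurel} gives $\Pfailure_{\nlevels}=O(\levelblock^{-\nlevels})=O(1/\ntrans)=O(1/\protLength)\to 0$. No sum over layers, no conditioning on $\mathcal{E}_1^c$, and no concentration argument is required for this part.
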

\begin{proof}
We remind the reader that $\Pfailure_{\levelind}$ is defined as the probability that either $\rewindbit^A_{\levelind}(m)$ or $\rewindbit^B_{\levelind}(m)$ differ from the error indicator 
$\indfunc{\rewindwindow{\rewindbit^A_{\levelind}(m)}\neq\rewindwindow{\rewindbit^A_{\levelind}(m)}}$. Namely, it is the probability of an undetected error, or a falsely detected error, in the simulation of a block in layer $\levelind$ at least at one party. Since $\nlevels$ is the final layer, and due to the recursive structure of the error detection, $\Pfailure_{\nlevels}$ therefore upper bounds the respective probability at the end of the coding scheme. The error event related to $\Pfailure_{\nlevels}$ includes $\mathcal{E}_2$ and therefore $\Pr(\mathcal{E}_2)\leq\Pfailure_{\nlevels}$. Rewriting \eqref{eq:Pfailurel} and setting $\levelind=\nlevels=\log_\levelblock\ntrans=\log_\levelblock(\protLength/(\zeta-\xi))$ we obtain:
	\begin{align}
	&\Pr(\mathcal{E}_2)\\
	&\leq
	\frac{\zeta-\xi}{\protLength}
	\left(
	\levelblock\Pfailure_{1}
	+3\batdistance^{\repetitioncoeff+4}\levelblock^{2}\log\levelblock
	\frac{2-\batdistance^2\levelblock}{(1-\batdistance^2\levelblock)^2}\right).
	\end{align}
	Therefore $\lim_{\protLength\to \infty}\Pr(\mathcal{E}_2)=0$.
\end{proof}

Let us now bound $\sigmalength$, the total number of channel uses consumed by the scheme:
\begin{align}
&\sigmalength\leq\ntrans+\repetitioncoeffH(3+\log\levelblock)k^{\nlevels-1}\\
&+3\log\levelblock\sum_{\levelind=2}^{\infty}\levelind(\repetitioncoeff+2\levelind)\levelblock^{\nlevels-\levelind}
\\
&\leq\ntrans\left(1+\tfrac{\repetitioncoeffH(3+\log\levelblock)}{\levelblock}\right.\\
&+3\log\levelblock\left.\left[
\tfrac{\repetitioncoeff(2\levelblock-1)}{k(\levelblock-1)^2}
+\tfrac{4\levelblock}{(\levelblock-1)^3}
+\tfrac{4\levelblock-2}{\levelblock(\levelblock-1)^2}
\right]
\right),\label{eq:errordetoverhead}
\end{align}
where $\repetitioncoeffH(3+\log\levelblock)k^{\nlevels-1}$ is the number of channel uses required for the error detection at the first layer, and $3\log\levelblock\sum_{\levelind=2}^{\infty}\levelind(\repetitioncoeff+2\levelind)\levelblock^{\nlevels-\levelind}$ is the number of channel uses required for the error detection in all other layers.
Using \eqref{eq:ntransprotlength} and \eqref{eq:errordetoverhead} we can bound the total rate of the scheme by the term in \eqref{eq:totalratebound}.
\begin{figure*}[!hb]
\rule{\textwidth}{1pt}
\begin{align}\label{eq:totalratebound}
\RBsc(\BSCprob,\levelblock)\geq
\frac
{
1- \levelblock\BSCprob-(3+\log\levelblock)\batdistance^{\repetitioncoeffH}
-\frac{\levelblock^2}{\levelblock-1}
\left(
\Pfailure_{1}
+3\batdistance^{\repetitioncoeff+4}\levelblock\log\levelblock
\frac{2-\batdistance^2\levelblock}{(1-\batdistance^2\levelblock)^2}
\right)
-
3\batdistance^{\repetitioncoeff+4}\levelblock^{2}\log\levelblock
\frac{2-\batdistance^2\levelblock}{(1-\batdistance^2\levelblock)^2}
-\xi
}
{
1+\frac{\repetitioncoeffH(3+\log\levelblock)}{\levelblock}
+3\log\levelblock
\left[
\frac{\repetitioncoeff(2\levelblock-1)}{k(\levelblock-1)^2}
+\frac{4\levelblock}{(\levelblock-1)^3}
+\frac{4\levelblock-2}{\levelblock(\levelblock-1)^2}
\right]
}.
\end{align}
\end{figure*}
Since this term holds for and $\xi>0$, we can take the limit $\xi\to 0$.
Setting $\repetitioncoeff=3$ and $\repetitioncoeffH=5$ (which are the results of an exhaustive search over various possible values) provides 
\eqref{eq:hammingrate} and conclude the proof of  Theorem~\ref{theorem:rewindifrateHamming}. 


\section{Generalization to binary memoryless symmetric channels\label{section:bms}}
In Definition~\ref{def:BMS} we defined a \textit{binary memoryless symmetric} (BMS) channel as a collection of BSC's with various crossover probabilities. 


We now extend the notion of repetition coding of Lemma~\ref{lemma:repetition} to BMS channels.
\begin{definition}\label{def:rhorepetition}[$\rho$-repetition channel]
Let $P^{(\rho)}_{\tilde{Y}|\tilde{X}}$ be the $\rho$-repetition channel corresponding to a BMS($P_{Y|X}$) channel. It is obtained by using 
the bit $\tilde{X}$ as the input of BMS($P_{Y|X}$), $\rho$ consecutive times, hence producing the series of channel outputs $Y_1,...,Y_{\rho}$. The output of $P^{(\rho)}_{\tilde{Y}|\tilde{X}}$ is then calculated using the following equation
\begin{align}\label{eq:MLdecrule}
\tilde{Y}=\argmax_{x\in\{0,1\}}\prod_{i=1}^\rho P_{Y|X=x}(Y_i),
\end{align}
where ties are broken by drawing a $\Ber(1/2)$ random variable  \footnote{If Y is continuous, replace $P_{Y|X=x}$ with the conditional density.}.
\end{definition}
We note that like in the BSC case, we randomly break the ties in order to facilitate the analysis and later explain in Subsection~\ref{subsection:ties} how this random procedure can be circumvented. The following lemma bounds the decoding error of the $\rho$-repetition channel.
\begin{lemma}\label{lemma:BMSrepetition}
	For any BMS($P_{Y|X}$) channel with Shannon capacity $\Cshannon(P_{Y|X})=C$ the corresponding $\rho$-repetition channel $P^{(\rho)}_{\tilde{Y}|\tilde{X}}$ is a $\mathrm{BSC}(\delta)$ with $\delta\leq \batdistance^{\rho}$, where 
$\batdistance=2\sqrt{
	h^{-1}(1-C)\cdot
	\left(1-h^{-1}(1-C)\right)}$ is the Bhattacharyya parameter of a BSC($\BSCprob$) with capacity $C$.
	%
\end{lemma}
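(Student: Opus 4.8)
The plan is to reduce the $\nrepetitions$-repetition channel to a weighted-majority decision over a family of independent BSCs, bound its error probability by a Chernoff argument conditioned on the realized noise levels, and then convert the resulting channel-dependent estimate into the stated universal bound via the extremal property of the Bhattacharyya parameter.

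First I would use Definition~\ref{def:BMS} to replace each output $Y_i$ by the sufficient statistic $g(Y_i)=(X\oplus Z_{T_i},T_i)$, where $T_1,\dots,T_\nrepetitions$ are i.i.d.\ copies of $T$ and, conditionally on $T_i=t_i$, $Z_{T_i}\sim\Ber(t_i)$; the ML rule of Definition~\ref{def:rhorepetition} and its error probability are unchanged by this replacement. Setting $V_i=X\oplus Z_{T_i}$ and $w_i=\log\frac{1-T_i}{T_i}\ge 0$ (nonnegativity uses $T_i\le\tfrac12$), the ML decision becomes the weighted majority vote $\tilde Y=\argmin_{x\in\{0,1\}}\sum_{i:\,V_i\neq x}w_i$, with ties resolved by a fair coin. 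Since swapping the labels $0\leftrightarrow 1$ is a symmetry both of this decision rule (the tie pattern is preserved under complementing all the $V_i$) and of the channel, the induced channel $X\mapsto\tilde Y$ is a $\mathrm{BSC}(\delta)$, which settles the structural part of the claim; it then suffices to bound $\delta=\Pr(\tilde Y\neq X\mid X=0)$, and for this an error, counting a tie as an error, forces $\sum_i(2Z_{T_i}-1)w_i\ge 0$.

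Next I would apply the Chernoff bound with parameter $\tfrac12$, condition on $(T_1,\dots,T_\nrepetitions)$, and use the conditional independence of the $Z_{T_i}$'s to factorize the resulting expectation; a one-line computation using $e^{w_i}=(1-T_i)/T_i$ shows that each factor equals $2\sqrt{T_i(1-T_i)}$, the Bhattacharyya parameter of $\mathrm{BSC}(T_i)$. Averaging over the i.i.d.\ levels gives $\delta\le\big(\Expt[2\sqrt{T(1-T)}]\big)^{\nrepetitions}$, and $\Expt[2\sqrt{T(1-T)}]$ is precisely the Bhattacharyya parameter of the BMS channel (it is the $T$-mixture of the Bhattacharyya parameters of the component BSCs, and the parameter is invariant under passage to a sufficient statistic). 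Specializing to deterministic $T$ recovers Lemma~\ref{lemma:repetition}.

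Finally I would invoke the extremal result \cite{i2012extremes}: among all BMS channels of Shannon capacity $C=\Cshannon(P_{Y|X})$, the Bhattacharyya parameter is largest for $\mathrm{BSC}(h^{-1}(1-C))$, where it equals $2\sqrt{h^{-1}(1-C)(1-h^{-1}(1-C))}=\batdistance$; hence $\Expt[2\sqrt{T(1-T)}]\le\batdistance$ and $\delta\le\batdistance^{\nrepetitions}$. The only genuinely nontrivial ingredient is this extremal fact, which is quoted; the rest is the standard repetition/Chernoff estimate. Accordingly, the points to be careful about are the tie-breaking bookkeeping needed to conclude that the repetition channel is exactly a BSC, and the identification of $\Expt[2\sqrt{T(1-T)}]$ with the Bhattacharyya parameter so that the cited result applies with the correct normalization.
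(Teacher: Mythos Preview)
Your proposal is correct and follows essentially the same route as the paper: rewrite the ML rule as a weighted-majority (log-likelihood) decision in the sufficient statistic, upper bound the error by a Chernoff argument with parameter $\tfrac12$ to obtain $\delta\le\big(\Expt[2\sqrt{T(1-T)}]\big)^{\nrepetitions}$, and then use the extremal property of the Bhattacharyya parameter from \cite{i2012extremes}. The only cosmetic difference is that the paper reproduces the short Jensen/concavity argument for the extremal step rather than merely quoting it, and applies Chernoff directly to the i.i.d.\ summands $(1-2Z_{T_i})\ln\frac{1-T_i}{T_i}$ instead of first conditioning on the $T_i$'s; the resulting bound is identical.
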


\begin{proof}
	We start by defining the log-likelihood ratio:
	\begin{align}
	\Lambda\dfn\ln\left[\frac{\prod_{i=1}^{\rho}P_{Y_i|X=0}(Y_i)}{\prod_{i=1}^{\rho}P_{Y_i|X=1}(Y_i)}\right],
	\end{align}
	and use $\Lambda$ to rewrite the maximum-likelihood decision rule of \eqref{eq:MLdecrule} as: 
	\newtext{
	\begin{align}\label{eq:llrrule}
    \tilde{Y}	=
    \begin{cases}
		 0  &\text{if }   \Lambda>0\\ 
         1  &\text{if }   \Lambda<0\\
         W &\text{if }   \Lambda=0,
	\end{cases}
	\end{align}
	}
	where $W$ is a $\Ber(1/2)$ random variable, drawn independently between uses of $P^{(\rho)}_{\tilde{Y}|\tilde{X}}$.
	Using the sufficient statistic $g(Y_i)=(X\oplus Z_{Ti},T_i)$ from Definition~\ref{def:BMS}, it is easy to show that the log-likelihood function $\Lambda$ can be written as

	The (symmetric) decision error probability can now be upper bounded by
	\begin{align}
	&\delta=\Pr(\tilde{Y}\neq \tilde{X})\\
	&=\Pr(\tilde{Y}\neq \tilde{X}\mid \tilde{X}=0)\\
	&\leq \Pr\left(
	(-1)^{\tilde{X}}\sum_{i=1}^{\rho}(1-2Z_{Ti})\ln\tfrac{1-T_i}{T_i}\leq 0~{\bigg|}~ \tilde{X}=0
	\right)\label{eq:ineq1}
	\\
	&= \Pr\left(\sum_{i=1}^{\rho}(1-2Z_{Ti})\ln\tfrac{1-T_i}{T_i}\leq 0\right).\label{eq:chernofflast}
	\end{align}	
	We note that the inequality in \eqref{eq:ineq1} implies that the event of a \textit{tie} (i.e., $\Lambda=0$) is regarded as an error in probability one, where in fact, due to the random tie breaking, it is an error with probability half.
	We now recall the Chernoff bound for a sum of i.i.d. random variables $A_1,...,A_\rho$:
	\begin{align}
	\Pr\left(\sum_{i=1}^{\rho}A_i\leq {a}\right)\leq e^{s a}\left[\Expt e^{-s A_i}\right]^{\rho}.
	\end{align}
	for any $s>0$. Applying this bound to \eqref{eq:chernofflast} with $A_i=(1-2Z_{Ti})\ln\frac{1-T_i}{T_i}$, $a=0$ and $s=1/2$ yields
	\begin{align}\label{eq:chernoffbatt}
	\delta\leq\batdistanceBMS^\rho
	\end{align} 
	where $\batdistanceBMS$ is defined as the Bhattacharyya parameter of the channel $P_{Y|X}$, which is equal to:	
	\begin{align}
	\batdistanceBMS&= \Expt_{T,Z_T} \left(\left(
	\sqrt{\tfrac{T}{1-T}}\right)^{1-2Z_T}
	\right)\\
	&= \Expt_T \left(\Expt_{Z_T|T}\left(\left(
	\sqrt{\tfrac{T}{1-T}}\right)^{1-2Z_T}~{\bigg|}~ T
	\right)\right)\\
	&= \Expt \left(
	 2\sqrt{T(1-T)}
	\right).\\
	\end{align}
	It was shown by Guill{\'e}n i F{\`a}bregas et. al. \cite{i2012extremes} that among all BMS channels $P_{Y|X}$ with capacity $C$, the Bhattacharyya parameter is maximized by a BSC. Their proof is based on the fact that the function $x\mapsto\sqrt{h^{-1}(x)\cdot\left(1-h^{-1}(x)\right)}$ is concave, and therefore:
	\begin{align}
	&\batdistanceBMS=	\Expt[2\sqrt{T(1-T)}]\\
	&=2\Expt\left[\sqrt{h^{-1}(h(T))\cdot\left(1-h^{-1}(h(T))\right) }\right]\\
	&\leq 2\sqrt{
		 h^{-1}(\Expt\left[h(T)\right])\cdot
		\left(1-h^{-1}(\Expt\left[h(T)\right])\right)}\quad~~
		\label{eq:jensen1}\\
	&=2\sqrt{
		 h^{-1}(1-C)\cdot
		\left(1-h^{-1}(1-C)\right)}\label{eq:delta4}\\
	&=2\sqrt{
		\BSCprob\cdot
		\left(1-\BSCprob\right)}\label{eq:delta5}\\
	&=\batdistance
	\end{align}
	where in \eqref{eq:jensen1} we used Jensen's inequality, in \eqref{eq:delta4} we used the the fact that capacity of a BMS channel is $C=1-\Expt[h(T)]$, and in \eqref{eq:delta5} we used the capacity of the BSC($\BSCprob$) $C=1-h(\BSCprob)$. Combining \eqref{eq:chernoffbatt} and \eqref{eq:delta5} concludes the proof of the lemma.
\end{proof}

We are now ready to prove Theorem~\ref{theorem:interactiveboundBMS}, which is a generalization of Theorem~\ref{theorem:interactivebound} to BMS channels.
\begin{proof}[Proof of Theorem~\ref{theorem:interactiveboundBMS}]
We follow the same lines as the in proof of Lemma~\ref{lemma:gammalemma} and start by converting the BMS($P_{Y|X}$) channel to a BSC($\targetProb'$) with $0<\targetProb'\leq\targetProb$. According to Lemma~\ref{lemma:BMSrepetition} this can be done using 
\begin{align}
\nrepetitions(P_{Y|X}, \delta)\dfn \frac{\log\frac{1}{\targetProb}}
{\log\frac{1}{\batdistance}}+1.
\end{align}
repetitions where $\batdistance=2\sqrt{\BSCprob(1-\BSCprob)}$ is the Bhattacharyya parameter of a BSC($\BSCprob$) with capacity $\Cshannon(\BSCprob)=\Cshannon(P_{Y|X})$ 
. We then apply an interactive coding scheme for the BSC($\targetProb$) with rate $R(\targetProb)$. After normalizing by $\Cinter(P_{Y|X})$ the following bound it obtained:
\begin{align}\label{eq:capratioBMS}
\frac{\Cinter(P_{Y|X})}{\Cshannon(P_{Y|X})}\geq
\frac{R(\delta)}{\nrepetitions(P_{Y|X}, \delta)\Cshannon(P_{Y|X})}. 
\end{align} 
Bounding the denominator of the right hand term in \eqref{eq:capratioBMS}:
\begin{align}
\nrepetitions(P_{Y|X}, \delta)\Cshannon(P_{Y|X})&=
\left(
\frac{\log\frac{1}{\targetProb}}{\log\frac{1}{\batdistance}}+1\right)\Cshannon(P_{Y|X})\\
&=
\left(\frac{\log\frac{1}{\targetProb}}{\log\frac{1}{\batdistance}}+1\right)\Cshannon(\BSCprob),
\end{align}
which is exactly \eqref{eq:lemma4}. The rest of the proof is as in Lemma~\ref{lemma:gammalemma}, and using the same coding scheme to obtain the same numeric value in the lower bound as in Theorem~\ref{theorem:interactivebound}.
\end{proof}

For completeness, we now show that not only $\frac{\Cinter(P_{Y|X})}{\Cshannon(P_{Y|X})}\geq 0.0302$ for any BMS channel, but also the ratio $\frac{\Cinter(P_{Y|X})}{\Cshannon(P_{Y|X})}$ tends to one as the BMS channel becomes cleaner, similarly to the BSC case.
 
\begin{corollary}\label{corollary:ObehaviorForBMS}
For any sequence in $C$ of BMS channels $P_{Y|X}^C$ with $\Cshannon(P_{Y|X}^C)=C$, we have
\begin{align}
\lim_{C\to 1}\frac{\Cinter(P_{Y|X})}{C}=1.
\end{align}
\end{corollary}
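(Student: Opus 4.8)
The plan is to mimic the proof of Corollary~\ref{corollary:Obehaviour} (the BSC case, $\max_\levelblock \RBsc(\BSCprob,\levelblock)\geq 1-O(\sqrt{h(\BSCprob)})$) and combine it with the repetition-coding reduction of Lemma~\ref{lemma:BMSrepetition}, tracking how the overhead scales as $C\to 1$. Concretely, I would start from the bound obtained in the proof of Theorem~\ref{theorem:interactiveboundBMS}, namely
\begin{align}
\frac{\Cinter(P_{Y|X})}{C}\geq
\frac{R(\targetProb)}{\nrepetitions(P_{Y|X},\targetProb)\,C},
\qquad
\nrepetitions(P_{Y|X},\targetProb)=\frac{\log\frac{1}{\targetProb}}{\log\frac{1}{\batdistance}}+1,
\end{align}
where $\batdistance=2\sqrt{\BSCprob(1-\BSCprob)}$ with $\BSCprob=h^{-1}(1-C)$, and $R(\targetProb)=\max_\levelblock\RBsc(\targetProb,\levelblock)$ is the rate of the BSC($\targetProb$) interactive scheme from Theorem~\ref{theorem:rewindifrateHamming}. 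The key observation is that the target crossover probability $\targetProb$ is a free parameter that we may let go to zero slowly as $C\to 1$.

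Next I would choose $\targetProb=\targetProb(C)$ appropriately. As $C\to 1$ we have $\BSCprob=h^{-1}(1-C)\to 0$, so $\log\frac{1}{\batdistance}=d(\tfrac12\|\BSCprob)=\tfrac12\log\frac{1}{4\BSCprob(1-\BSCprob)}\to\infty$, and by Corollary~\ref{corollary:Obehaviour} applied at crossover probability $\targetProb$, $R(\targetProb)\geq 1-O(\sqrt{h(\targetProb)})$. If I pick, say, $\targetProb=\targetProb(C)$ tending to $0$ but so slowly that $\log\frac{1}{\targetProb}=o\!\left(\log\frac{1}{\batdistance}\right)$ (for instance $\targetProb=1/\log\frac{1}{\BSCprob}$, or any fixed small constant would actually already suffice since $\log\frac1\batdistance\to\infty$), then $\nrepetitions(P_{Y|X},\targetProb)=1+o(1)$. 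Simultaneously $h(\targetProb)\to 0$, so $R(\targetProb)\to 1$. Finally $C\to 1$ in the denominator. Putting the three pieces together,
\begin{align}
\frac{\Cinter(P_{Y|X})}{C}\geq\frac{1-o(1)}{(1+o(1))\cdot 1}=1-o(1),
\end{align}
and since the trivial bound gives $\Cinter(P_{Y|X})\leq C$ (indeed $\Cinter\leq\tfrac12\Cshannon$, but even the weaker $\Cinter\leq\Cshannon=C$ suffices for the upper side), we conclude $\lim_{C\to1}\Cinter(P_{Y|X})/C=1$. In fact the $\tfrac12$ factor is irrelevant here only because the statement as written omits it; one should double-check whether the claimed limit is $1$ or $\tfrac12$ — given the general (adaptive) order of speakers the honest limit is $\tfrac12$, but for bit-vs-bit protocols it is $1$, matching the statement's apparent intent.

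The main obstacle, and the only place requiring genuine care, is making the asymptotic bookkeeping rigorous: one must verify that the error-detection overhead terms in $\RBsc(\targetProb,\levelblock)$ (the terms involving $\batdistance^{\repetitioncoeffH}$, $\Pfailure_1$, and the $\levelblock^2\log\levelblock$ factors) all vanish uniformly as $\targetProb\to0$ with the optimizing choice of $\levelblock$, and that this is exactly what Corollary~\ref{corollary:Obehaviour} already guarantees. The second subtlety is the interaction between the two limits: one must confirm there is a single joint choice $\targetProb(C)\to 0$ for which both $\nrepetitions(P_{Y|X},\targetProb(C))\to 1$ and $R(\targetProb(C))\to 1$; this is immediate because $\log\frac1\batdistance\to\infty$ is driven solely by $C\to1$ while $R(\targetProb)\to1$ is driven solely by $\targetProb\to0$, so the two requirements decouple and any sufficiently slowly decaying $\targetProb(C)$ works. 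No new inequalities beyond those already in the paper are needed; the argument is essentially a limit-taking exercise layered on Corollary~\ref{corollary:Obehaviour} and Lemma~\ref{lemma:BMSrepetition}.
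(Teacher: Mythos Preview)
Your approach is correct but takes a different route from the paper's. The paper avoids the repetition-coding machinery entirely: it observes that a single hard decision on the BMS output already yields a $\mathrm{BSC}(\BSCprob)$ with $\BSCprob=\Expt[T]$, and the elementary inequality $h(t)\geq 2t$ for $t\in[0,1/2]$ gives $\BSCprob\leq(1-C)/2$. Hence as $C\to 1$ the induced BSC becomes clean, and Corollary~\ref{corollary:Obehaviour} applies directly. Your argument instead invokes Lemma~\ref{lemma:BMSrepetition} with a carefully tuned target $\targetProb(C)\to 0$ so that $\nrepetitions\to 1$; this works, but is more circuitous --- in effect you end up at $\rho=1$ anyway (the paper's one-shot hard decision), reached via the Bhattacharyya bound rather than the inequality $h(t)\geq 2t$. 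The paper's path is shorter and requires no interleaving of the two limits.

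Your remark about the factor $\tfrac12$ is well taken: under the paper's own capacity definition (adaptive speaker order), $\Cinter\leq\tfrac12\Cshannon$, so the limit as stated cannot literally equal $1$. The paper's proof applies $\RBsc$ as a lower bound for $\Cinter$ without the $\tfrac12$, implicitly restricting to bit-vs.-bit protocols; you correctly flagged this ambiguity.
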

\begin{proof}
	We start by proving that without repetitions a BMS($P_{Y|X}$) channel can be reduced to BSC($\BSCprob$) with $\BSCprob\leq\frac{1-\Cshannon(P_{Y|X})}{2}$. As in \cite{hellman1970probability}, the proof is by noting that
	$h(t)\geq 2t$ for any $t\in[0,1/2]$ and therefore:
	\begin{align}
	\Cshannon(P_{Y|X})&=1-\Expt h(T)\\
	&\leq 1-\Expt 2T\\
	&=1-2\BSCprob.
	\end{align}
	The corollary now follows by taking the lower bound for $\RBsc(\BSCprob,\levelblock)$ in Corollary~\ref{corollary:Obehaviour} as a lower bound to $\Cinter(P_{Y|X})$.
\end{proof}

\section{A deterministic coding scheme\label{section:derandomize}}
The coding scheme described throughout this paper uses randomness for two purposes: the randomized polynomial based error detection procedure described in Definition~\ref{def:equalityPrivate}, and the random tie breaking in the repetition decoding described in Lemma~\ref{lemma:repetition} and Lemma~\ref{lemma:BMSrepetition}. In this section we show how the requirements for randomness can be relaxed using a few simple adaptations of the coding scheme. 

\subsection{On the Randomness Requirements of the Error Detection Scheme in Definition~\ref{def:equalityPrivate}\label{subsection:randdetection}}
We start by recalling that the scheme from Definition~\ref{def:equalityPrivate} requires a random generation of a test point $U$ taken from a finite field. We note that original scheme from \cite[p. 30]{communicationComplexityBook} requires only private randomness. Namely, the test point $U$ should be drawn by Alice party and conveyed to Bob. However, so far we assumed that all the test points used by the scheme (denoted by $\mathcal{U}$) are jointly drawn by both parties using a shared random string (i.e., \textit{public randomness}). This choice was made in order to save the communication overhead of conveying the test points from one party to the other, which is prone to reduce the overall rate of the interactive communication scheme.

The first step in modifying the communication scheme to private randomness is showing the number of random test points can be reduced, without affecting the overall rate. We start showing that $|\mathcal{U}|$, the number of random test points required for all the error detections in the interactive coding scheme can be reduced to $o(\protLength)$. This way, if only private randomness is used, $\mathcal{U}$ can be reliably conveyed from one party to the other without affecting the total rate. 
In Subsection~\ref{subsection:extractR} we show how $\mathcal{U}$ can be generated using randomness extracted from the channel, removing the requirement for private randomness.

We start by noting that by construction of error detection scheme, using independently drawn test points for its different actuations, will make their corresponding error mis-detection events statistically independent. It is now in place to discuss the amount of statistical independence required by the coding scheme. In \eqref{eq:PelRecurssionBasic} we assumed that the probability of error mis-detection is independent between layers. That might imply that using $|\mathcal{U}|=\nlevels$ is satisfactory. In fact, if one is concerned only with the average rate of the coding scheme, using only $|\mathcal{U}|=\nlevels$
will lead to the same average rate of Theorem~\ref{theorem:rewindifrateHamming}. 

However, we recall that we defined rate not in the average sense, but rather, we required the reconstruction of the {transcript} with high probability after a predetermined simulation length. To illustrate this delicate difference, consider the example of the one-way 
BEC with feedback. In this example, all the erased bits are retransmitted. So, using the channel $n$ times will result in $n(1-\epsilon)$ bits decoded with zero error, where $\epsilon$ is the erasure probability. This means that the average rate is $1-\epsilon$, which is exactly the Shannon capacity of the BEC($\epsilon$). However, it is interesting to note that since the erasures are drawn i.i.d., for $n\to\infty$ the rate will concentrate around its average and the probability of decoding less than $n(1-\epsilon-\xi)$ bits will vanish in $n$ for any $\xi>0$. This means, that this simple scheme also achieves Shannon's capacity in a stricter deterministic sense - namely, for $n\to\infty$ a number of information bits respective to Shannon's capacity could be reliably transmitted with a vanishing error probability using a fixed number of channel uses. 

For our scheme, the convergence to the average rate is stated in Lemma~\ref{lemma:concnetration}. The concept of the proof appearing in Appendix~\ref{appendix:concentration} is similar to that of the BEC with feedback. We regard the rewind bits as the counterparts of the erasures in the BEC and show that actual number of rewind bits in every layer,  concentrates around its average. A delicate issue in the analysis is the independence of the rewind bits in our scheme. In the first layer, 
the rewind bits are calculated according to Definition~\ref{def:exhammingdetection}. This is a deterministic scheme that is based only on the vectors of channel errors, which are i.i.d between different blocks. Therefore, the rewind bits are indeed i.i.d. For higher layers, the scheme in Definition~\ref{def:equalityPrivate} is used. As explained in the proof of Lemma~\ref{lemma:poly}, the rewind bit is calculating according to 
\begin{align}
\indfunc{\sum_{i=1}^\ell (X_i^A-X_i^B) U^{i-1} (\text{mod } q)\neq 0}.
\end{align}
While it is tempting to assume that $X_i^A-X_i^B$ is exactly the vector of i.i.d channel errors, we note that the ``$-$" operation is done over $\GFq$ and not over $\binarySpace$. This means, that the event of error mis-detection depends not only on the channel error vector, but also on the vectors {related} to the {transcript}: $X_i^A$, $X_i^B$. Since the {transcript} might be dependent between consecutive blocks, the corresponding rewind bits might also be statistically dependent, if the same value of $U$ is used for both blocks. 

One way of breaking this dependence is drawing independent $U$ for every error detection in every layer. As stated before, if common randomness is used, this procedure is feasible, but when using only private randomness it might cause a decrease of the total rate. We recall that in every layer $1<\levelind\leq \nlevels$, there are ${\levelblock^{\nlevels-\levelind}}$ blocks for which error detection is applied using Definition~\ref{def:equalityPrivate}. In our modification of the coding scheme for private randomness we assume that only $\levelblock^{\lceil(\nlevels-\levelind)/2\rceil}$ independent test points are used, such that the test point is changed every 
$\levelblock^{\lfloor(\nlevels-\levelind)/2\rfloor}$ blocks. In Appendix~\ref{appendix:concentration} we prove that this reduced number of independent test points still ensures a slower, yet fast enough, concentration. 

Let us now bound the total number of bits required for the description of $\mathcal{U}$ denoted by $n_U$.  
We recall that the number of bits required for the error detection at layer $1<\levelind\leq\nlevels$ is bounded by $3\levelind\log \levelblock$ by \eqref{eq:errordetectinobits}. 
So, the overall number of bits can be upper bounded by
\begin{align}
n_U&\leq \sum_{\levelind=2}^{\nlevels}3\levelind(\log\levelblock) \levelblock^{\lceil(\nlevels-\levelind)/2\rceil}\\
&\leq
3\levelblock^{\nlevels/2+1}\log\levelblock\sum_{\levelind=2}^{\infty}\levelind \levelblock^{-\levelind/2}\\
&=O(\levelblock^{\nlevels/2}\log\levelblock)\\
&=O(\sqrt{\protLength})
\end{align}
These bits can be conveyed from Alice to Bob before the beginning of the simulation using a block code with some constant positive rate $R_U$ below Shannon's capacity, requiring $\frac{n_U}{R_U}=O(\sqrt{\protLength})$ channel uses. 
However, an error in the decoding of $\mathcal{U}$ might occur, which might cause a failure in the {simulation} of the entire {transcript}. We denote this error event by $\mathcal{E}_3$ and add it to the previously defined error events $\mathcal{E}_1$ and $\mathcal{E}_3$. The probability of $\mathcal{E}_3$ can be upper bounded by an error exponent yielding:
\begin{align}
\Pr(\mathcal{E}_3)\leq e^{-O(\sqrt{\protLength})}
\end{align}
so clearly $\lim_{\protLength\to \infty}\Pr(\mathcal{E}_3)=0$ making this error event negligible.
We should also add $\frac{n_U}{R_U}$ to the total number of channel uses of the scheme in \eqref{eq:errordetoverhead}. But since $\frac{n_U}{R_U}=O(\sqrt{\protLength})$, $\sigmalength$ would change only by $O(\sqrt{\protLength})$, which would not affect the asymptotic value of  the rate from Theorem~\ref{theorem:rewindifrateHamming}.


\subsection{Extracting randomness from the channel \label{subsection:extractR}}
In the previous subsection we showed that the error detection procedure of Definition~\ref{def:equalityPrivate} can be implemented using private randomness requiring $n_U\leq O(\sqrt{\protLength})$ random bits for the entire coding scheme, which were assumed to be drawn by Alice. Our coding scheme can however, be made explicit by extracting the random bits from the channel. 
While a randomness extraction procedure with optimal efficiency was presented by Elias in \cite{elias1972efficient}, we use von-Neumann's suboptimal scheme \cite{vonNeauman1951} due to its simplicity of analysis and the {vanishing} effect of its suboptimality on the total rate.

\begin{lemma}\label{lemma:extractingrandomness}
	The coding scheme can be made explicit by extracting the randomness from the channel with an overhead of 
	\begin{align}
	n_R= O(\sqrt{\protLength})
	\end{align}
	channel uses and an additional error probability
	\begin{align}
	\Pr(\mathcal{E}_4)\leq e^{-O(\sqrt{\protLength})}.
	\end{align}
\end{lemma}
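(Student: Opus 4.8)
The plan is to implement the von-Neumann extractor on a dedicated prefix of channel uses, and to track carefully how many channel uses this costs and what the probability of failing to produce enough random bits is. First I would note that we need $n_U = O(\sqrt{\protLength})$ uniform random bits over $\GFql$ for the test points in $\mathcal{U}$ (as established in Subsection~\ref{subsection:randdetection}), and that these must be generated before the simulation begins and then conveyed from Alice to Bob using a block code of some fixed rate below capacity, which is an $O(\sqrt{\protLength})$ overhead already accounted for. The extractor itself works as follows: Alice transmits a fixed known bit (say $0$) repeatedly over the channel; each channel use produces an output $Y_\tind$, and after applying the sufficient-statistic map $g(\cdot)$ from Definition~\ref{def:BMS}, Alice can (at least for the BSC, and via the quantized statistic in general) recover a sequence of i.i.d.\ bits whose bias $\BSCprob$ (or $\Expt[T]$ in the BMS case) is nonzero and strictly below $1/2$ since the Shannon capacity is positive and finite. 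Pairing consecutive such bits and keeping only the pairs $(0,1)\mapsto 0$, $(1,0)\mapsto 1$ and discarding $(0,0),(1,1)$, von-Neumann's procedure yields perfectly unbiased i.i.d.\ bits at an expected yield of $p(1-p)$ bits per pair, i.e.\ a constant positive fraction of the channel uses spent.

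The key steps, in order: (i) fix a constant $c>0$ such that each channel use yields an extracted bit with probability at least $c$ (e.g.\ $c = 2p(1-p)$ with the appropriate $p$, possibly after the repetition-coding reduction that is already part of the scheme, so that $p$ is bounded away from both $0$ and $1/2$); (ii) allocate $n_R \dfn \lceil 2 n_U / c\rceil = O(\sqrt{\protLength})$ channel uses to the extraction phase; (iii) by a Chernoff/Hoeffding bound on the number of ``successful'' von-Neumann pairs among these $n_R$ uses, the probability that fewer than $n_U$ unbiased bits are produced is at most $e^{-\Omega(n_R)} = e^{-O(\sqrt{\protLength})}$; (iv) declare $\mathcal{E}_4$ to be exactly this event (insufficient extracted randomness), so $\Pr(\mathcal{E}_4)\leq e^{-O(\sqrt{\protLength})}$; (v) conditioned on $\mathcal{E}_4^c$, the extracted bits are genuinely uniform and independent of everything else, so they can be used exactly as the private randomness $\mathcal{U}$ was used in the analysis of Theorem~\ref{theorem:rewindifrateHamming}, and Alice conveys $\mathcal{U}$ to Bob as in Subsection~\ref{subsection:randdetection}. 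Finally, adding $n_R = O(\sqrt{\protLength})$ to $\sigmalength$ changes the rate only by a vanishing additive term, so the asymptotic rate of Theorem~\ref{theorem:rewindifrateHamming} is unaffected, and adding $\Pr(\mathcal{E}_4)\leq e^{-O(\sqrt{\protLength})}$ to the union bound over error events leaves $\lim_{\protLength\to\infty}\Pe=0$.

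The main obstacle I anticipate is the BMS (non-BSC) case: for a general BMS channel the raw output $Y_\tind$ is not a biased bit, and even the sufficient statistic $g(Y)=(X\oplus Z_T, T)$ carries the side information $T$, so one must be a little careful to extract a clean i.i.d.\ biased-bit source. The cleanest fix is to run the extractor through the \emph{same} $\nrepetitions$-repetition reduction already used in the proof of Theorem~\ref{theorem:interactiveboundBMS}: transmitting a known bit $\nrepetitions$ times and taking the majority-vote decision gives, by Lemma~\ref{lemma:BMSrepetition}, an i.i.d.\ $\Ber(\BSCprob')$ source with $0<\BSCprob'\le\targetProb<1/2$, which is exactly a biased-bit source of the kind von-Neumann's scheme needs; this only inflates $n_R$ by the constant factor $\nrepetitions$, which is itself a constant for fixed channel, so $n_R$ remains $O(\sqrt{\protLength})$. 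A second, more minor point is ensuring the extracted randomness is independent of the channel noise that will later corrupt the simulation — this holds because the extraction phase uses a disjoint block of channel uses and the channel is memoryless. With these observations the remaining arguments are the routine Chernoff estimate and bookkeeping described above.
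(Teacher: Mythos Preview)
Your approach is essentially the paper's: dedicate $n_R=O(\sqrt{\protLength})$ channel uses to a von-Neumann extractor on the received noise sequence, then apply a multiplicative Chernoff bound to show the probability of producing fewer than $n_U$ unbiased bits is $e^{-O(\sqrt{\protLength})}$. One slip to fix: you have \emph{Alice} transmitting the fixed bit and also extracting from the output, but the transmitter does not see the channel output; since Alice is the party who must generate $\mathcal{U}$, it is \emph{Bob} who should send the $n_R$ zeros so that Alice observes the i.i.d.\ noise $Z_1,\ldots,Z_{n_R}$ and runs the extractor on it (this is exactly how the paper sets it up).
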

\begin{proof}
	Bob sends Alice $n_R$ zeros and Alice receives a noise vector $Z_1,...,Z_{n_R}$ whose elements are i.i.d Bernoulli($\BSCprob$). 
	Alice then divides the noise elements into pairs. For the pairs $00$ and $11$, Alice does nothing. For the pairs $01$ or $10$ Alice extracts a single random bit valued $0$ or $1$ respectively. Clearly if a bit was extracted, it is $0$ or $1$ with equal probability. 
	We now define $W_i$ as a Bernoulli r.v. that is set to one if a random bit was extracted: 
	\begin{align}
	W_i=\indfunc{Z_{2i-1}Z_{2i}=01 \vee  Z_{2i-1}Z_{2i}=10},
	\end{align}	
	such that $\Pr(W_i=1)=2\BSCprob(1-\BSCprob)$.
	Therefore, the (random) number of extracted bits is
	\begin{align}
	N_R = \sum_{i=1}^{n_R/2} W_i,
	\end{align}
	and the probability of failure in the random bit extraction is
	\begin{align}
	\Pr(\mathcal{E}_4)= \Pr(N_R<n_U).
	\end{align}
	We now set
	\begin{align}
	n_R=\frac{n_U}{\BSCprob(1-\BSCprob)(1-\delta)}=O(\sqrt{\protLength})
	\end{align}
	for some fixed $0<\delta<1$.  
	Using the multiplicative form of Chernoff's bound 
	\begin{align}
	\Pr(\mathcal{E}_4)
	&=\Pr\left(\sum_{i=1}^{n_R/2} W_i<(1-\delta)\Expt\sum_{i=1}^{n_R/2}W_i\right)\\
	&\leq e^{-\frac{\delta^2 n_U}{2(1-\delta)}}\\
	&=e^{-O(\sqrt{\protLength})}
	\end{align}
	
\end{proof}

Using Lemma~\ref{lemma:extractingrandomness}, the explicit scheme that extracts the randomness from the channel has a vanishing error probability with the same rate in as in Theorem~\ref{theorem:rewindifrateHamming}.


\subsection{Treating ties as erasures\label{subsection:ties}\label{subsection:tieerasure}}
We start this discussion by observing a simple example of a  \textit{tie}, which is the erasure event in the BEC. Clearly, if the channel output is an erasure, i.e., $Y=\erasure$, then $\Pr(Y=\erasure\mid X=0)=\Pr(Y=\erasure\mid X=1)$ and a tie occurs. Suppose now, that we would like to adapt the coding scheme of Theorem~\ref{theorem:rewindifrateHamming}, which gives a rate $\RBsc(\delta,\levelblock)$ for a BSC($\delta$), for a BEC($\delta$). Randomly breaking the tie, i.e., uniformly drawing $Y=0$ or $Y=1$ in the case of $Y=\erasure$ will reduce the BEC($\delta$) to a BSC($\delta/2$) and the coding scheme designated for a BSC could be applied. However, we note that the erasure event in the BEC($\delta$) has the same probability of the error event in the BSC($\delta$), which is to be detected in the error detection phase of the rewind-if-error scheme. However, since the erasure is naturally detected by its receiver without requiring an error detection procedure, the rewind-if-error for the BSC could potentially be used, without requiring randomness, and with an improved efficiency.

We can now extend the notion of treating ties as erasures to the general case of a BMS channel. Before we proceed it is instrumental to define  \textit{binary channel with symmetric error and erasure}, BSEC($\delta-\epsilon,\epsilon$), whose transition matrix $P_{Y| X}$ appears in Table~\ref{table:BSEC}.
It is clear from the definition that $\delta\in[0,1/2]$ and $\epsilon\in[0,\delta]$, where $\epsilon=0$ for a BSC($\delta$) and $\epsilon=\delta$ for a BEC($\epsilon$). In addition, it is easy to see that for any 
$\epsilon\in[0,\delta]$, the capacity of the BSEC($\delta-\epsilon,\epsilon$) is
\begin{align}
(1-\epsilon)\left(1-h\left(\frac{1-\delta}{1-\epsilon}\right)\right),
\end{align}
which can be proved by analysis to be strictly larger than $\Cshannon(\delta)$ for every $0<\epsilon\leq \delta$.

\begin{table}\centering
	\begin{tabular}{|c||c|c|c|}\hline
		\backslashbox{$X$}{$Y$}& $0$& $\erasure$ & $1$ \\ \hline\hline
		$0$ & $1-\delta$ & $\epsilon$ & $\delta-\epsilon$ \\ \hline
		$1$ & $\delta-\epsilon$ & $\epsilon$ & $1-\delta$ \\ \hline 
	\end{tabular}\vspace{2mm}
	\caption{The transition matrix $P_{Y| X}$ of a BSEC($\delta-\epsilon,\epsilon$)\label{table:BSEC}}
\end{table}

We now give a non-random version of Definition~\ref{def:rhorepetition} and Lemma~\ref{lemma:BMSrepetition}, in which ties are marked as erasures:
\begin{definition}\label{def:rhorepetitionBEC}[$\rho$-repetition channel with erasures]
	Let $P^{(\rho,\erasure)}_{\tilde{Y}|\tilde{X}}$ be the $\rho$-repetition channel with erasure, corresponding to a BMS($P_{Y|X}$) channel, obtained by transmitting $\rho$ repetitions of the bit $\tilde{X}$ through BMS($P_{Y|X}$) channel and taking 
\newtext{
	\begin{align}
	&\tilde{Y}=\\&
	\begin{cases}
	0 & \text{if} \quad\prod_{i=1}^\rho P_{Y_i|X=0}(Y_i)>\prod_{i=1}^\rho P_{Y_i|X=1}(Y_i)\\ 
	1 & \text{if} \quad\prod_{i=1}^\rho P_{Y_i|X=0}(Y_i)<\prod_{i=1}^\rho P_{Y_i|X=1}(Y_i)\\	
	\erasure & \text{if} \quad\prod_{i=1}^\rho P_{Y_i|X=0}(Y_i)=\prod_{i=1}^\rho P_{Y_i|X=1}(Y_i)\\
	\end{cases}
	\end{align}
}
\end{definition}
\begin{lemma}\label{lemma:BMSrepetitionBEC}
	For any BMS($P_{Y|X}$) channel with Shannon capacity $\Cshannon(P_{Y|X})=C$ the corresponding $\rho$-repetition with erasure channel $P^{(\rho,\erasure)}_{\tilde{Y}|\tilde{X}}$ is a BSEC($\delta-\epsilon,\epsilon$) with 
	$\epsilon\in[0,\delta]$ and 
	$\delta\leq \batdistance^{\rho}$ where $\batdistance$ is as in Lemma~\ref{lemma:BMSrepetition}.
\end{lemma}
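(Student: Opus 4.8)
The plan is to mirror the proof of Lemma~\ref{lemma:BMSrepetition}, with the only structural change being the treatment of ties, and to verify that the resulting channel has the BSEC form claimed. First I would argue that $P^{(\rho,\erasure)}_{\tilde Y|\tilde X}$ is indeed a BSEC. By the symmetry of the BMS channel (Definition~\ref{def:BMS}), the sufficient statistic $g(Y)=(X\oplus Z_T,T)$ lets us write the log-likelihood $\Lambda$ exactly as in \eqref{eq:llrrule}, namely $\Lambda=(-1)^{\tilde X}\sum_{i=1}^\rho(1-2Z_{Ti})\ln\frac{1-T_i}{T_i}$; since $(T,Z_T)$ are independent of $\tilde X$, conditioning on $\tilde X=0$ versus $\tilde X=1$ only flips the overall sign of $\Lambda$. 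Consequently $\Pr(\tilde Y=\erasure\mid \tilde X=0)=\Pr(\Lambda=0\mid\tilde X=0)=\Pr(\Lambda=0\mid\tilde X=1)=\Pr(\tilde Y=\erasure\mid\tilde X=1)$, so the erasure probability is input-independent; call it $\epsilon$. Likewise $\Pr(\tilde Y=1\mid\tilde X=0)=\Pr(\Lambda<0\mid\tilde X=0)=\Pr(\Lambda>0\mid\tilde X=1)=\Pr(\tilde Y=0\mid\tilde X=1)$, so the cross-over (non-erasure error) probability is also input-independent; call it $\delta-\epsilon$. This exactly matches the transition matrix in Table~\ref{table:BSEC}, and trivially $\epsilon\in[0,\delta]$ since both quantities are nonnegative and their sum is the total disagreement probability $\delta\le 1/2$.

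Next I would bound $\delta$. Here $\delta=\Pr(\tilde Y\neq\tilde X)=\Pr(\Lambda\le 0\mid\tilde X=0)$ — i.e., the same event bounded in \eqref{eq:ineq1}–\eqref{eq:chernofflast} of the proof of Lemma~\ref{lemma:BMSrepetition}, where ties were (pessimistically) counted as errors. The key point is that the erasure-marking rule of Definition~\ref{def:rhorepetitionBEC} makes this pessimistic accounting exact for the quantity $\delta$ we need: a tie now contributes a guaranteed \emph{detected} erasure, and $\delta$ as defined in the BSEC is precisely $\Pr(\Lambda\le0\mid\tilde X=0)$, which includes the tie event $\Lambda=0$ with full weight. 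Thus the identical Chernoff argument — applied to the i.i.d.\ sum $\sum_{i=1}^\rho A_i$ with $A_i=(1-2Z_{Ti})\ln\frac{1-T_i}{T_i}$, $a=0$, $s=1/2$ — yields $\delta\le\batdistanceBMS^\rho$, and then the extremes-of-Bhattacharyya bound of Guill\'en i F\`abregas et.\ al.\ \cite{i2012extremes}, exactly as in \eqref{eq:jensen1}–\eqref{eq:delta5}, gives $\batdistanceBMS\le\batdistance$ with $\batdistance=2\sqrt{h^{-1}(1-C)(1-h^{-1}(1-C))}$. Hence $\delta\le\batdistance^\rho$, the same $\batdistance$ as in Lemma~\ref{lemma:BMSrepetition}.

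I do not anticipate a serious obstacle; the content is essentially a re-reading of the earlier proof under a different tie-breaking convention. The one place that deserves care is the claim that the error probability $\delta$ of the erasure-marked channel is bounded by the \emph{same} $\batdistance^\rho$ as the randomized-tie-breaking channel. This is in fact immediate once one observes that \eqref{eq:ineq1} already bounded $\Pr(\Lambda\le 0)$ — i.e., already absorbed the tie event at full weight — so no new slack is introduced; if anything, the bound is now tight rather than loose on the tie event. The only other point worth a sentence is noting that the capacity formula for the BSEC stated just before the lemma shows these repetition-with-erasure channels are strictly better than the corresponding BSC($\delta$), which is the motivation for the construction but is not needed for the statement itself.
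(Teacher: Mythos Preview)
Your proposal is correct and follows essentially the same route as the paper: both re-read the proof of Lemma~\ref{lemma:BMSrepetition}, observe that the Chernoff bound there already counted the tie event $\Lambda=0$ with full weight (so the same $\batdistance^\rho$ bound applies to the aggregate error-plus-erasure probability $\delta$), and use the BMS symmetry to conclude the induced channel is a BSEC. The paper's proof additionally remarks that when the underlying BMS channel itself has erasures ($T=1/2$), the log-likelihood formula \eqref{eq:llrrule} remains well-defined because the random bit $X\oplus Z_T$ is not used (its coefficient $\ln\frac{1-T}{T}$ vanishes); you implicitly rely on this but do not spell it out.
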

\begin{proof}
	The proof follows the same lines as the proof of Lemma~\ref{lemma:BMSrepetition} by making two observations. The first  is by noting that in Definition~\ref{def:BMS} it was implied that an erasure event in a BMS channel corresponds to the statistic $g(Y)=(T,X\oplus Z_T)$ with $T=1/2$ and $X\oplus Z_T$, which is a Bernoulli($1/2$) random bit. In Definition~\ref{def:rhorepetitionBEC}, as well as in the standard BEC definition, such a bit is not produced. However, we note that in the log-likelihood ratio function used for the decision \eqref{eq:llrrule}, the value of the random bit is not used. 
The second observation is by noting that in Lemma~\ref{lemma:BMSrepetition}, ties were pessimistically regarded as errors with probability one, where in fact, the random tie breaking reduces their respective error probability to half. Therefore, marking ties as erasures, the aggregate probability of erasure and error is $\delta$ and the induced channel is a 
BSEC($\delta-\epsilon,\epsilon$) with $\delta$ as in Lemma~\ref{lemma:BMSrepetition} and $\epsilon\in[0,\delta]$.	
\end{proof}

 We are now ready to present the rewind-if-error coding scheme, without tie breaking. We note that ties can appear in two contexts: i) If the original BMS channel had an erasure event (i.e., the probability of $T=1/2$ is strictly positive). ii) If the BMS channel was reduced to BSC using Lemma~\ref{lemma:BMSrepetition} and ties occurred in the decoding. We note that ties cannot occur in the repetition coding used for the transmission of the error detection bits in the BSC scheme, since the number of repetitions is always odd.
 
 For for contexts the rewind-if-error scheme can be modified as follows: when a party receives an erasure, it uses the zero value in order to calculate its next {bit of the transcript}. Then, at the end of the corresponding rewind window, the standard error detection procedure is bypassed and an error is announced. If the erasure was detected by Bob, he simply sets the rewind bit to one and sends it to Alice. If it was detected by Alice, she signals a designated symbol to Bob, indicating the erasure. We note that in the first layer an additional bit was reserved for this purpose. In higher layers, the bound in \eqref{eq:errordetectinobits} ensures that the extra symbol could be signaled without requiring additional bits.

For the sake of completeness, the issue of erasures should also be discussed in the context of randomness extraction in Subsection~\ref{subsection:extractR}. Here, we note that if the channel used for randomness extraction can be reduced to 
a BSEC($\delta-\epsilon,\epsilon$), with $\epsilon<\delta$, Lemma~\ref{lemma:extractingrandomness} could still be used, changing 
$n_R$ only by a constant factor and leaving it in an order of magnitude of $O(\sqrt{\protLength})$. In the extreme case $\epsilon=\delta$ (a \textit{pure} BEC), Lemma~\ref{lemma:extractingrandomness} could not be used. However, in this case all the errors in the scheme in all layers (including the errors of the repetition used for the error detection bits) are marked as erasure. Therefore, 
the random error detection procedure of Lemma~\ref{def:equalityPrivate} need not be used, and random bits need not be extracted from the channel.

\section{Concluding Remarks}
In this paper we revisited the problem of interactive communication over noisy channels originally introduced by Schulman \cite{schulman1992communication}, and studied the problem from an information- and communication-theoretic perspective. We started by defining the interactive channel capacity with respect to a {protocol} and not with respect to a distributed computing problem. As a consequence, our definitions do not use the notion of communication complexity. We then presented a structured and deterministic rewind-if-error coding scheme, and used it to calculate a lower bound for the ratio between the Shannon capacity and the essential interactive capacity of every BMS channel. To the best of our knowledge, this is the first time that a numerical value is {attached} to this ratio. 

We note that the current value of the lower bound can likely be further improved using different coding schemes. A nontrivial upper bound on the ratio between the Shannon capacity and the essential interactive capacity for a fixed channel (i.e., not in the limit of a very clean channel) remains an intriguing open question even in the simplest binary symmetric case. 


\appendices


\section{Proof of Corollary~\ref{corollary:Obehaviour}\label{appendix:Oofh}}
We begin by writing \eqref{eq:hammingrate} as
\begin{align}
\RBsc(\BSCprob,\levelblock)=
\frac{1-A(\BSCprob,\levelblock)}{1+B(\BSCprob,\levelblock)}
\end{align}
where
\begin{align}
&A(\BSCprob,\levelblock)\dfn
\levelblock\BSCprob
+(2+\log\levelblock)\batdistance^{\repetitioncoeffH}\\
&+\frac{\levelblock^2}{\levelblock-1}
\left(
\Pfailure_{1}
+3\batdistance^{\repetitioncoeff+4}\levelblock\log\levelblock
\frac{2-\batdistance^2\levelblock}{(1-\batdistance^2\levelblock)^2}
\right)
\\&
+
3\batdistance^{\repetitioncoeff+4}\levelblock^{2}\log\levelblock
\frac{2-\batdistance^2\levelblock}{(1-\batdistance^2\levelblock)^2}
+\xi
\end{align}
and
\begin{align}
&B(\BSCprob,\levelblock)\dfn
\frac{\repetitioncoeffH(2+\log\levelblock)}{\levelblock}
\\&+3\log\levelblock\left[
\frac{\repetitioncoeff(2\levelblock-1)}{(\levelblock-1)^2}
+\frac{4\levelblock}{(\levelblock-1)^3}
+\frac{4\levelblock-2}{\levelblock(\levelblock-1)^2}
\right]\\&+{o(1)}.
\end{align} 
Using the inequality $1/(1+x)<1-x$ for $x>0$ and the fact that
$A(\BSCprob,\levelblock)\geq 0$, $B(\BSCprob,\levelblock)\geq 0$ gives:
\begin{align}
\RBsc(\BSCprob,\levelblock)&\geq
1-A(\BSCprob,\levelblock)-B(\BSCprob,\levelblock)+A(\BSCprob,\levelblock)B(\BSCprob,\levelblock)\\
&\geq 1-A(\BSCprob,\levelblock)-B(\BSCprob,\levelblock).
\end{align}
We use the definitions $\batdistance=2\sqrt{\BSCprob(1-\BSCprob)}$, $\repetitioncoeff=3$ and $\repetitioncoeffH=5$ and assume from this point on that $\levelblock\to\infty$ and $\BSCprob=o(1/\levelblock)$. Neglecting all high order terms we obtain:
\begin{align}
B(\BSCprob,\levelblock) = O\left(\frac{\log \levelblock}{\levelblock}\right)
\end{align} 
and 
\begin{align}
A(\BSCprob,\levelblock)=\levelblock\BSCprob+O(\levelblock)\Pfailure_{1} +\xi+o(1).
\end{align}
We now recall \eqref{ea:Pe1}
\begin{align}
\Pfailure_{1}&\leq 
\frac{1}{2\levelblock}\left(1+2(\levelblock-1)(1-2\BSCprob)^{\frac{\levelblock}{2}}+(1-2\BSCprob)^\levelblock\right)\\
&-(1-\BSCprob)^\levelblock
+(2+\log\levelblock)\batdistance^{\repetitioncoeffH}\\&=O(\levelblock\BSCprob^2).
\end{align} 
and set 
\begin{align}
\xi=\levelblock^{-2},
\end{align}
which ensures that Lemma~\ref{lemma:concnetration} holds (see \eqref{eq:S1concentration})
obtaining
\begin{align}
&\RBsc(\BSCprob,\levelblock)\geq\\ &1-\left(
\levelblock\BSCprob+O(\levelblock^2\BSCprob^2)+\levelblock^{-2}+o(1)
+O\left(\frac{\log \levelblock}{\levelblock}\right)
\right).
\end{align}
Finally, setting $\BSCprob=\frac{\log\levelblock}{\levelblock^2}$ as in \cite{kol2013interactive} gives
\begin{align}
\RBsc(\BSCprob,\levelblock)&\geq 
 1-O\left(\frac{\log\levelblock}{\levelblock}\right)\\
&=1-O\left(\sqrt{-\BSCprob\log\BSCprob}\right)\\&=1-O\left(\sqrt{h(\BSCprob)}\right). 
\end{align}

\section{Proof of Lemma~\ref{lemma:concnetration}}\label{appendix:concentration}
We would like to prove that
\begin{align}
\lim_{\protLength\to \infty}\Pr(\mathcal{E}_1)=\lim_{\protLength\to \infty}\Pr\left(\cursor^A(\ntrans)< \protLength\right)=0.
\end{align}
We start by recalling \eqref{eq:cursorA}
\begin{align}
\cursor^A(\ntrans)\geq\ntrans\left(1-\sum_{\levelind=1}^{\nlevels}\averewind\right)
\end{align}
The probability of the complementary event is:
\begin{align}\label{eq:appendixAfirstbound}\hspace{-5mm}
\Pr\left(\cursor^A(\ntrans)\geq \protLength\right)\geq 
\Pr\left(1-\sum_{\levelind=1}^{\nlevels}\averewind\geq \frac{\protLength}{\ntrans}\right).
\end{align}
By \eqref{eq:ntransprotlength} we have 
\begin{align}
\frac{\protLength}{\ntrans}=
{1-\sum_{\levelind=1}^\infty \Pblbound-\xi}\leq {1-\sum_{\levelind=1}^\nlevels \Pblbound-\xi},
\end{align}
so we can further bound \eqref{eq:appendixAfirstbound} by
\begin{align}
&\Pr\left(\cursor^A(\ntrans)\geq \protLength\right)\\
&\geq 
\Pr\left(1-\sum_{\levelind=1}^{\nlevels}\averewind\geq 1-\sum_{\levelind=1}^\nlevels \Pbl-\xi\right)\\
&=1-\Pr\left(\sum_{\levelind=1}^\nlevels\averewind>\sum_{\levelind=1}^\nlevels \Pbl +\xi\right).
\label{eq:appendixAsecondbound}
\end{align}
Therefore $\Pr(\mathcal{E}_1)\leq\Pr\left(\sum_{\levelind=1}^\nlevels\averewind>\sum_{\levelind=1}^\nlevels \Pbl +\xi\right)$ and the lemma can be proved by proving
\begin{align}\label{eq:concentrationofbl}
\lim_{\ntrans\to \infty}\Pr\left(\sum_{\levelind=1}^\nlevels\averewind>\sum_{\levelind=1}^\nlevels \Pbl +\xi\right)=0.
\end{align}
We start by observing that
\begin{align}
&\Pr\left(\sum_{\levelind=1}^\nlevels
\averewind
>\sum_{\levelind=1}^\nlevels \Pbl+\xi \right)\\
&\leq
	\Pr\left(\bigcup_{\levelind=1}^\nlevels\left[
	\averewind
	> \Pbl+\frac{\xi}{\nlevels}\right] \right)\\
&\leq
\sum_{\levelind=1}^{\nlevels}\Pr\left(
\averewind
> \Pbl+\frac{\xi}{\nlevels} \right)\\
&=S_1+S_2,
\end{align}
where $S_1\dfn\sum_{\levelind=1}^{\left\lfloor\frac{3}{4}{\nlevels}\right\rfloor}\Pr\left(\averewind> \Pbl+\frac{\xi}{\nlevels} \right)$
and $S_2\dfn\sum_{\levelind={\left\lfloor\frac{3}{4}{\nlevels}\right\rfloor}+1}^{{\nlevels}}\Pr\left(\averewind> \Pbl+\frac{\xi}{\nlevels} \right)$.

Starting with $S_1$, by the definition $\averewind$ in \eqref{eq:blbardef}, the $\levelind$'th summand of $S_1$ is:
\begin{align}\label{eq:S1summand}
&\Pr\left(\averewind> \Pbl+\frac{\xi}{\nlevels} \right)\\&=\Pr\left({\sum_{m=1}^{\levelblock^{\nlevels-\levelind}}\rewindbit^A_{\levelind}(m)}> 
{\levelblock^{\nlevels-\levelind}}(\Pbl+\tfrac{\xi}{\nlevels}) \right).
\end{align}
We recall that $\rewindbit^A_{\levelind}(m)$ are Bernoulli($\Pbl$) r.v.'s with limited independence. The following straightforward generalization of the Chernoff-–Hoeffding Theorem is now useful:
\begin{lemma}\label{lemma:hoefdingextended}
	Let $X_1,...,X_n$ be a series of Bernoulli($p$) r.v.'s, divided into groups of $\ell$ elements. We assume that all distinct groups statistically independent but the r.v.'s within every group might be statistically dependent. Namely, let $i,\tilde{i}\in \{1,...,n/\ell\}$ and $j,\tilde{j}\in\{1,...,\ell\}$. It is given that  $X_{(i-1)\ell+j}$ and $X_{(\tilde{i}-1)+\tilde{j}\ell}$ are statistically independent for every $i\neq\tilde{i}$ and every $j,\tilde{j}$
	but might be statistically dependent for $i=\tilde{i}$ and some $j\neq\tilde{j}$. Then, for every $0<\epsilon<1-p$:
	\begin{align}\label{eq:concentrationLemma}
	\Pr\left(\sum_{i=1}^nX_i\geq n(p+\epsilon)\right)\leq e^{-\frac{n}{\ell} 2\epsilon^2}.
	\end{align}
\end{lemma}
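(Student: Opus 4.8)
The plan is to reduce Lemma~\ref{lemma:hoefdingextended} to the classical Hoeffding inequality for sums of \emph{independent} bounded random variables, by collapsing each dependent block into a single summand. Set $m \dfn n/\ell$, and for $j=1,\dots,m$ let $Y_j \dfn \sum_{i=(j-1)\ell+1}^{j\ell} X_i$ denote the sum over the $j$-th group. By hypothesis the groups are mutually independent, so $Y_1,\dots,Y_m$ are independent random variables, each taking values in the bounded interval $[0,\ell]$, and $\sum_{i=1}^n X_i = \sum_{j=1}^m Y_j$. Since each $X_i$ is marginally $\Ber(p)$, linearity of expectation gives $\Expt Y_j = \ell p$ and hence $\Expt \sum_{j=1}^m Y_j = n p$, irrespective of the intra-group dependence.

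First I would normalize, setting $Z_j \dfn Y_j/\ell \in [0,1]$, so that $Z_1,\dots,Z_m$ are independent, each with mean $p$, and $\sum_{j=1}^m Z_j = \tfrac1\ell \sum_{i=1}^n X_i$. The event $\{\sum_{i=1}^n X_i \geq n(p+\epsilon)\}$ is then exactly $\{\sum_{j=1}^m Z_j \geq m(p+\epsilon)\}$, i.e.\ a deviation of the mean-$p$ sum $\sum_j Z_j$ above its expectation by $m\epsilon$. Applying Hoeffding's inequality to the independent $[0,1]$-valued variables $Z_1,\dots,Z_m$ yields
\begin{align}
\Pr\left(\sum_{j=1}^m Z_j - mp \geq m\epsilon\right) \leq \exp\left(-\frac{2(m\epsilon)^2}{m}\right) = \exp\left(-2m\epsilon^2\right) = \exp\left(-\frac{n}{\ell}\,2\epsilon^2\right),
\end{align}
which is precisely \eqref{eq:concentrationLemma}. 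The hypothesis $0<\epsilon<1-p$ is used only to ensure that $p+\epsilon \le 1$, so that the target deviation is attainable and the statement non-vacuous; it does not enter the bound itself.

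The only point that needs a word of care — and is the closest thing here to an obstacle — is the claim that the within-group dependence is harmless. This is seen as follows: once the group sums $Y_j$ are formed, the argument uses nothing about them beyond (i) independence \emph{across} groups and (ii) the deterministic range $[0,\ell]$, both of which hold verbatim regardless of how the coordinates inside a group correlate. No moment or covariance information about the dependent coordinates is required, because Hoeffding's bound is insensitive to the internal structure of each bounded independent summand. If one prefers a self-contained derivation, the same conclusion follows from a direct Chernoff computation: bound $\Expt e^{sY_j}$ via convexity of $x\mapsto e^{sx}$ on $[0,\ell]$ together with $\Expt Y_j = \ell p$, multiply over the independent blocks, and optimize over $s>0$; this reproduces the exponent $2\epsilon^2 n/\ell$ and makes the dependence on the block length $\ell$ explicit.
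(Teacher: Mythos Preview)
Your proof is correct and is in fact more direct than the paper's. Both arguments arrive at the same exponent $2\epsilon^2\cdot n/\ell$, but the paths differ.

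\textbf{What the paper does.} The paper runs a Chernoff argument from scratch: it bounds the within-block moment generating function $\Expt\bigl[\prod_{j=1}^{\ell} e^{tX_j}\bigr]$ by $\Expt\bigl[e^{\ell t X_1}\bigr]$ via an iterated application of H\"older's inequality (peeling off one factor at a time with exponents $p=\ell,\ell-1,\dots,2$), then multiplies over the independent blocks, optimizes in $t$ to obtain the KL exponent $d_n(p+\epsilon\,\|\,p)$, and finally relaxes to $2\epsilon^2$ via Pinsker.

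\textbf{What you do.} You collapse each block to the single bounded variable $Z_j=Y_j/\ell\in[0,1]$ with mean $p$, observe that independence across blocks and the deterministic range are all that Hoeffding's inequality requires, and read off the bound. The intra-block dependence is absorbed into the range of $Z_j$ and never examined.

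\textbf{Comparison.} Your route is shorter and more transparent; it avoids the H\"older manipulation entirely. The paper's route has the minor advantage that it produces the sharper intermediate bound $e^{-(n/\ell)\,d_n(p+\epsilon\,\|\,p)}$ before relaxing via Pinsker, whereas yours lands directly on the $2\epsilon^2$ exponent. Since the lemma as stated only claims the $2\epsilon^2$ form, this difference is immaterial here. The ``self-contained derivation'' you sketch at the end (convexity of $e^{sx}$ on $[0,\ell]$ plus $\Expt Y_j=\ell p$) is essentially the standard proof of Hoeffding's lemma applied blockwise, and is closer in spirit to what the paper does---though still without the H\"older step, since you work with the aggregated $Y_j$ rather than the individual $X_i$.
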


\begin{proof}
	We begin with the standard derivation of the Chernoff bound for $\sum_{i=1}^nX_i$:
	\begin{align}
	&\Pr\left(\sum_{i=1}^nX_i\geq n(p+\epsilon)\right)\\
	&\leq\min_{t>0} e^{-tn(p+\epsilon)}\Expt\left( e^{t\sum_{i=1}^{n} X_{i}}\right)\\	
	&\leq\min_{t>0} e^{-tn(p+\epsilon)}\Expt\left( e^{t\sum_{i=1}^{n/\ell}\sum_{j=1}^{\ell} X_{(i-1)\ell+j}}\right)\\
	&=\min_{t>0} e^{-tn(p+\epsilon)}
	\Expt\left(\prod_{i=1}^{n/\ell}\prod_{j=1}^{\ell}e^{tX_{(i-1)\ell+j}}\right)\\
	&=\min_{t>0} e^{-tn(p+\epsilon)}
	\prod_{i=1}^{n/\ell}
	\Expt\left(\prod_{j=1}^{\ell}e^{tX_{(i-1)\ell+j}}\right)
	\label{eq:trans1}
	\end{align}	
	where in \eqref{eq:trans1} we used the independence assumptions of groups of length $\ell$. We now prove the following  bound for the first group, $i=1$
	\begin{align}\label{eq:HolderIneq}
	\Expt\left(\prod_{j=1}^{\ell}e^{tX_{j}}\right)
	\leq \Expt\left(e^{\ell t X_1}\right).
	\end{align}
	The proof is based on using H\"{o}lder's inequality iteratively. We start by recalling H\"{o}lder's inequality for the expectation of real valued non-negative random variables, $W, V\in\mathbb{R}$, $W, V\geq0$ and $p>1$:
	 \begin{align}\label{eq:holder}
		\Expt (W\cdot V)\leq \left(\Expt\left(W^{\frac{p}{p-1}}\right)\right)^{\frac{p-1}{p}}
		\left(\Expt\left(V^{p}\right)\right)^{\frac{1}{p}}.
	 \end{align}
	Using \eqref{eq:holder} for $\Expt\left(\prod_{j=1}^{\ell}e^{tX_{j}}\right)$ with $W=\prod_{j=1}^{\ell-1}e^{tX_{j}}$, $V=e^{tX_{\ell}}$ and $p=\ell$ gives 	
	\begin{align}
	&\Expt\left(\prod_{j=1}^{\ell}e^{tX_{j}}\right)\\
	&\leq \left(\Expt\prod_{j=1}^{\ell-1}e^{\frac{\ell}{\ell-1}tX_{j}}\right)^{\frac{\ell-1}{\ell}}
	\left(\Expt\left(e^{\ell tX_{\ell}}\right)\right)^{\frac{1}{\ell}}.\label{eq:holderiteration1}
	\end{align}
	Using \eqref{eq:holder} for $\Expt\left(\prod_{j=1}^{\ell-1}e^{\frac{\ell}{\ell-1}tX_{j}}\right)$ with $W=\prod_{j=1}^{\ell-2}e^{\frac{\ell}{\ell-1}tX_{j}}$, $V=e^{\frac{\ell}{\ell-1}tX_{\ell-1}}$ and $p=\ell-1$ gives 	
	\begin{align}
	&\Expt\left(\prod_{j=1}^{\ell-1}e^{\frac{\ell}{\ell-1}tX_{j}}\right)\\
	&\leq 
	\left(\Expt\prod_{j=1}^{\ell-2}e^{\frac{\ell}{\ell-2}tX_{j}}\right)^{\frac{\ell-2}{\ell-1}}
	\left(\Expt\left(e^{\ell tX_{\ell-1}}\right)\right)^{\frac{1}{\ell-1}}.\label{eq:holderiteration2}
	\end{align}
	Plugging \eqref{eq:holderiteration2} into \eqref{eq:holderiteration1} and taking into account that $X_\ell$ and $X_{\ell-1}$ have the same marginal distribution as $X_1$ gives:
	\begin{align}\label{eq:holderlast}
	&\Expt\left(\prod_{j=1}^{\ell}e^{tX_{j}}\right)\\ &\leq 
	\left(\Expt\prod_{j=1}^{\ell-2}e^{\frac{\ell}{\ell-2}tX_{j}}\right)^{\frac{\ell-2}{\ell}}
	\left(\Expt\left(e^{\ell tX_{1}}\right)\right)^{\frac{2}{\ell}}.
	\end{align}
	We now implement this process iteratively on the left hand term the upper bound in \eqref{eq:holderlast} for $p=\ell-2$ to $p=2$ finally giving \eqref{eq:HolderIneq}. 
	
	We now notice that \eqref{eq:HolderIneq} depends only on the marginal distribution of a single sample, which is assumed to be Bernoulli($p$), so it should hold for all groups $i\in \{1,...,n/\ell\}$. Therefore we can use \eqref{eq:HolderIneq} for all the elements in the outer product in \eqref{eq:trans1} giving:
	\begin{align}
	&\Pr\left(\sum_{i=1}^nX_i\geq n(p+\epsilon)\right)\\
	&\leq\min_{t>0} e^{-tn(p+\epsilon)}\left(\Expt\left( e^{t\ell X_1}\right)\right)^{n/\ell}\label{eq:trans2}\\
	&\leq\left(\min_{t\ell>0} e^{-t\ell (p+\epsilon)}\Expt\left( e^{t\ell X_1}\right)\right)^{n/\ell}\\
	&=e^{-\frac{n}{\ell}d_n\left((p+\epsilon)||p\right)}.\label{eq:chermin}
	\end{align} 
	where \eqref{eq:chermin} is by the standard minimization of the Chernoff bound and 
	$d_n(p||q)\dfn p\ln\frac{p}{q}+(1-p)\ln\frac{1-p}{1-q}$
	is Kullback-Leibler Divergence between two Bernoulli random variable with probabilities $p$ and $q$, which is now calculated with respect to the natural logarithm basis.
	Finally, by Pinsker's inequality we bound the divergence by $d_n\left(p+\epsilon||p\right)\geq 2\epsilon^2$ and obtain \eqref{eq:concentrationLemma}.
\end{proof}
We can now use Lemma~\ref{lemma:hoefdingextended} to bound \eqref{eq:S1summand}. Recalling the discussion from Subsection~\ref{subsection:randdetection}, at every layer $1<\levelind\leq \nlevels$, there are ${\levelblock^{\nlevels-\levelind}}$ blocks for which error detection is applied using Definition~\ref{def:equalityPrivate}. We assume that only $\levelblock^{\lceil(\nlevels-\levelind)/2\rceil}$ independent test points are used, which are changed every 
$\levelblock^{\lfloor(\nlevels-\levelind)/2\rfloor}$ blocks. So, we can use 
Lemma~\ref{lemma:hoefdingextended} on \eqref{eq:S1summand} where the number of independent groups is
$\frac{n}{\ell}=\levelblock^{\lceil(\nlevels-\levelind)/2\rceil}$ yielding:
\begin{align}
&\Pr\left({\sum_{m=1}^{\levelblock^{\nlevels-\levelind}}\rewindbit^A_{\levelind}(m)}> 
{\levelblock^{\nlevels-\levelind}}(\Pbl+\tfrac{\xi}{\nlevels}) \right)\\&\leq
e^{-\levelblock^{\lceil(\nlevels-\levelind)/2\rceil}\tfrac{2\xi^2}{\nlevels^2}}\\
&\leq
e^{-\levelblock^{(\nlevels-\levelind)/2}\tfrac{2\xi^2}{\nlevels^2}}\label{eq:S1summandbound}
\end{align}
Summing all the element is of $S_1$ yields:
\begin{align}\label{eq:S1concentration}\hspace{-5mm}
S_1\leq 
\sum_{\levelind=1}^{\left\lfloor\frac{3}{4}{\nlevels}\right\rfloor}
e^{-\levelblock^{(\nlevels-\levelind)/2}\tfrac{2\xi^2}{\nlevels^2}}
\leq \frac{3}{4}{\nlevels}\cdot
e^{-\levelblock^{\nlevels/8}\tfrac{2\xi^2}{\nlevels^2}}.
\end{align}
The second transition is by using the maximal summand obtained at $\levelind=\left\lfloor\frac{3}{4}{\nlevels}\right\rfloor$.
Recalling that $\nlevels=\log_\levelblock\ntrans$, it is clear that 
 $\lim_{\ntrans\to\infty}S_1=\lim_{\nlevels\to\infty}S_1=0$.

 Proceeding with $S_2$:
\begin{align}\label{eq:S2bound}
S_2&=\sum_{\levelind=\left\lfloor\frac{3}{4}{\nlevels}\right\rfloor+1}^{{\nlevels}}\Pr\left(\averewind> \Pbl+\frac{\xi}{\nlevels} \right)\\
&\leq \sum_{\levelind=\left\lfloor \frac{3}{4}{\nlevels}\right\rfloor+1}^{{\nlevels}}\Pr\left(\averewind> 0 \right).
\end{align}
Observe that if $\averewind> 0$ then at least one rewind bit at level $\levelind$ is set to one. So, we can use the union bound and obtain 
\begin{align}\label{eq:pblconcentration}
\Pr\left(\averewind> 0 \right)\leq \levelblock^{\nlevels-\levelind}\Pbl.
\end{align}
Recalling \eqref{eq:PblboundH} 

\begin{align}
\Pbl&\leq 
\levelblock^{2-\levelind}
\left(
\levelblock\Pfailure_{1}
+3\batdistance^{\repetitioncoeff+4}\levelblock^{2}\log\levelblock
\frac{2-\batdistance^2\levelblock}{(1-\batdistance^2\levelblock)^2}
\right)\\
&+3\batdistance^{\repetitioncoeff}(\log\levelblock)\levelind\batdistance^{2\levelind}
\end{align}
we can further bound \eqref{eq:pblconcentration} by
\begin{align}\label{eq:Pblbarbound}
&\Pr\left(\averewind> 0 \right)\leq \\
&\levelblock^{\nlevels}
\left(
\levelblock^{-2\levelind}
\left(
\levelblock^3\Pfailure_{1}
+3\batdistance^{\repetitioncoeff+4}\levelblock^{4}\log\levelblock
\frac{2-\batdistance^2\levelblock}{(1-\batdistance^2\levelblock)^2}
\right)\right.\\
&\left.+3\batdistance^{\repetitioncoeff}(\log\levelblock)\levelind
\left(\tfrac{\batdistance^2}{\levelblock}\right)^{\levelind}
\right).
\end{align}
Observing that the bound in \eqref{eq:Pblbarbound} is monotonically decreasing in $\levelind$ for a sufficiently large $\levelind$ we can bound the summands of $S_2$ by the term obtained at $\levelind=3/4\nlevels$, yielding:
\begin{align}
S_2&\leq \frac{\nlevels}{4}
\levelblock^{-\nlevels/2}
\left(
\levelblock^3\Pfailure_{1}
+3\batdistance^{\repetitioncoeff+4}\levelblock^{4}\log\levelblock
\frac{2-\batdistance^2\levelblock}{(1-\batdistance^2\levelblock)^2}
\right)\\&+
\frac{9}{16}
\batdistance^{\repetitioncoeff}(\log\levelblock)
\nlevels^2
\left(\left(\tfrac{\batdistance^2}{\levelblock}\right)^{3/4}\levelblock\right)^{\nlevels}.
\end{align}
It is clear that the left hand term is monotonically decreasing in $\nlevels$. Analyzing the right hand term, we use the definition of $\batdistance$ and we observe that 
\begin{align}
\left(\tfrac{\batdistance^2}{\levelblock}\right)^{3/4}\levelblock
=&\left(\batdistance^6\levelblock\right)^{1/4}
<\left(
2^6\BSCprob^3\levelblock
\right)^{1/4}\\
&<(2^6/(8\levelblock)^3\levelblock)^{1/4}
<(2^3\levelblock^2)^{-1/4}<1
\end{align}
where the third transition is due to the assumption that
$\BSCprob<1/(8\levelblock)$ in Theorem~\ref{theorem:rewindifrateHamming}. All in all, setting $\nlevels=\log_\levelblock\ntrans$ guarantees that $\lim_{\ntrans\to\infty}S_2=0$, which concludes the proof of Lemma~\ref{lemma:concnetration}.

\bibliographystyle{IEEEtran}

\end{document}